\documentclass[11pt]{article}
\usepackage[margin=1in]{geometry}
\usepackage[most]{tcolorbox}
\usepackage{graphicx}
\usepackage{float}
\usepackage{enumitem}

\usepackage{amsmath}
\usepackage{amssymb}
\usepackage{amsthm}
\usepackage{thmtools}

\usepackage{algorithm}

\usepackage{hyperref}

\hypersetup{
    colorlinks=true,
    linkcolor={red!50!black},  
    citecolor={blue!50!black}, 
    urlcolor={blue!80!black},  
    breaklinks=true           
}
\usepackage[capitalize,nameinlink]{cleveref}

\usepackage{xspace}
\usepackage{xcolor}
\usepackage{blindtext}
\usepackage{url}
\usepackage{verbatim}

\usepackage[style=alphabetic, backend=biber, maxcitenames=99, maxbibnames=99, minalphanames=3, labelalpha=true]{biblatex}

\theoremstyle{plain}
\newtheorem{theorem}{Theorem}[section]
\newtheorem{lemma}[theorem]{Lemma}
\newtheorem{proposition}[theorem]{Proposition}
\newtheorem{corollary}[theorem]{Corollary}

\theoremstyle{definition}
\newtheorem{definition}[theorem]{Definition}

\theoremstyle{remark}

\newcommand{\cmt}[1]{} 

\usepackage{algorithmic}
\usepackage{enumitem}
\usepackage{xcolor}
\usepackage{mathtools}
\usepackage{arydshln}

\usepackage{newfloat}
\usepackage{listings}
\title{Approximation and Irrationality in Hylland--Zeckhauser Equilibria}
\author{Yonglei Yan\thanks{Beijing Institute of Technology, yanyonglei@bit.edu.cn}
\and Zhengyang Liu\thanks{Beijing Institute of Technology, zhengyang@bit.edu.cn}}
\date{}
\begin{document}
\maketitle

\begin{abstract}

We study the computation of Hylland--Zeckhauser (HZ) equilibria beyond the bi-valued setting. First, we give a polynomial-time algorithm that, for general multi-valued utilities, computes an \(1/e\)-approximate HZ equilibrium. This yields, to our knowledge, the first polynomial-time constant-error approximation guarantee for this setting. The key technical ingredient is a utility-stratification construction that embeds a multi-valued market into a structured bi-valued instance, allowing us to apply the exact algorithm of Vazirani and Yannakakis. Second, we show that the rational structure of exact equilibria breaks down already for tri-valued utilities: there exists a \(5\times5\) HZ instance with utilities in \(\{0,\frac12,1\}\) such that all of whose equilibria are irrational. Taken together, these results show that while the bi-valued case can be used as a base for approximation algorithms, rational exact equilibria cannot be guaranteed even for tri-valued utilities.
\end{abstract}
\section{Introduction}
One-sided matching is a foundational model in economics and computer science. In the canonical setting, there are \(n\) agents and \(n\) goods, and the goal is to allocate the goods fairly and efficiently according to the agents' preferences. This model captures a broad range of assignment problems, including campus housing, course allocation, school choice, committee assignment and the assignment of workers to tasks.

When several agents compete for the same goods, deterministic allocations can fail to provide adequate fairness guarantees, and randomized allocation becomes a natural remedy. Hylland and Zeckhauser~\cite{HZ:79} proposed a celebrated pseudo-market mechanism for this setting. In the Hylland--Zeckhauser (HZ) scheme, each good is viewed as one unit of divisible probability shares, and each agent receives one unit of artificial money.  An HZ equilibrium consists of {\em prices} and a fractional {\em allocation} such that every agent receives exactly one unit of goods in total and obtains a utility-maximizing bundle subject to her budget constraint.  Hylland and Zeckhauser proved that such an equilibrium always exists and that the resulting allocation is Pareto optimal and envy-free. Later work~\cite{HMPY:18} showed that the HZ mechanism is also incentive-compatible in large markets, further strengthening its appeal as a canonical mechanism for cardinal-utility matching markets.

Despite these attractive properties, the computational complexity of the HZ mechanism is subtle. A major step was taken by Vazirani and Yannakakis~\cite{VY:25}, who initiated a systematic complexity-theoretic study of HZ equilibria. They identified the bi-valued case as a tractable setting by giving a strongly polynomial-time exact algorithm when each agent's utilities take only two values. They also showed that exact HZ equilibria can be irrational, placed exact equilibrium computation in FIXP, and placed the computation of approximate HZ equilibria in PPAD. Subsequent work of Chen et al.~\cite{CCPY:22} proved that computing an HZ equilibrium to inverse-polynomial additive precision is PPAD-complete, even when each agent has at most four distinct utility values. Very recently, Braverman et al.~\cite{BLXZ:26} gave constant-error hardness results under the ``PCP for PPAD'' conjecture~\cite{pcp4ppad}, as well as unconditional hardness for a restricted variant of approximate HZ equilibria.

These results leave the boundary beyond the bi-valued case unresolved. Vazirani and Yannakakis~\cite{VY:25} explicitly raised two questions that are central to this boundary. First, beyond the exact bi-valued algorithm, are there other nontrivial settings in which exact or approximate HZ equilibria can be computed efficiently? In particular, can one obtain any efficient constant-error approximation guarantee for general multi-valued utilities? Second, does irrationality already arise at the smallest natural extension of bi-valued utilities, namely the tri-valued domain, even for \(\{0,\frac12,1\}\)?

We answer both questions affirmatively. Our first contribution is algorithmic: we give a polynomial-time algorithm for computing an HZ equilibrium with additive error at most \(1/e\) under arbitrary multi-valued utilities.

\begin{theorem}[Informal]\label{inf-main}
  There is an polynomial-time algorithm that computes a $1/e$-approximate HZ equilibrium.
\end{theorem}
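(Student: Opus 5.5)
The plan is to reduce a general multi-valued instance to a structured bi-valued instance and invoke the exact algorithm of Vazirani and Yannakakis~\cite{VY:25}. First, normalize each agent's utilities so that $\max_j u_{ij}=1$; additive approximation is measured on this scale. The approximation notion I aim for is the standard one: every agent receives one unit of goods, spends at most her budget, and her utility is within an additive $1/e$ of her optimal utility at the given prices.

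\textbf{Utility stratification.} For each agent $i$, rather than rounding her utilities to one threshold, I draw a threshold $\tau$ in a layered, "stratified" way. Each good is replaced by a collection of copies, one per utility level $\theta_k$. In layer $k$, agent $i$ values a copy at $1$ if $u_{ij}\ge\theta_k$ and at a low value (e.g.\ $0$ or $\epsilon$) otherwise. Supplies and budgets are split across layers with weights $w_k$, so the resulting instance is bi-valued and of polynomial size. The choice of geometric levels $\theta_k=e^{-k/m}$, with matching weights, is what should produce the constant $1/e$. The reason is that $\int$ of the loss of the threshold rule $\max_\tau(\text{mass above }\tau)\cdot\tau$ against the true utility is the classical $1/e$ bound for randomized threshold rounding.

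\textbf{Pulling back the equilibrium.} Solve the bi-valued instance exactly using~\cite{VY:25}. Then aggregate: the price of good $j$ is the weighted sum of its copies' prices, and agent $i$'s allocation of $j$ is the sum of her copies. Because each layer is a bi-valued HZ equilibrium, supply feasibility, the unit-demand constraint, and budget feasibility are preserved by linearity. For optimality, I fix an arbitrary budget-feasible bundle $y$ at the aggregated prices. In each layer $k$, the equilibrium bundle maximizes the number of goods with $u_{ij}\ge\theta_k$ among affordable bundles. Summing over layers with weights $w_k$ expresses $u_i(x_i)$ as $\sum_k w_k\theta_k\cdot(\text{mass of }x_i\text{ above }\theta_k)$, which in turn is at least $\sum_k w_k\theta_k\cdot(\text{mass of }y\text{ above }\theta_k)$. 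I then compare this layered score to $u_i(y)$ by integrating over thresholds, and bound the gap by $\max_{t}(t - t\ln(1/t)\cdot\ldots)\le 1/e$; the maximizer of $t\ln(1/t)$ is exactly $1/e$.

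\textbf{Main obstacle.} The hard step is that prices are shared across layers while budgets are split. The bundle $y$, which is affordable at aggregated prices, need not decompose into per-layer bundles affordable at per-layer prices. I would first try to force all layers to have proportional prices by coupling them into a single bi-valued instance, with copies of each good sharing one price through a common supply constraint. This way a single VY call yields consistent prices, and the per-layer optimality transfers. If that fails, I would fall back on showing that the per-layer loss from budget misallocation is absorbed into the $1/e$ slack.
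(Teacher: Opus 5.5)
There is a genuine gap, and it is the step you label the main obstacle. The construction you specify gives each level its own good copies, splits supplies and budgets across layers, and lets each layer be an HZ equilibrium with its own prices $p^{(k)}$. This produces no single price system, and both steps you attribute to linearity fail.

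With $\bar p_j=\sum_k w_k p^{(k)}_j$ and $x_{ij}=\sum_k x^{(k)}_{ij}$, agent $i$'s spending is $\sum_{k,k'}w_k\sum_j p^{(k)}_j x^{(k')}_{ij}$. The cross terms $k\neq k'$ (a good bought cheaply in one layer but priced high in another) are bounded by no layer's budget constraint. Likewise, the level-$k$ bundle is optimal only against $p^{(k)}$. So the claim that its threshold mass is at least that of any $y$ affordable at $\bar p$ has no justification.

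Your fallback cannot absorb this. The threshold analysis already uses the entire $1/e$: with density $d\tau/\tau$ on $[1/e,1]$, the loss on a step profile $\mathbf{1}\{\tau\le\sigma\}$ is exactly $1/e$ for every $\sigma\ge 1/e$. The paper's tight example shows the same phenomenon. And your target notion allows no budget slack at all.

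The missing idea is the paper's single expanded market, in which goods are \emph{not} stratified.
\begin{itemize}
\item Each original good becomes $W$ identical copies, available to every agent copy at every level.
\item Each original agent becomes $n_{i,k}\approx\lambda_{i,k}W$ copies of threshold type $k$, each with a full unit budget and unit demand.
\item The level weights are thus realized by multiplicities, at an additive rounding cost $t_i/W\le\delta$. HZ offers no way to split budgets or supplies directly.
\item Equal prices on copies come not from a ``common supply constraint'' but from symmetrization. Lowering all copies of $j$ to their minimum price preserves equilibrium (Lemma~\ref{lem:SameGood}), and identical agents can be averaged (Lemma~\ref{lem:SameAgent}).
\end{itemize}
Then every threshold type faces the same $p$, and averaging preserves budget feasibility. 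An optimal bundle $y$ of agent $i$ at $p$ is feasible in each threshold demand problem, since the constraint set is the same. This gives $\sum_j\mathbf{1}\{u_{i,j}\ge u_i^{(k)}\}y_j\le\mathrm{OPT}_{i^{(k)}}$, which is exactly the comparison you needed.

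Once that is in place, your threshold-rounding argument is essentially the continuous form of the paper's. The weights $\lambda_{i,k}=(u_i^{(k)}-u_i^{(k-1)})/u_i^{(k)}$ discretize $d\tau/\tau$ on the agent's own utility levels, and $k_i^*$ plays the role of the cutoff $1/e$. Using the agent's actual levels rather than fixed $e^{-k/m}$, plus summation by parts, yields $(1-1/N_i)^{N_i}+\delta<1/e$ instead of $1/e$ plus a discretization error. The paper's approximation notion also requires $\min_j p_j=0$, which follows from the VY output together with the min-price aggregation.
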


Our second contribution concerns the arithmetic structure of exact equilibria. We construct a small tri-valued HZ market in which every exact equilibrium contains an irrational price component.

\begin{theorem}[Informal]\label{IrrationalResult}
There exists a \(5\times 5\) HZ instance with utilities in \(\{0,\frac12,1\}\) all of whose exact HZ equilibria are irrational.
\end{theorem}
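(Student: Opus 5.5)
The plan is to exhibit an explicit \(5\times5\) instance and show that the equilibrium conditions force a price to be a root of an irreducible quadratic over \(\mathbb{Q}\). I would take the Vazirani--Yannakakis irrational example as a template. Their example uses a richer utility range, so I would rebuild it inside \(\{0,\frac12,1\}\), adding a fifth agent and good as a gadget that pins down relative prices.

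\textbf{The instance.} The goal is a matrix \(U\in\{0,\frac12,1\}^{5\times5}\) with three features.
\begin{itemize}
\item One good is disliked or zero-valued by enough agents that its price must be \(0\).
\item Two agents are forced to split their spending across two goods, with one of those goods valued at \(\frac12\) and the other at \(1\).
\item The remaining goods have prices tied by budget-tightness.
\end{itemize}

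\textbf{Step 1: combinatorial structure.} I would first show that every equilibrium has the same combinatorial structure, meaning the same support of the allocation and the same set of zero-priced goods. This is done by case analysis using the standard facts:
\begin{itemize}
\item prices can be normalized so that some good has price \(0\);
\item each agent buys only goods on her "bang-per-buck" frontier, in the HZ sense (optimal bundle of unit size within budget);
\item every good with positive price is fully sold, and agents whose bundle contains a positive-price good spend their entire budget.
\end{itemize}
For each alternative support pattern, I would derive a contradiction using envy-freeness and Pareto optimality of HZ allocations. For example, an agent valuing a cheap good at \(1\) would never leave it unbought if it were underpriced.

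\textbf{Step 2: reduction to a polynomial equation.} With the structure fixed, I would write the equations:
\begin{itemize}
\item unit demand: \(\sum_j x_{ij}=1\);
\item supply: \(\sum_i x_{ij}=1\);
\item budgets: \(\sum_j p_jx_{ij}=1\) for agents holding a priced good.
\end{itemize}
I would also add the optimality conditions. In HZ, an agent mixing two goods at an optimum satisfies a linear relation between utilities and prices, through the multiplier characterization \(u_{ij}=\alpha_i p_j+\beta_i\) on the support. Eliminating the \(x\) variables should leave a quadratic in one price, such as \(p^2-3p+1=0\) or similar. I would then check that the discriminant is not a perfect square and that the feasible root is the one with the correct sign, lying strictly between the bounds implied by the support.

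\textbf{Main obstacle.} I expect the hardest part to be Step 1: proving uniqueness of the structure across all equilibria, since a rational equilibrium might hide in a different support. To keep this manageable, I would design the instance with strong symmetries and many \(0\) entries, so that most supports are ruled out immediately by the zero-price good and the demand constraints. Then only two or three candidate supports need explicit elimination, each done by showing its linear system is infeasible or violates an agent's optimality.
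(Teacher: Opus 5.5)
Your proposal has a genuine gap: it never names an instance. The statement is an existence claim, so its whole content is a specific matrix $U\in\{0,\frac12,1\}^{5\times5}$ together with a proof that \emph{every} exact equilibrium of that matrix has an irrational price. Your three ``features'' do not determine a matrix, and the quadratic ``$p^2-3p+1=0$ or similar'' is a guess. Step~1, which you rightly call the crux, cannot even start until the matrix is fixed, so as written nothing is verified.

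For comparison, the paper uses three identical agents with utilities $(0,\frac12,0,1,1)$ on goods $(a,b,c,d,e)$, a fourth agent $(\frac12,0,1,0,0)$, and a fifth agent $(0,\frac12,1,0,0)$. It proceeds as follows.
\begin{enumerate}
\item Oversubscription gives $p_c,p_d,p_e>1$, and a total-budget argument plus case analysis gives $p_a,p_b<1$.
\item The type-$T$ dual equalities give $p_d=p_e$.
\item Case analysis forces the supports $F_T=\{a,b\}$, $F_4=\{a\}$, $F_5=\{b\}$.
\item With $A=1-p_a$, $B=1-p_b$, $C=p_c-1$, clearing of $c$ gives $AB=C^2$. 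The type-$T$ dual equalities plus the total-budget identity give $C=5B-A$.
\item Hence $z=C/A$ satisfies $5z^2-z-1=0$.
\end{enumerate}

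Two of the tools you plan to use would also mislead you.

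First, ``agents whose bundle contains a positive-price good spend their entire budget'' is a Fisher-market fact and is false for HZ. An agent buying only goods of her top utility can have dual slope $\alpha_i=0$ and a slack budget. Budget equations are available only after showing $\alpha_i>0$, for instance because the support contains two goods of different utility; the paper checks exactly this for each agent before writing budget equations. Moreover, what eliminates the competing supports is the off-support dual inequalities $\alpha_ip_j+\mu_i\ge u_{i,j}$, not envy-freeness or Pareto optimality. For example, $F_4=\{b\}$ would force $B-2A-C\ge 0$.

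Second, by Lemma~\ref{lemma:PriceScaling} the map $p\mapsto r(p-1)+1$ sends equilibria to equilibria for every admissible $r>0$. So no good's price is ``forced'' to be $0$; that is only a normalization. Likewise, no single price is a root of a fixed quadratic across all equilibria. You need either a scale-invariant quantity such as $(p_c-1)/(1-p_a)$, as the paper uses, or an explicit argument that a rational equilibrium rescales, with rational $r$, to a rational normalized one.
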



\subsection{Technical Overview}

\paragraph{For Theorem~\ref{inf-main}.}
Our approximation algorithm builds on the strongly polynomial-time exact algorithm of Vazirani and Yannakakis~\cite{VY:25} for bi-valued HZ markets. The main difficulty is that their algorithm exploits a combinatorial structure that is absent in general multi-valued instances. Our approach is to recover this structure approximately through a utility-stratification reduction.

Normalize utilities so that, for each agent \(i\),
\[
0=u_i^{(0)} < u_i^{(1)} < \cdots < u_i^{(t_i)}=1
\]
are the distinct utility levels of agent \(i\). Each level \(k\) induces a binary threshold valuation: agent \(i\) regards good \(j\) as acceptable at level \(k\) if \(u_{ij}\ge u_i^{(k)}\). Thus the original utility can be decomposed into a weighted sum of binary threshold utilities,
\[
u_{ij}
=
\sum_{k=1}^{t_i}
\bigl(u_i^{(k)}-u_i^{(k-1)}\bigr)
\mathbf{1}\{u_{ij}\ge u_i^{(k)}\}.
\]
This decomposition is the starting point of our reduction.

A natural first attempt would be to create one bi-valued HZ market for each threshold level, solve these markets independently, and combine the resulting allocations. This fails because HZ equilibria are price-dependent. The market clearing prices produced by different threshold markets can be incompatible, and a convex combination of equilibrium allocations need not satisfy the original agents' budget constraints under any single price vector.

To overcome this obstacle, we do not solve the threshold markets separately. Instead, we merge all utility levels into one expanded bi-valued HZ market. For a common scaling parameter \(W\), each original agent \(i\) is split into \(W\) copies, partitioned into groups of sizes \(n_{i,1},\ldots,n_{i,t_i}\), with \(\sum_{k=1}^{t_i} n_{i,k}=W\). The copies in group \(k\) have bi-valued utilities corresponding to the \(k\)-th threshold of agent \(i\). Symmetrically, each original good is replaced by \(W\) identical copies. The resulting market has \(nW\) agents and \(nW\) goods and is bi-valued, so we can compute an exact HZ equilibrium using the algorithm of Vazirani and Yannakakis.

The key point is that all threshold layers now share a single price system. Uniformity properties of the expanded equilibrium, formalized in Lemmas~\ref{lem:SameGood} and~\ref{lem:SameAgent}, allow us to aggregate the expanded solution back into a feasible allocation and price vector for the original market. In particular, we assign each original good a price derived from the prices of its copies, and we construct each original agent's bundle by averaging the bundles of her threshold copies. The construction preserves market clearing and budget feasibility, while the utility-stratification analysis controls the loss in optimality. The heart of the analysis is a worst-case threshold-loss bound that is independent of the particular equilibrium returned by the bi-valued algorithm.

The approximation analysis compares the utility that agent \(i\) can optimally obtain at the aggregated prices with the utility obtained from the lifted allocation. Let \(\alpha_{i,k}=n_{i,k}/W\) denote the fraction of copies of agent \(i\) assigned to level \(k\). The decomposition above reduces the loss of agent \(i\) to a weighted sum of threshold-level losses of the form
\[
\sum_{k=1}^{t_i}
\left(
u_i^{(k)}-u_i^{(k-1)}
-\alpha_{i,k}u_i^{(k)}
\right)
\mathrm{OPT}_{i^{(k)}},
\]
where \(\mathrm{OPT}_{i^{(k)}}\) is the optimal utility available to a copy of agent \(i\) at threshold level \(k\) in the expanded bi-valued market. The algorithm chooses the copy multiplicities \(n_{i,k}\), equivalently the weights \(\alpha_{i,k}\), by optimizing this worst-case bound. For the threshold \(k_i^*\) selected by the algorithm, this yields the error bound
\[
\left(1-\frac{1}{t_i-k_i^*+1}\right)^{t_i-k_i^*+1} +\delta
\]
for agent \(i\), where the additive \(\delta\) accounts for integer rounding in the choice of \(W\). Since the first term is always strictly below \(1/e\), the algorithm can give a \(1/e\)-approximate HZ equilibrium.

\paragraph{For Theorem~\ref{IrrationalResult}.}
Our irrationality construction is independent of the approximation algorithm but addresses the same boundary between bi-valued and multi-valued utilities. The instance has five agents and five goods, and all utilities lie in \(\{0,\frac12,1\}\). The proof uses complementary slackness for the agents' linear programs to force a specific support pattern in every exact equilibrium. Once this support pattern is fixed, market clearing and binding-budget conditions imply two equations,
\[
AB=C^2
\qquad\text{and}\qquad
C=5B-A,
\]
where \(A=1-p_a\), \(B=1-p_b\), and \(C=p_c-1\) are linear functions of three prices. Hence the ratio \(z=C/A\) satisfies
\[
5z^2-z-1=0,
\]
and therefore
\[
\frac{p_c-1}{1-p_a}
=
\frac{1+\sqrt{21}}{10}.
\]
This irrational ratio forces at least one price component to be irrational in every exact equilibrium. Consequently, no rational exact HZ equilibrium exists for this tri-valued instance.

\subsection{Related Work}

Hylland and Zeckhauser~\cite{HZ:79} introduced the pseudo-market mechanism for one-sided matching with cardinal utilities and proved existence, Pareto optimality and envy-freeness of HZ equilibria. Subsequent work strengthened the case for HZ-type mechanisms in applications. In particular, \cite{HMPY:18} showed that the HZ mechanism is incentive-compatible in large markets. Pseudo-market mechanisms and related competitive-equilibrium ideas have been used in a variety of assignment settings, including course allocation and other matching-market applications~\cite{Bud:11,HMPY:18,Le:17,McL:18}.

The computational study of HZ equilibria has developed rapidly in recent years. Alaei et al.~\cite{AJKT:17} gave polynomial-time exact algorithms when either the number of agents or the number of goods is bounded by a constant, using algebraic cell decomposition techniques of \cite{BPR:98}. Vazirani and Yannakakis~\cite{VY:25} provided a systematic complexity-theoretic treatment of the HZ mechanism. They gave a strongly polynomial-time exact algorithm for bi-valued utilities, showed that exact equilibria can be irrational, placed exact computation in FIXP, and placed approximate computation in PPAD. They also proved rationality for the case of three goods and left open whether irrationality can already occur for utilities \(\{0,\frac12,1\}\).

Hardness results for approximate HZ equilibria were obtained by Chen et al.~\cite{CCPY:22}, who proved PPAD-hardness for computing an \(\epsilon\)-approximate HZ equilibrium when \(\epsilon\) is inverse-polynomial, even if each agent has at most four distinct utility values. They also proved NP-hardness for approximating the maximum social welfare achievable by HZ equilibria within a constant factor. Recent work of Braverman et al.~\cite{BLXZ:26} gives further hardness evidence for constant-error approximation under the ``PCP for PPAD'' conjecture~\cite{pcp4ppad}, together with unconditional hardness for a restricted approximate-equilibrium notion. Our approximation result is complementary to these hardness results: it provides a nontrivial fixed constant-error upper bound for general multi-valued utilities, rather than an arbitrary-precision approximation scheme.

Beyond HZ, one-sided matching has also been extensively studied under ordinal preferences. The random priority mechanism~\cite{AS:98,Mou:18} is strategy-proof and ex-post Pareto optimal, while the probabilistic serial mechanism~\cite{BM:01} satisfies envy-freeness and ordinal efficiency. These mechanisms are based on ordinal rather than cardinal information, whereas HZ uses cardinal utilities and market prices.

HZ-type ideas have also been extended to richer economic environments, including models with initial endowments and Arrow--Debreu-style matching markets~\cite{EMZ:19a,GTV:20}. In several of these extensions, exact equilibria may fail to exist or may be computationally difficult to obtain, making approximate or relaxed equilibrium notions necessary. This broader literature further motivates the study of efficient approximation algorithms for pseudo-market mechanisms.


\section{Preliminaries}
\label{sec:preliminaries}
\paragraph{Notation.} We denote by \([n]\) the set \(\{1,2,\dots,n\}\).

An HZ market consists of \(n\) agents and \(n\) divisible goods. We denote the set of agents by \(A\) and the set of goods by \(G\), with \(A = G = [n]\). In the market, each agent \(i \in A\) has one dollar, and each good \(j \in G\) is available in one unit. We use \(u_{i,j} \in [0,1]\) to represent the utility that agent \(i\) derives from one unit of good \(j\), for all \(i \in A\) and \(j \in G\). Hence, the entire market can be represented by \(M=(n, (u_{i,j}))\). The restriction \(u_{i,j} \in [0,1]\) is imposed because, in an HZ equilibrium, shifting and scaling an agent's utilities does not affect the equilibrium, and we need to consider additive approximations.

Given an HZ market, an HZ equilibrium consists of an allocation \(x\) and a price vector \(p\), where \(x = (x_{i,j}: i,j \in [n])\) and \(p = (p_j: j \in [n])\), with all \(x_{i,j}\) and \(p_j\) being nonnegative. Given \(x\) and \(p\), we let \(x_i = (x_{i,j} : j \in [n])\) denote the bundle of goods assigned to agent \(i\); thus, the total expenditure of agent \(i\) is \(\sum_{j \in [n]} p_j x_{i,j}\), and the total utility obtained by agent $i$ is \(\sum_{j \in [n]} u_{i,j} x_{i,j}\). A pair \((x,p)\) is called an HZ equilibrium if it satisfies the following properties.

\begin{definition}[HZ Equilibrium~\cite{HZ:79}]
\label{def:HZE}
We say that $(x,p)$, with $x = (x_{i,j})_{i,j\in[n]} \in {\mathbb{R}}_{\ge 0}^{n \times n}$ and $p = (p_j)_{j\in[n]} \in {\mathbb{R}}_{\ge 0}^{n} $, is an {\em HZ equilibrium} of an HZ market $M=(n, (u_{i,j}))$ if:
\begin{enumerate}[label=(\alph*)]
    \item $\sum_{i \in [n]} x_{i,j} = 1$ for all $j \in [n]$;
    \item $\sum_{j \in [n]} x_{i,j} = 1$ for all $i \in [n]$;
    \item $\sum_{j \in [n]} p_j x_{i,j} \le 1$ for all $i \in [n]$;
    \item the bundle \(x_i\) maximizes \(\sum_{j \in [n]} u_{i,j} x_{i,j}\) subject to conditions (b) and (c), for all $i \in [n]$.
\end{enumerate}
\end{definition}

Here, condition (a) means that every good is fully allocated; condition (b) means that every agent receives exactly one unit of goods; condition (c) means that no agent’s expenditure exceeds her budget.
Equivalently, condition (d) means that \(\sum_{j \in [n]} u_{i,j} x_{i,j}\) is the optimal value of the following linear program for agent \(i\):
\[
\begin{aligned}
\text{maximize} \quad & \sum_{j \in [n]} u_{i,j} x_{i,j} \\
\text{subject to} \quad & \sum_{j \in [n]} x_{i,j} = 1, \\
& \sum_{j \in [n]} p_j x_{i,j} \le 1, \\
& x_{i,j} \ge 0, \quad \forall j \in [n].
\end{aligned}
\]
We denote this optimal value for agent \(i\) under the price vector \(p\) by \(\mathrm{value}_p(i)\). Thus, condition (d) is equivalent to \(\sum_{j \in [n]} u_{i,j} x_{i,j} = \mathrm{value}_p(i)\) for each $i\in[n]$.

Hylland and Zeckhauser proved that an HZ equilibrium always exists.
\begin{theorem}[Existence of HZ Equilibrium \cite{HZ:79}]
\label{thm:Existence}
Every HZ market admits an HZ equilibrium.
\end{theorem}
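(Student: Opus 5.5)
We will prove Theorem~\ref{thm:Existence} by a fixed-point argument. Existence of an exact equilibrium of this kind is not a statement one expects to get combinatorially, so the natural tool is Kakutani's theorem applied to a price-adjustment correspondence. The main difficulty is that an agent's demand can be discontinuous when her budget constraint is slack or when the cheapest good costs more than her budget. We handle both problems by working on a truncated price domain and using a \emph{cheapest-optimal-bundle} demand.

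\textbf{Step 1: price normalization.} Since only relative comparisons against the unit budget matter, we may assume some good has price $0$ in any equilibrium. Indeed, if all prices were positive, market clearing together with the budgets would let us shift all prices down uniformly by $\min_j p_j$. Every agent receives exactly one unit, so each agent's spending drops by exactly that amount, and the budgets and optimal bundles are unchanged: condition (b) makes a uniform shift irrelevant for (d). We therefore restrict to the compact convex set $P=\{p\in[0,\bar p]^n\}$ with a large cap $\bar p>n$. On $P$ every agent can always afford some bundle, for instance the cheapest good, once we ensure some price is small.

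\textbf{Step 2: demand correspondence.} For $p\in P$, let $D_i(p)$ be the set of bundles that are optimal for agent $i$'s LP and, among optimal bundles, have minimum cost. Berge's maximum theorem gives upper hemicontinuity of the optimal set, provided the feasible set of agent $i$'s LP varies continuously. That continuity holds when $\min_j p_j<1$, since then a Slater point exists. To guarantee this, we define an auxiliary correspondence on $P$. It sends $(p,x)$ to (i) the product of the sets $D_i(p)$, and (ii) the price response $p'_j=\min\{\bar p,\max\{0,p_j+\sum_i x_{ij}-1\}\}$, shifted so that $\min_j p'_j=0$. The truncation at $\bar p$ is needed for compactness. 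Convexity of the $D_i(p)$ follows because the cost-minimizing optimal face is itself a face of a polytope. The main obstacle is lower semicontinuity of the cost-minimizing selection. To get around it, we instead take $D_i(p)$ to be the set of optimal bundles, and we recover minimal cost by a perturbation, adding $\epsilon$ times cost to the objective. A Kakutani fixed point then exists for each $\epsilon>0$, and we pass to the limit $\epsilon\to0$ by compactness.

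\textbf{Step 3: fixed points are equilibria.} At a fixed point, total demand is $n$ units, since each agent takes one, and total supply is $n$ units. So a good in excess demand can exist only if another good is in excess supply. Goods in excess supply have price pushed to $0$. Goods at price $0$ that are oversupplied are undesired, but some agent could fill up on them at no cost. We then redistribute slack: with the cost-minimizing tie-break, every agent with a non-binding budget buys only zero-price goods among her optimal ones. This lets us reassign leftover units of zero-price goods without changing utilities, which yields exact clearing (a). We also check that no price sits at $\bar p$: a good at $\bar p>n$ could be bought only by agents spending more than their budget of $1$, so its demand is $0$ and its price would decrease, a contradiction. Conditions (b)--(d) hold by the definition of $D_i$. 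The limiting argument as $\epsilon\to0$ preserves all the conditions, since they are closed. I expect the redistribution step in Step 3 to be the delicate one: it requires showing that the leftover supply of zero-price goods can be absorbed by agents with slack budgets, and there I would use a Hall-type counting argument based on the equal totals of $n$ units demanded and $n$ units supplied.
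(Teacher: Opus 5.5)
The paper does not prove this theorem. It only cites it from Hylland and Zeckhauser, so your argument has to stand on its own. The Kakutani strategy is viable, but as written it has a real gap in the correspondence itself. On your domain $P=[0,\bar p]^n$, any price vector with $\min_j p_j>1$ makes the budget set $\{y\in\mathbb{R}^n_{\ge 0}:\sum_j y_j=1,\ \sum_j p_jy_j\le 1\}$ empty, so $D_i(p)=\emptyset$. At prices with $\min_j p_j=1$, the budget correspondence fails to be lower hemicontinuous, so Berge gives you nothing there either. Normalizing the price \emph{response} so that $\min_j p'_j=0$ does not fix this: Kakutani needs nonempty convex values and a closed graph at every point of the domain, not only at the eventual fixed point.

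Your Step 1 justification is also wrong. Lowering every price by $m$ while keeping budget $1$ is the same as keeping $p$ and raising every budget to $1+m$. That enlarges budget sets and can change demand. The invariance that actually holds is $p_j\mapsto r(p_j-1)+1$ (Lemma~\ref{lemma:PriceScaling}).

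The clean repair is to evaluate demand at the normalized price $\hat p=p-(\min_k p_k)\mathbf{1}$. The map $p\mapsto\hat p$ is continuous, and $\min_j\hat p_j=0<1$ provides a Slater point. Hence $p\mapsto\{\text{optimal bundles at }\hat p\}$ is nonempty, convex- and compact-valued, and upper hemicontinuous on all of $P$. At a fixed point $\hat p=p$, because your price map always has minimum $0$.

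Step 3 rests on a false claim and is left unfinished. An agent whose budget is slack at an optimum has dual slope $\alpha_i=0$, so she buys only top-utility goods. Her cheapest top-utility good may have any price below $1$, so nothing forces her onto zero-price goods. Moreover, the $\epsilon$-perturbed demand maximizes $u_i\cdot y-\epsilon\,p\cdot y$, so its fixed points need not carry the cost-minimizing tie-break you invoke. Also, Kakutani requires a closed graph, not lower semicontinuity.

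None of this machinery is needed; exact clearing follows directly:
\begin{itemize}
\item Let $z_j=\sum_i x_{ij}-1$. Every demanded bundle has total mass one, so $\sum_j z_j=0$ identically.
\item At a fixed point, put $\tilde p_j=\min\{\bar p,\max\{0,p_j+z_j\}\}$ and $m=\min_k\tilde p_k$. If $m>0$, every good satisfies $z_j\ge m>0$, contradicting $\sum_j z_j=0$. Hence $m=0$ and $p=\tilde p$.
\item Then $z_j=0$ whenever $0<p_j<\bar p$, and $z_j\le 0$ whenever $p_j=0$.
\item A good with $p_j=\bar p$ would need $z_j\ge 0$. But each agent holds at most $1/\bar p$ of it, so its demand is at most $n/\bar p<1$ (not $0$, as you wrote).
\item So $z\le 0$ coordinatewise, and $\sum_j z_j=0$ forces $z=0$.
\end{itemize}
This works with $\epsilon=0$ and needs no tie-break, no limit, and no Hall-type redistribution. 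The cheapest-bundle refinement matters for Pareto optimality, not for existence in the sense of Definition~\ref{def:HZE}.
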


We use two invariance properties of HZ equilibria. First, prices may be rescaled around the unit price.
\begin{lemma}[\cite{VY:25}]
\label{lemma:PriceScaling}
Let \((x,p)\) be an HZ equilibrium. For any \(r > 0\), define \(p'\) by \(p'_j = r(p_j - 1) + 1 \ge 0 , \forall j \in [n]\). Then \((x,p')\) is also an HZ equilibrium.
\end{lemma}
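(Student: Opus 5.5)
We verify the four conditions of Definition~\ref{def:HZE} for \((x,p')\) directly, using that the allocation \(x\) is unchanged. We also assume, as in the statement, that \(r\) is such that \(p'_j \ge 0\) for all \(j\).

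\textbf{Conditions (a) and (b).} These involve only \(x\), so they carry over immediately from \((x,p)\).

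\textbf{Condition (c).} Fix an agent \(i\). Because \(\sum_j x_{i,j}=1\) by (b), the new expenditure is an affine function of the old one:
\[
\sum_j p'_j x_{i,j} \;=\; r\sum_j p_j x_{i,j} - r + 1 \;=\; r\Bigl(\sum_j p_j x_{i,j}-1\Bigr)+1 .
\]
Since \(r>0\) and the old expenditure is at most \(1\), the new expenditure is at most \(1\).

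\textbf{Condition (d).} This is the substantive step, though it is short. The optimality condition refers to the agent's LP, whose feasible region is the set of \(y\ge 0\) with \(\sum_j y_j = 1\). On that simplex the same identity holds for every \(y\), not just for \(x_i\):
\[
\sum_j p'_j y_j - 1 \;=\; r\Bigl(\sum_j p_j y_j - 1\Bigr).
\]
Because \(r>0\), a point \(y\) of the simplex satisfies \(\sum_j p'_j y_j \le 1\) if and only if it satisfies \(\sum_j p_j y_j \le 1\). The two budget sets therefore coincide on the simplex, so agent \(i\)'s LP has the same feasible region under \(p\) and \(p'\). Its objective \(\sum_j u_{i,j}y_j\) does not involve prices, so the LP is literally unchanged. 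Hence \(\mathrm{value}_{p'}(i)=\mathrm{value}_p(i)\), and \(x_i\), which is optimal under \(p\), remains optimal under \(p'\).

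The only point requiring care is that the unit-demand constraint (b) must be applied to every feasible bundle \(y\), and not only to \(x_i\), to conclude that the budget sets coincide. The rest of the argument is routine algebra.
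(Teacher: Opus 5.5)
Your proof is correct: the identity $\sum_j p'_j y_j - 1 = r\bigl(\sum_j p_j y_j - 1\bigr)$, valid for every $y$ on the simplex, shows both that budgets are preserved and that each agent's demand LP is unchanged, which is the standard argument for this invariance. The paper gives no proof of its own here and simply cites \cite{VY:25}, so there is nothing further to compare.
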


Consequently, we may normalize exact equilibrium prices so that the cheapest good has price zero. Indeed, by market clearing and the budget constraints,
\[
\sum_{j\in[n]}p_j
=
\sum_{i\in[n]}\sum_{j\in[n]}p_jx_{i,j}
\le n,
\]
so the average price is at most one. If \(\min_j p_j<1\), applying Lemma~\ref{lemma:PriceScaling} with
\[
r=\frac{1}{1-\min_{j\in[n]}p_j}
\]
sets the minimum price to zero. If all prices are equal to one, then replacing \(p\) by the zero vector preserves every agent's demand set and hence also yields an HZ equilibrium.

Second, affine transformations of a single agent's utility vector do not affect exact equilibria.
\begin{lemma}[\cite{VY:25}]
\label{lemma:ShiftingAndScalingUtility}
Given an agent \(i\), for any \(s>0\) and \(h\in\mathbb{R}\), define
\[
u'_{i,j}=s u_{i,j}+h,\qquad \forall j\in[n].
\]
Let \(I'\) be the instance obtained from \(I\) by replacing \(u_i\) with \(u'_i\), while keeping all other agents' utilities unchanged. Then \((x,p)\) is an HZ equilibrium of \(I\) if and only if it is an HZ equilibrium of \(I'\).
\end{lemma}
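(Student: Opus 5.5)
The approach is to verify the four conditions of Definition~\ref{def:HZE} for \((x,p)\) in \(I'\) directly. Only agent \(i\)'s utilities change, so conditions (a)--(c) are untouched. Condition (d) is also unchanged for every agent other than \(i\), because their objectives and feasible sets are identical in \(I\) and \(I'\). The lemma therefore reduces to one claim about agent \(i\)'s linear program: under the fixed price vector \(p\), the set of optimal bundles for agent \(i\) is the same under \(u_i\) as under \(u'_i\).

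To prove that claim, I use the constraint \(\sum_j x_{i,j}=1\), which is imposed in agent \(i\)'s program. For any feasible bundle \(y\) we have
\[
\sum_{j\in[n]} u'_{i,j} y_j = s\sum_{j\in[n]} u_{i,j} y_j + h\sum_{j\in[n]} y_j = s\sum_{j\in[n]} u_{i,j}y_j + h .
\]
So on the feasible region the new objective is a strictly increasing affine function of the old one, since \(s>0\). The two programs have the same feasible set (defined by (b), (c), and nonnegativity, none of which involve utilities), and hence the same maximizers. In particular, \(\mathrm{value}'_p(i)=s\,\mathrm{value}_p(i)+h\). It follows that \(\sum_j u_{i,j}x_{i,j}=\mathrm{value}_p(i)\) holds if and only if \(\sum_j u'_{i,j}x_{i,j}=\mathrm{value}'_p(i)\).

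The converse direction follows by symmetry. The map \(u'_i\mapsto (u'_i-h)/s\) is again an affine transformation with positive scale \(1/s\) and shift \(-h/s\), so applying the same argument to it recovers \(I\) from \(I'\).

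There is no real obstacle here. The one point to state carefully is that the shift \(h\) is harmless only because every agent's bundle must have total mass exactly one, since condition (b) is built into agent \(i\)'s optimization problem. Under a mere \(\le 1\) constraint, a negative \(h\) could change the optimal bundle. I also note in passing that \(u'\) may leave \([0,1]\). Definition~\ref{def:HZE} does not actually use the range restriction, so the equivalence is unaffected.
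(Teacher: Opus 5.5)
Your proof is correct: on the common feasible set, where $\sum_j y_j = 1$, the new objective equals $s$ times the old one plus $h$, so the two programs have the same maximizers, and the inverse affine map gives the converse. The paper does not reprove this lemma (it cites Vazirani and Yannakakis), and your argument is essentially the standard one behind that citation.
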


By lemma~\ref{lemma:ShiftingAndScalingUtility}, we can assume $u_{i,j} \in [0,1],\forall i\in[n],\forall j\in[n]$ and $\min_{j} u_{i,j} = 0,\forall i\in[n] $ without loss of generality.

We now define the approximation notion used throughout the paper.

\begin{definition}[Approximate HZ Equilibria~\cite{VY:25,CCPY:22}]
\label{def:AHZE}
Given some $\epsilon>0$, we say that $(x,p)$, with $x = (x_{i,j})_{i,j\in[n]} \in {\mathbb{R}}_{\ge 0}^{n \times n}$ and $p = (p_j)_{j\in[n]} \in {\mathbb{R}}_{\ge 0}^{n} $, is an \emph{$\epsilon$-approximate HZ equilibrium} of an HZ market $M$ if:
\begin{enumerate}[label=(\alph*)]
    \item $\sum_{i \in [n]} x_{i,j} = 1$ for all $j \in [n]$;
    \item $\sum_{j \in [n]} x_{i,j} = 1$ for all $i \in [n]$;
    \item $\sum_{j \in [n]} p_j x_{i,j} \le 1 + \epsilon$ for all $i \in [n]$;
    \item \(\sum_{j \in [n]} u_{i,j} x_{i,j} \ge \mathrm{value}_{p}(i) - \epsilon \), for all $i \in [n]$;
    \item $\min_{j \in [n]} p_j = 0$.
\end{enumerate}
\end{definition}

Conditions (a) and (b) require exact market clearing and exact unit demand. Conditions (c) and (d) allow additive error in budget feasibility and optimality, respectively. Condition (e) fixes the price normalization; without such a normalization, Lemma~\ref{lemma:PriceScaling} could make prices arbitrarily close to one and thereby weaken the budget condition.

For each agent \(i\), let \(t_i\) denote the number of distinct positive utility levels in the normalized utility vector \(u_i\). Thus \(t_i=0\) exactly when \(u_{i,j}=0\) for all \(j\in[n]\). The following lemma allows us to assume \(t_i\ge 1\) for each agent.

\begin{lemma}\label{lem:OneNonzeroUtility}
Let \(M\) be an HZ market containing an agent \(i\) with \(u_{i,j}=0\) for all \(j\in[n]\). Construct a new market \(M'\) by replacing agent \(i\) with an agent \(i'\) whose normalized utility vector has at least one positive entry. Then any \(\epsilon\)-approximate HZ equilibrium \((x',p')\) of \(M'\) can be transformed into an \(\epsilon\)-approximate HZ equilibrium \((x,p)\) of \(M\).
\end{lemma}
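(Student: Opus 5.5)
The plan is to take \((x,p)=(x',p')\) unchanged: the same allocation and the same price vector. The market \(M'\) has the same number of agents and goods, so conditions (a), (b) and (e) of Definition~\ref{def:AHZE} carry over verbatim. Condition (c) concerns only prices and bundles, so it also carries over verbatim for every agent, including \(i\), whose bundle \(x_i=x'_{i'}\) costs at most \(1+\epsilon\) at \(p'\).

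Condition (d) needs checking only for agent \(i\); every other agent's utility vector and the price vector are identical in \(M\) and \(M'\), so \(\mathrm{value}_p(\cdot)\) and the realized utilities coincide. For agent \(i\) in \(M\), every feasible bundle yields utility zero. Hence \(\mathrm{value}_p(i)=0\), provided the agent's linear program is feasible. Feasibility holds because, by (e), some good has price \(0\), and buying one unit of it satisfies the budget. The realized utility \(\sum_j u_{i,j}x_{i,j}=0\) is therefore at least \(\mathrm{value}_p(i)-\epsilon=-\epsilon\), and (d) holds.

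The only point needing care is making sure \(\mathrm{value}_p(i)\) is well defined. This is exactly where the normalization \(\min_j p_j=0\) is used; everything else is immediate. For the application, we also note that \(M'\) can always be formed, for instance by giving \(i'\) utility \(1\) on a single good and \(0\) elsewhere. We then apply this reduction agent by agent, or simultaneously to all zero agents, since the argument treats each such agent independently.
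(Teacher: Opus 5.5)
Your proposal is correct and follows the paper's proof exactly: keep the allocation and prices, note that (a), (b), (c), (e) transfer directly, and check (d) for agent $i$ by observing that both her realized utility and $\mathrm{value}_p(i)$ are zero. Your extra remark, that $\mathrm{value}_p(i)$ is well defined because condition (e) supplies a zero-price good and so makes her linear program feasible, is a small detail the paper leaves implicit.
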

\begin{proof}
Define \((x,p)\) as follows. For every agent \(a\neq i\), set \(x_a=x'_a\). For agent \(i\), set \(x_i=x'_{i'}\). Finally, set \(p=p'\).

Conditions (a), (b), (c) and (e) are preserved directly from \((x',p')\). For condition (d), every agent \(a\neq i\) has the same utility vector, allocation, and prices as in \(M'\), so the condition holds. For agent \(i\), all utilities are zero, and hence $\sum_{j\in[n]}u_{i,j}x_{i,j}=0$ and $\mathrm{value}_p(i)=0$.
Thus, we have $\sum_{j\in[n]}u_{i,j}x_{i,j}
\ge
\mathrm{value}_p(i)-\epsilon$
as required.
\end{proof}


Therefore, agents with identically zero normalized utilities may be replaced before running the algorithm, and their bundles can be transferred back afterward. Hence, in the algorithmic analysis, we assume without loss of generality that \(t_i\ge 1\) for all agents \(i\). Under our normalization, this also implies \(\max_j u_{i,j}=1\) for every agent.

Finally, we recall the bi-valued case studied by Vazirani and Yannakakis~\cite{VY:25}. A market is \emph{bi-valued} if each agent's utilities take at most two distinct values. They proved that exact HZ equilibria in this case can be computed in strongly polynomial time.

\begin{theorem}[\cite{VY:25}]
\label{thm:bivalue}
There is a strongly polynomial-time algorithm (see Algorithm~\ref{alg:VYalgorithm}) that can find prices and allocations for the bi-valued utilities case of the Hylland--Zeckhauser scheme.
\end{theorem}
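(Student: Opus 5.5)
Since \Cref{thm:bivalue} is quoted from~\cite{VY:25}, the plan is to reconstruct their argument rather than invoke it. The strategy is to reduce the bi-valued case to a combinatorial flow problem on a bipartite ``liked-goods'' graph, and then to compute equilibrium prices by a sequence of max-flow / decomposition steps, each of which fixes the price of at least one new set of goods.

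\textbf{Step 1: normalize the instance.} By \Cref{lemma:ShiftingAndScalingUtility}, I may assume every agent has $u_{i,j}\in\{0,1\}$. Let $S_i=\{j:u_{i,j}=1\}$. An agent with a single utility value is indifferent among all goods and can absorb any leftover affordable good; by \Cref{lem:OneNonzeroUtility}-style arguments, this only helps feasibility.

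\textbf{Step 2: characterize demand.} By \Cref{lemma:PriceScaling}, I normalize so that some good has price $0$. Let $q_i=\min_{j\in S_i}p_j$.
\begin{itemize}[label=$\cdot$]
\item If $q_i\le 1$, agent $i$'s optimal bundles are exactly the unit bundles supported on liked goods with spending at most $1$.
\item If $q_i>1$, she spends her whole dollar on liked goods of price $q_i$, obtaining $1/q_i$ units of them. She fills the remaining $1-1/q_i$ units with zero-utility goods, where the cheapest such goods have price $0$ in a well-normalized equilibrium.
\end{itemize}
So demand depends only on the minimum liked price, as in a linear Fisher market with $0/1$ utilities and unit caps.

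\textbf{Step 3: compute prices by decomposition.} Build the bipartite graph $H$ with edges $(i,j)$ for $j\in S_i$.
\begin{itemize}[label=$\cdot$]
\item If $H$ has a perfect matching, set $p\equiv 0$ and allocate along the matching; this is trivially an equilibrium.
\item Otherwise, use the Dulmage--Mendelsohn decomposition (computable by one max-flow) to identify the \emph{overdemanded} part. This is a set $T$ of agents with $|N_H(T)|<|T|$, chosen maximally surplus and tight. Its goods $N_H(T)$ must receive price above $1$, and its agents split those goods.
\item The natural price for a tight block $(T',N')$ with $|N'|<|T'|$ is the one that exhausts its money, $p=|T'|/|N'|$, so that each agent gets $|N'|/|T'|$ liked units and $1-|N'|/|T'|$ units of zero-price filler.
\end{itemize}
I would process blocks in order of decreasing surplus ratio, in the spirit of the DPSV/Orlin-type algorithms for linear Fisher markets. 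At each round, solve a parametric max-flow to find the most overdemanded set, price it, remove it, and recurse on the residual graph. The residual graph keeps the removed agents' filler demand on price-$0$ goods. Each round removes at least one good, so there are at most $n$ rounds, each a max-flow computation; this gives strong polynomiality.

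\textbf{Step 4: verify equilibrium.} Check market clearing and that every agent's bundle is optimal at the final prices. The main obstacle is the interaction between blocks. An agent in a later, less overdemanded block may also like goods in an earlier, expensive block, and I must show she weakly prefers her assigned cheaper liked goods. Similarly, the zero-utility filler demanded by expensive-block agents must be absorbable exactly by the price-$0$ goods left over, which requires a Hall-type counting argument.

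I would attack this obstacle through the maximality of the surplus ratio at each step. Maximality guarantees that later blocks have weakly lower prices, so no agent wants to deviate to an earlier block. The filler accounting would follow by summing the $1-|N'|/|T'|$ terms and comparing against the number of unmatched zero-price goods, via the deficiency formula of König's theorem.
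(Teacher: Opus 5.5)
Your plan is correct, but it takes a different route from the one the paper relies on. The paper does not reprove this statement. It imports it from \cite{VY:25} and reproduces their procedure as Algorithm~\ref{alg:VYalgorithm}, which works bottom-up on the goods side. It takes a minimum vertex cover $B_1\cup A_2$ and matches $A_2$ into $B_2$ at price $0$. It then runs a single ascending clock $p\ge 1$ on $H[A_1,B_1]$, freezing the maximal goods set $S^*$ that first satisfies $p|S^*|=|N(S^*)\cap C|$, so the \emph{least} overdemanded block is priced first. Finally it tops up $A_1$ with the unmatched goods of $B_2$.

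You instead peel top-down on the agent side. You extract $T_k$ maximizing $|T|/|N(T)|$ in the residual graph, price $N(T_k)$ at $\rho_k=|T_k|/|N(T_k)|$, delete it, and stop when the maximum ratio is at most $1$; the remaining agents $R$ are then matched at price $0$ by Hall. This is a density (principal-partition) peeling, not the DPSV-style ascending clock, which is what Algorithm~\ref{alg:VYalgorithm} is. Your preliminary Dulmage--Mendelsohn step is redundant, since the stopping rule already isolates the overdemanded part.

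The real difference is where cross-block consistency comes from. In VY it is free: the clock only rises, so every liked good not yet frozen ends up weakly more expensive. In your version it must come from maximality via a mediant argument. If a residual set $T'$ had $N_{\mathrm{res}}(T')=\emptyset$ or $|T'|/|N_{\mathrm{res}}(T')|>\rho_1$, then $T_1\cup T'$ would beat $\rho_1$. This gives both $\rho_1\ge\rho_2\ge\cdots$ and that every later agent still has a liked good in her own block, so her cheapest liked price is her own $\rho_k$.

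Two details should be made explicit:
\begin{itemize}
\item The equal split inside a block, which you currently only assert, is feasible by max-flow. Its cut condition $|N_{\mathrm{res}}(T'')|\ge|T''|/\rho_k$ for all $T''\subseteq T_k$ is the same maximality again.
\item All-zero agents must be replaced as in your Step~1 before maximizing, since their ratio would be $1/0$.
\end{itemize}

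In return for proving consistency by hand, each of your rounds is a self-contained max-ratio problem with no event simulation. The filler bookkeeping you flag is then a one-line count: the leftover price-$0$ goods number $n-|R|-\sum_k|N(T_k)|=\sum_k\bigl(|T_k|-|N(T_k)|\bigr)$. This is exactly the total filler demand, and it lies entirely on goods that no $T_k$-agent likes.

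Your output also has $\min_j p_j=0$ and no spending on zero-utility goods. So it could replace Algorithm~\ref{alg:VYalgorithm} in the paper's later uses of Lemmas~\ref{lem:ZeroPrice} and~\ref{lem:NewCondition}.
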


We use two additional properties of Algorithm~\ref{alg:VYalgorithm}. The first ensures the required price normalization, and the second states that agents do not spend positive budget on goods from which they derive zero utility.

\begin{lemma}\label{lem:ZeroPrice}
The output $(x,p)$ of Algorithm~\ref{alg:VYalgorithm} satisfies $\min_{j\in[n]} p_j = 0$.
\end{lemma}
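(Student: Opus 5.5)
The plan is to read the claim directly off the pricing rule of Algorithm~\ref{alg:VYalgorithm}, not to post-process its output. Recall its structure. After normalizing by Lemma~\ref{lemma:ShiftingAndScalingUtility}, each agent's utilities lie in \(\{0,1\}\). The algorithm works on the bipartite graph \(H\) of utility-one edges and repeatedly peels off a set of agents and goods, using maximum matchings or parametric flows. Goods in a peeled-off "over-demanded" (tight) component receive a positive price, fixed by the condition that the agents of that component exactly spend their budgets. All remaining goods are priced at \(0\) and absorb the leftover demand, which each agent takes at zero cost. I will therefore show that the set of goods priced at \(0\) by this final step is never empty.

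First I handle the degenerate case. If \(H\) has a perfect matching, the algorithm returns that matching with all prices equal to \(0\), so the claim holds. Otherwise, by Hall's theorem and the Gallai--Edmonds/K\H{o}nig decomposition that the algorithm computes, the agents split into the deficient part (agents with \(|\Gamma_H(S)|<|S|\)) and the rest. The goods adjacent to deficient agents are the ones that get positive prices. The key counting step is as follows. A deficient agent set \(S\) has more agents than its neighbourhood \(\Gamma_H(S)\) has goods, so some of these agents' unit demand must go to goods outside \(\Gamma_H(S)\). By the construction of the algorithm, these outside goods are exactly the ones in the final zero-priced class. Since the total supply equals the total demand, \(n\), and at least one unit of demand leaves the priced goods, the zero-priced class is nonempty.

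To make this rigorous I will follow the loop invariant of Algorithm~\ref{alg:VYalgorithm}. At each iteration the goods still unpriced are strictly more numerous than the remaining agents' residual demand that the positive-priced goods can absorb. Equivalently, in each tight component the agents spend their whole budget, so \(\sum_{j\in\Gamma(S)}p_j=|S|\), while \(|\Gamma(S)|<|S|\). Hence these components cannot cover all \(n\) goods, and the goods left at termination form a nonempty set priced at \(0\). The main obstacle is matching this argument to the exact bookkeeping of Algorithm~\ref{alg:VYalgorithm}, namely which goods it explicitly assigns price \(0\) in the last phase. I would first try to prove, by induction on the iterations, that the current unpriced goods always strictly outnumber the remaining agents, with one exception: when they are equal and a perfect utility-one matching exists, in which case that entire block is priced at \(0\). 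Either way, the termination step yields at least one good with \(p_j=0\), and since prices are nonnegative this gives \(\min_j p_j=0\).
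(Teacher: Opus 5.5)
Your first two paragraphs follow the paper's proof. With a perfect matching all prices are $0$. Otherwise the goods priced $0$ are exactly those in $B_2=B\setminus B_1$, so what must be checked is $B_2\neq\emptyset$. The paper's proof leaves this implicit, and your deficiency count supplies it. The third paragraph, which you offer as the rigorous version, should be dropped, because both of its load-bearing claims are false for Algorithm~\ref{alg:VYalgorithm}.

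\begin{itemize}
\item Tight components need not be strictly deficient. The loop starts at $p=1$, so a set $S\subseteq D$ with $|S|=|N(S)\cap C|$ is tight immediately. For example, take $n=2$, a single edge between agent $1$ and good $g_1$, and the minimum cover $B_1=\{g_1\}$, $A_2=\emptyset$. Per component you only get $|S|\le|N(S)\cap C|$, and summing these yields no strict inequality.
\item Your proposed invariant is backwards at the start of the loop. There the unpriced goods $D=B_1$ are strictly fewer than the remaining agents $C=A_1$.
\end{itemize}

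No induction over the loop is needed, and there is no ``last phase'' to track. The step setting $p_j\leftarrow 0$ for all $j\in B_2$ runs before the while-loop, and the loop only ever assigns prices $p\ge 1$. Nonemptiness of $B_2$ then takes one line of K\H{o}nig's theorem. Since $B_1\cup A_2$ is a minimum vertex cover and $H$ has no perfect matching, $|B_1|+|A_2|$ equals the maximum matching size, which is at most $n-1$. Hence $|B_2|=n-|B_1|\ge |A_2|+1\ge 1$. This is precisely the global deficiency $|N(A_1)|\le|B_1|<|A_1|$ from your second paragraph, obtained without appealing to the output allocation.
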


\begin{lemma}\label{lem:NewCondition}
The output $(x,p)$ of Algorithm~\ref{alg:VYalgorithm} satisfies $\sum_{j\in[n]: u_{i,j}=0} p_{j} x_{i,j} = 0$ for all $i \in [n]$.
\end{lemma}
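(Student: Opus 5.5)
The statement to prove is Lemma~\ref{lem:NewCondition}: in the output of Algorithm~\ref{alg:VYalgorithm}, no agent spends money on goods she values at zero. The algorithm is not reproduced here, so the plan rests on its known structure in the bi-valued setting. After normalization, each agent has utilities in \(\{0,1\}\). Call the goods with utility \(1\) her \emph{good set} \(S_i\), and the others her zero-goods.

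The Vazirani--Yannakakis algorithm works roughly as follows. It first computes a maximum matching in the bipartite graph of agent--good edges with utility \(1\), and uses a Dulmage--Mendelsohn-type decomposition. Agents that can be perfectly matched inside the tight part receive goods at price zero or at prices set by the subsequent phases. Agents left over, which are not matchable to their good sets, are eventually filled up with goods priced at zero. Given this structure, the proof strategy is to track, through each phase, which goods an agent may receive and at what price.

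The key steps, in order, are:

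\begin{enumerate}
\item \textbf{Zero-goods are only assigned at price zero.} Show that whenever the algorithm assigns a zero-good to agent \(i\), that good belongs to the set of goods whose final price is \(0\). In the algorithm, "filler" allocations to complete an agent's unit demand come from the zero-price set (the surplus goods of the Dulmage--Mendelsohn decomposition). Positively priced goods are only allocated along utility-\(1\) edges in the flow/equality subgraph.
\item \textbf{Per-edge vanishing.} From step 1, every pair with \(u_{i,j}=0\) and \(x_{i,j}>0\) has \(p_j=0\). So each term \(p_j x_{i,j}\) in the sum vanishes, giving the claim.
\item \textbf{Robustness via equilibrium reasoning.} If the algorithm's internal description is not granular enough for step 1, argue from the equilibrium conditions instead. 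Suppose agent \(i\) buys a zero-good \(j\) with \(p_j>0\).
\begin{enumerate}
\item If some utility-\(1\) good is cheap enough, then optimality of \(i\)'s linear program lets her shift mass from \(j\) to a zero-price good.
\item By Lemma~\ref{lem:ZeroPrice}, a zero-price good exists.
\item This shift keeps her utility and loosens her budget.
\end{enumerate}
The reallocation can be absorbed by swapping with an agent holding the zero-price good, who is then indifferent. The concern is that such a swap might change the output, so this route only shows that some equilibrium has the property, not necessarily the algorithm's output. For that reason I would rely on step 1.
\end{enumerate}

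The main obstacle is step 1: verifying, from the phase structure of Algorithm~\ref{alg:VYalgorithm}, that every allocation along a zero-utility edge uses only goods in the final price-zero class. I would check this invariant phase by phase. Prices only rise on goods in tight (overdemanded) sets, and in those sets allocations are made exclusively along the equality graph of utility-\(1\) edges. Leftover demand is always completed from the undemanded components, whose prices remain zero at termination.
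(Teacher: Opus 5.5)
Your proposal is correct and follows the paper's route; the equilibrium-based fallback in your step 3 is not needed. Your step 1 invariant is exactly what the paper verifies phase by phase: agents in $A_2$ receive matched utility-$1$ goods from $B_2$, whose price is $0$, while agents in $A_1$ receive positively priced goods only from their neighbors in tight sets $S^*\subseteq B_1$ and are topped up with unmatched goods of $B_2$, again at price $0$, so every zero-utility purchase costs nothing.
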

The proofs of these two lemmas are deferred to Appendix~\ref{apx:lemmas}.


\section{Algorithm}
\label{sec:algorithm}

We now describe our algorithm for general multi-valued HZ markets. Throughout this section, we assume that utilities are normalized as in Section~\ref{sec:preliminaries}: for every agent \(i\),
\[
0=u_i^{(0)} < u_i^{(1)} < \cdots < u_i^{(t_i)}=1,
\]
where \(\{u_i^{(1)},\ldots,u_i^{(t_i)}\}\) is the set of distinct {\em positive} utility values of agent \(i\). We also assume \(t_i\ge 1\) for every agent, as justified in Lemma~\ref{lem:OneNonzeroUtility}.

For any \(\delta>0\), the algorithm outputs an \(\epsilon\)-approximate HZ equilibrium \((x,p)\) with
\[
\epsilon\le\max_{i\in[n]}\left(1-\frac{1}{t_i-k_i^*+1}\right)^{t_i-k_i^*+1}+\delta,
\]
where \(k_i^*\) is defined below. The additive term \(\delta\) comes only from rounding fractional weights into integer multiplicities.

\paragraph{Stratification Weights.}
For each agent \(i\) and each \(k\in[t_i]\), define
\[
\alpha_{i,k}=\frac{u_i^{(k)}-u_i^{(k-1)}}{u_i^{(k)}}.
\]
Let \(k_i^*\) be the largest index \(k'\in[t_i]\) such that $\sum_{k=k'}^{t_i} \alpha_{i,k}=\sum_{k=k'}^{t_i}\frac{u_i^{(k)}-u_i^{(k-1)}}{u_i^{(k)}}\ge 1$.
This index is always well-defined, since \(\alpha_{i,1}=1\). If \(t_i=1\), then \(k_i^*=1\). If \(t_i\ge 2\), then \(1\le k_i^*\le t_i-1\), because $\alpha_{i,t_i}=\frac{u_i^{(t_i)}-u_i^{(t_i-1)}}{u_i^{(t_i)}}=1-u_i^{(t_i-1)}<1$.

We next define weights \(\lambda_{i,k}\), which determine how many copies of agent \(i\) are placed at threshold level \(u_i^{(k)}\):
\[
\lambda_{i,k}=\begin{cases}
0,
& \text{if } k<k_i^*, \\[0.6em]
1-\displaystyle\sum_{r=k_i^*+1}^{t_i}
\frac{u_i^{(r)}-u_i^{(r-1)}}{u_i^{(r)}},
& \text{if } k=k_i^*, \\[1.2em]
\displaystyle\frac{u_i^{(k)}-u_i^{(k-1)}}{u_i^{(k)}},
& \text{if } k>k_i^*.
\end{cases}
\]
By the maximality of \(k_i^*\), we have $\sum_{k=k_i^*+1}^{t_i}\alpha_{i,k}<1$ and $\sum_{k=k_i^*}^{t_i}\alpha_{i,k}\ge 1$.
Hence \(\lambda_{i,k}\ge 0\) for all \(k\), and $\sum_{k=1}^{t_i}\lambda_{i,k}=1$.

\paragraph{Rounding the Weights.}
Ideally, we would like to create exactly \(\lambda_{i,k}W\) copies of agent \(i\) at level \(k\), where \(W\) is a common scaling parameter. Since these numbers may not be integral, we use the following largest-remainder rounding procedure.

\begin{algorithm}[!ht]
\caption{Largest-remainder rounding of \(\lambda_{i,k}W\)}
\label{alg:1}
\begin{algorithmic}[1]
\REQUIRE An integer \(W\), and weights \(\lambda_i=(\lambda_{i,k})_{k\in[t_i]}\).
\ENSURE Integers \((n_{i,k})_{k\in[t_i]}\) with \(\sum_{k=1}^{t_i}n_{i,k}=W\).
\STATE \(n'_{i,k}\leftarrow \lfloor \lambda_{i,k}W\rfloor\), for all \(k\in[t_i]\).
\STATE \(r_{i,k}\leftarrow \lambda_{i,k}W-n'_{i,k}\), for all \(k\in[t_i]\).
\STATE \(R\leftarrow W-\sum_{k=1}^{t_i}n'_{i,k}\).
\STATE Sort the indices \(k\in[t_i]\) in descending order of \(r_{i,k}\), breaking ties arbitrarily.
\STATE For the first \(R\) indices in this order, set \(n_{i,k}\leftarrow n'_{i,k}+1\).
\STATE For all remaining indices, set \(n_{i,k}\leftarrow n'_{i,k}\).
\RETURN \((n_{i,k})_{k\in[t_i]}\).
\end{algorithmic}
\end{algorithm}

The rounding guarantees
\[
\sum_{k=1}^{t_i} n_{i,k}=W\qquad\text{and}\qquad\left|\frac{n_{i,k}}{W}-\lambda_{i,k}\right|\le \frac{1}{W}\quad \forall k\in[t_i].
\]
In the main algorithm we set
\[
W=\left\lceil \frac{n}{\delta}\right\rceil.
\]
Since \(t_i\le n\) for all \(i\), the total rounding error for each agent is bounded by \(t_i/W\le \delta\).

\paragraph{The Expanded Bi-Valued Market.}
We now construct a bi-valued HZ market \(\widetilde M\).
For each original agent \(i\), and for every \(k\in[t_i]\), create \(n_{i,k}\) copies
\[
i^{(k,1)}, i^{(k,2)}, \ldots, i^{(k,n_{i,k})}.
\]
Thus each original agent gives rise to exactly \(W\) agents in the expanded market.

For each original good \(j\), create \(W\) copies. For bookkeeping, we index these copies as
\[
j^{(\ell,b)},\qquad\ell\in[t_j],\; b\in[n_{j,\ell}],
\]
so that the total number of copies is
\[
\sum_{\ell=1}^{t_j} n_{j,\ell}=W.
\]
Here the use of \(t_j\) and \(n_{j,\ell}\) is only a convenient indexing device, since agents and goods are both indexed by \([n]\). The construction only requires that each original good has \(W\) identical copies.

For each threshold level \(k\) of agent \(i\), define the corresponding binary utility for original good \(j\) by
\[
\widehat u_{i,k,j}=\begin{cases}
1, & \text{if } u_{i,j}\ge u_i^{(k)},\\
0, & \text{otherwise.}
\end{cases}
\]
Then the utility of expanded agent \(i^{(k,a)}\) for copy \(j^{(\ell,b)}\) is $\widetilde u_{i^{(k,a)},j^{(\ell,b)}}=\widehat u_{i,k,j}$.
That is, all copies of the same original good are identical from the perspective of every expanded agent.

The resulting market \(\widetilde M\) has \(nW\) agents and \(nW\) goods, and every expanded agent has utilities only in \(\{0,1\}\). Hence \(\widetilde M\) is a bi-valued HZ market. By Theorem~\ref{thm:bivalue}, we can compute an exact HZ equilibrium \((x^*,p^*)\) of \(\widetilde M\) in strongly polynomial time in the size of \(\widetilde M\).

We then aggregate this exact equilibrium back to the original market. For each original good \(j\), define
\[
p_j=\min_{\ell\in[t_j],\, b\in[n_{j,\ell}]}p^*_{j^{(\ell,b)}}.
\]
For each original agent \(i\) and original good \(j\), define
\[
x_{i,j}=
\frac{1}{W}
\sum_{k=1}^{t_i}
\sum_{\ell=1}^{t_j}
\sum_{a=1}^{n_{i,k}}
\sum_{b=1}^{n_{j,\ell}}
x^*_{i^{(k,a)},j^{(\ell,b)}}.
\]
This aggregation preserves exact supply feasibility and exact unit demand. Indeed, for every good \(j\),
\[
\sum_{i\in[n]}x_{i,j}=\frac{1}{W}\sum_{\ell=1}^{t_j}\sum_{b=1}^{n_{j,\ell}}\sum_{i,k,a}x^*_{i^{(k,a)},j^{(\ell,b)}}=\frac{1}{W}\cdot W=1,
\]
and for every agent \(i\),
\[
\sum_{j\in[n]}x_{i,j}=\frac{1}{W}\sum_{k=1}^{t_i}\sum_{a=1}^{n_{i,k}}\sum_{j,\ell,b}x^*_{i^{(k,a)},j^{(\ell,b)}}=\frac{1}{W}\sum_{k=1}^{t_i}n_{i,k}=1.
\]

\begin{algorithm}[!ht]
\caption{Computing an \(\epsilon\)-approximate HZ equilibrium}
\label{alg:2}
\begin{algorithmic}[1]
\REQUIRE A normalized HZ market \(M=(n,(u_{i,j})_{i,j\in[n]})\) and \(\delta>0\).
\ENSURE An allocation \(x=(x_{i,j})_{i,j\in[n]}\) and a price vector \(p=(p_j)_{j\in[n]}\).
\STATE For each agent \(i\), compute the distinct utility levels
\[
0=u_i^{(0)}<u_i^{(1)}<\cdots<u_i^{(t_i)}=1.
\]
\STATE Compute \(k_i^*\) and \(\lambda_{i,k}\) for all \(i\in[n]\) and \(k\in[t_i]\).
\STATE Set \(W\leftarrow \lceil n/\delta\rceil\).
\STATE For each agent \(i\), compute integers \((n_{i,k})_{k\in[t_i]}\) using Algorithm~\ref{alg:1}.
\STATE Construct the expanded bi-valued HZ market \(\widetilde M\) with \(nW\) agents and \(nW\) goods.
\STATE Compute an exact HZ equilibrium \((x^*,p^*)\) of \(\widetilde M\) using Algorithm~\ref{alg:VYalgorithm}.
\STATE For every original good \(j\), set
\[
p_j \leftarrow\min_{\ell\in[t_j],\, b\in[n_{j,\ell}]}p^*_{j^{(\ell,b)}}.
\]
\STATE For every \(i,j\in[n]\), set
\[
x_{i,j}
\leftarrow
\frac{1}{W}
\sum_{k=1}^{t_i}
\sum_{\ell=1}^{t_j}
\sum_{a=1}^{n_{i,k}}
\sum_{b=1}^{n_{j,\ell}}
x^*_{i^{(k,a)},j^{(\ell,b)}}.
\]
\RETURN \((x,p)\).
\end{algorithmic}
\end{algorithm}

Since \(W=\lceil n/\delta\rceil\), the expanded market has \(nW=O(n^2/\delta)\) agents and goods. Thus the running time is polynomial in \(n\) and \(1/\delta\). In particular, choosing \(\delta=1/(2en)\) yields a polynomial-time algorithm with additive error strictly below \(1/e\).

We illustrate the mechanics of this market construction framework via a concrete four-agent, four-good example in Figure~\ref{fig:market_construction_example}.
\begin{figure}[!ht]
    \centering
    $\begin{array}{c}
    \textbf{Agent } i\textbf{'s Utility Profile} \\[0.5em]
    u_i = \left[ \begin{array}{c:c:c:c}
    u_i^{(0)}=0 & u_i^{(2)} & u_i^{(1)} & u_i^{(3)}=1 \\
    \text{\small (Good 1)} & \text{\small (Good 2)} & \text{\small (Good 3)} & \text{\small (Good 4)}
    \end{array} \right] \\[0.8em]
    \text{where } 0 = u_i^{(0)} < u_i^{(1)} < u_i^{(2)} < u_i^{(3)} = 1, \text{and } t_i = 3. \\[1.5em]

    \Downarrow \quad \textbf{Utility Stratified Matrix} \\[1.5em]

    \begin{array}{r@{\quad}c}
    \begin{array}{l}
    k=1 \ (n_{i,1} \text{ copies}) \rightarrow \\
    k=2 \ (n_{i,2} \text{ copies}) \rightarrow \\
    k=3 \ (n_{i,3} \text{ copies}) \rightarrow
    \end{array} &
    \left[ \begin{array}{c:c:c:c}
    0 & 1 & 1 & 1 \\
    0 & 1 & 0 & 1 \\
    0 & 0 & 0 & 1
    \end{array} \right]
    \end{array} \\[1.5em]
\\
    \Downarrow \quad \textbf{Market Expansion} \\[1.5em]

    \begin{array}{l}
    \text{For agent } i,  W = n_{i,1} + n_{i,2} + n_{i,3}. \quad \text{Each original good } j \text{ is replicated into } W \text{ copies.} \\
    \text{The expanded } W \times 4W \text{ sub-matrix } \mathbf{M}_i \text{ for Agent } i \text{ is defined as:}
    \end{array} \\[1em]

    \begin{array}{r@{\quad}c}
    \begin{array}{l}
    u_{i^{(1, a)}, j^{(l, b)}} \rightarrow \\
    u_{i^{(2, a)}, j^{(l, b)}} \rightarrow \\
    u_{i^{(3, a)}, j^{(l, b)}} \rightarrow
    \end{array} &
    \left[ \begin{array}{c:c:c:c}
    \mathbf{0}_{n_{i,1} \times W} & \mathbf{1}_{n_{i,1} \times W} & \mathbf{1}_{n_{i,1} \times W} & \mathbf{1}_{n_{i,1} \times W} \\ \hdashline
    \mathbf{0}_{n_{i,2} \times W} & \mathbf{1}_{n_{i,2} \times W} & \mathbf{0}_{n_{i,2} \times W} & \mathbf{1}_{n_{i,2} \times W} \\ \hdashline
    \mathbf{0}_{n_{i,3} \times W} & \mathbf{0}_{n_{i,3} \times W} & \mathbf{0}_{n_{i,3} \times W} & \mathbf{1}_{n_{i,3} \times W}
    \end{array} \right]
    \end{array} \\[0.5em]
    \text{\small where } \mathbf{0} \text{\small{} and } \mathbf{1} \text{\small{} denote the all-zero and all-one matrices of corresponding dimensions.} \\[1.5em]

    \Downarrow \quad \textbf{Global Market Consolidation} \\[1.5em]

    \mathcal{U}_{\text{New Market}} =
    \left[ \begin{array}{c}
    \mathbf{M}_1 \\ \hdashline
    \mathbf{M}_2 \\ \hdashline
    \mathbf{M}_3 \\ \hdashline
    \mathbf{M}_4
    \end{array} \right]
    \begin{array}{l}
    \left. \vphantom{\mathbf{M}_1} \right\} \text{\small Agent 1's } W \times 4W \text{ matrix} \\
    \left. \vphantom{\mathbf{M}_2} \right\} \text{\small Agent 2's } W \times 4W \text{ matrix} \\
    \left. \vphantom{\mathbf{M}_3} \right\} \text{\small Agent 3's } W \times 4W \text{ matrix} \\
    \left. \vphantom{\mathbf{M}_4} \right\} \text{\small Agent 4's } W \times 4W \text{ matrix}
    \end{array}
    \end{array}$
    \caption{Illustration of the utility-stratification construction. Each utility level of an original agent induces a binary threshold type, and each original good is replicated \(W\) times. The resulting expanded market is bi-valued.}
    \label{fig:market_construction_example}
  \end{figure}

Observe that the new HZ market constructed in this way is bi-valued: for every \(i \in [n], k \in [t_i], a \in [n_{i,k}]\), the utilities of agent \(i^{(k,a)}\) satisfy \(u_{i^{(k,a)}} \in \{0, 1\}\). By Theorem~\ref{thm:bivalue}, we can compute an exact HZ equilibrium \((x^*, p^*)\) for this new HZ market in polynomial time.
We formally state our main result below. The proof is deferred to Section~\ref{sec:proof}.
\begin{theorem}[Main Result]
\label{thm:MainResult}
Let \(M=(n,(u_{i,j})_{i,j\in[n]})\) be a normalized HZ market. For each agent \(i\), let
\[
0=u_i^{(0)}<u_i^{(1)}<\cdots<u_i^{(t_i)}=1
\]
be the distinct utility levels of agent \(i\), and assume \(t_i\ge 1\) for all \(i\in[n]\). Define
\[
k_i^*=\max\left\{k'\in[t_i]:\sum_{k=k'}^{t_i}\frac{u_i^{(k)}-u_i^{(k-1)}}{u_i^{(k)}}\ge 1\right\}.
\]
Then, for every \(\delta>0\), Algorithm~\ref{alg:2}, with
$
W=\left\lceil \frac{n}{\delta}\right\rceil$,
computes in time polynomial in \(n\) and \(1/\delta\) a pair \((x,p)\) that is an \(\bar\epsilon\)-approximate HZ equilibrium of \(M\), where
\[
\bar\epsilon=\max_{i\in[n]}\left(1-\frac{1}{t_i-k_i^*+1}\right)^{t_i-k_i^*+1}+\delta.
\]
In particular, by choosing
$\delta=\frac{1}{2en}$,
Algorithm~\ref{alg:2} runs in polynomial time in \(n\) and outputs an \(\epsilon\)-approximate HZ equilibrium with \(\epsilon<1/e\).
\end{theorem}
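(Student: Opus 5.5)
The plan is to verify the five conditions of Definition~\ref{def:AHZE} for the aggregated pair \((x,p)\) one at a time. The running time is already settled: \(\widetilde M\) has \(nW=O(n^2/\delta)\) agents and goods, and Theorem~\ref{thm:bivalue} applies to it.

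\textbf{Conditions (a), (b), (e) and (c).} Conditions (a) and (b) were verified right after the aggregation formulas. For (e), Lemma~\ref{lem:ZeroPrice} gives a copy \(j^{(\ell,b)}\) with \(p^*_{j^{(\ell,b)}}=0\), so \(p_j=0\). For (c), the key fact is \(p_j\le p^*_{j^{(\ell,b)}}\) for every copy, by definition of \(p_j\) as a minimum. Hence
\[
\sum_j p_jx_{i,j}\le \frac1W\sum_{k,a}\sum_{j,\ell,b}p^*_{j^{(\ell,b)}}x^*_{i^{(k,a)},j^{(\ell,b)}}\le \frac1W\cdot W=1,
\]
because every expanded agent respects her budget in \((x^*,p^*)\). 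So the budget condition in fact holds exactly.

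\textbf{Condition (d): two comparisons.} Let \(\mathrm{OPT}_{i^{(k)}}\) be the optimal value of a level-\(k\) copy of \(i\) at prices \(p^*\). This is the maximum mass she can put on goods with \(u_{i,j}\ge u_i^{(k)}\). I will use Lemma~\ref{lem:SameAgent} so that all copies at the same level attain this value, and Lemma~\ref{lem:SameGood} to reason about copy prices. Two facts follow.

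\emph{Upper bound on \(\mathrm{value}_p(i)\).} Any bundle \(y\) feasible for \(i\) at \(p\) can be replicated by a level-\(k\) copy by buying the cheapest copy of each good. The copy pays exactly \(\sum_j p_jy_j\le 1\). Hence \(y\)'s mass on \(\{j:u_{i,j}\ge u_i^{(k)}\}\) is at most \(\mathrm{OPT}_{i^{(k)}}\). By the threshold decomposition of \(u_{i,j}\),
\[
\mathrm{value}_p(i)\le\sum_k (u_i^{(k)}-u_i^{(k-1)})\,\mathrm{OPT}_{i^{(k)}}.
\]

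\emph{Lower bound on the utility of \(x_i\).} Each level-\(k\) copy puts mass \(\mathrm{OPT}_{i^{(k)}}\) on goods worth at least \(u_i^{(k)}\) to \(i\). Therefore
\[
\sum_j u_{i,j}x_{i,j}\ge\sum_k \frac{n_{i,k}}{W}u_i^{(k)}\mathrm{OPT}_{i^{(k)}}\ge \sum_k\lambda_{i,k}u_i^{(k)}\mathrm{OPT}_{i^{(k)}}-\frac{t_i}{W},
\]
and the rounding term satisfies \(t_i/W\le\delta\).

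Subtracting, the loss of agent \(i\) is at most
\[
\sum_k c_k\,\mathrm{OPT}_{i^{(k)}}+\delta,\qquad c_k=(u_i^{(k)}-u_i^{(k-1)})-\lambda_{i,k}u_i^{(k)}.
\]

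\textbf{Bounding the coefficient sum (main obstacle).} The remaining work is to bound this sum uniformly over all possible equilibria. By the definition of \(\lambda_{i,k}\):
\begin{itemize}
\item for \(k>k_i^*\), \(c_k=0\);
\item for \(k<k_i^*\), \(c_k=u_i^{(k)}-u_i^{(k-1)}\ge0\);
\item \(c_{k_i^*}\) may have either sign.
\end{itemize}
Since \(\mathrm{OPT}_{i^{(k)}}\in[0,1]\) is nonincreasing in \(k\), the worst case is a \(0/1\) step profile. Using \(\sum_{k\le k'}(u_i^{(k)}-u_i^{(k-1)})=u_i^{(k')}\), the loss is at most
\[
\max\Bigl(u_i^{(k_i^*-1)},\;u_i^{(k_i^*)}(1-\lambda_{i,k_i^*})\Bigr).
\]

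Now set \(m=t_i-k_i^*+1\), \(r_k=u_i^{(k-1)}/u_i^{(k)}\) and \(s=\sum_{k>k_i^*}(1-r_k)<1\). Then \(1-\lambda_{i,k_i^*}=s\), and \(u_i^{(k_i^*)}=\prod_{k>k_i^*}r_k\). AM--GM over these \(m-1\) ratios gives
\[
u_i^{(k_i^*)}s\le s\Bigl(1-\frac{s}{m-1}\Bigr)^{m-1}.
\]
Maximizing the right-hand side over \(s\) (the maximum is at \(s=(m-1)/m\)) gives exactly \((1-1/m)^m\).

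For the other term, \(\alpha_{i,k_i^*}\ge1-s\) gives \(u_i^{(k_i^*-1)}=u_i^{(k_i^*)}(1-\alpha_{i,k_i^*})\le s\,u_i^{(k_i^*)}\), which obeys the same bound. The case \(t_i=1\) (so \(m=1\)) is trivial, since then the bound is \(0\).

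\textbf{Conclusion.} Each agent's loss is at most \((1-1/m)^m+\delta\), which yields \(\bar\epsilon\). Since \((1-1/m)^m<1/e\) for every \(m\ge 1\), choosing \(\delta=1/(2en)\) finishes the proof once one checks that the maximum of \((1-1/m)^m\) over the admissible \(m\le n\) is at most \(1/e-\delta\).
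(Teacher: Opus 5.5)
Your verification of (a), (b), (c), (e) and your per-agent bound \(\mathrm{value}_p(i)-\sum_j u_{i,j}x_{i,j}\le(1-\frac1m)^m+\delta\), with \(m=t_i-k_i^*+1\), are correct and follow the paper's route. That route is:
\begin{itemize}
\item the threshold decomposition;
\item upper and lower bounds through the threshold optima;
\item a worst case over nonincreasing \(\mathrm{OPT}\) profiles (your step-profile argument is the paper's summation by parts);
\item AM--GM (you fix \(s\) and bound the product; the paper fixes the geometric mean \(a\) and bounds the sum).
\end{itemize}
Defining \(\mathrm{OPT}_{i^{(k)}}\) at \(p^*\) and arguing by replication is slightly leaner than the paper's price equalization via Lemma~\ref{lem:SameGood}. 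In your version neither symmetrization lemma is needed, since every copy is individually optimal at \(p^*\). Also, contrary to your remark, \(c_{k_i^*}=s\,u_i^{(k_i^*)}-u_i^{(k_i^*-1)}\ge0\) by the definition of \(k_i^*\); this is your own last inequality. So all \(c_k\ge0\), and directly \(\sum_k c_k\mathrm{OPT}_{i^{(k)}}\le\sum_k c_k=s\,u_i^{(k_i^*)}\).

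The missing step is the last one, for the ``in particular'' claim, which you leave as ``once one checks''. It is not a formality. We have \(\frac1e-(1-\frac1m)^m=\frac1{2em}+O(m^{-2})\). So at \(m=n\), which the crude bound \(m\le t_i\le n\) permits, the choice \(\delta=\frac1{2en}\) cancels the first-order gap exactly, and only the second-order term decides. Moreover, ``at most \(1/e-\delta\)'' would give only \(\epsilon\le1/e\), whereas the theorem claims \(\epsilon<1/e\).

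The paper closes this by proving \(\frac1e-(1-\frac1N)^N>\frac1{2eN}\) for all \(N\ge1\): with \(a=1/N\), the function \(g(a)=\ln(1-a)-a\ln(1-\frac a2)+a\) is decreasing, hence negative. It then uses \(N_i\le n\).

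A shorter fix is available to you. Normalization forces a good with \(u_{i,j}=0\), so \(m\le t_i\le n-1\). Then \(\ln(1-x)\le-x-\frac{x^2}2\) together with \(1-e^{-y}\ge y-\frac{y^2}2\) give \(\frac1e-(1-\frac1m)^m\ge\frac1e\bigl(\frac1{2m}-\frac1{8m^2}\bigr)>\frac1{2e(m+1)}\ge\frac1{2en}\). This yields the strict bound \(\epsilon<1/e\).
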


\paragraph{A Tight Example.}
Before proving correctness, we present a family of instances showing that the approximation analysis of our algorithm is tight. Let \(t\ge 2\), and set $q=1-\frac{1}{t}$.
Consider a \(2t\times2t\) utility matrix of the block-diagonal form
\[
U=\begin{pmatrix}
C_t & \mathbf{0}_{t\times t}\\
\mathbf{0}_{t\times t} & C_t
\end{pmatrix},
\]
where \(C_t=(c_{r,s})_{r,s\in[t]}\) is the circulant matrix defined by
\[
c_{r,s}=q^{(s-r)\bmod t},
\]
with the residue \((s-r)\bmod t\) taken in \(\{0,1,\ldots,t-1\}\). Equivalently,
\[
C_t=\begin{pmatrix}
1 & q & \cdots & q^{t-1}\\
q^{t-1} & 1 & \cdots & q^{t-2}\\
\vdots & \vdots & \ddots & \vdots\\
q & q^2 & \cdots & 1
\end{pmatrix}.
\]
The first \(t\) agents have positive utilities only for the first \(t\) goods, and the last \(t\) agents have positive utilities only for the last \(t\) goods.

For every agent \(i\in[2t]\), the set of positive utility values is
\[
1,q,q^2,\ldots,q^{t-1}.
\]
After sorting increasingly, we have
\[
u_i^{(k)}=q^{t-k},\qquad k=1,\ldots,t.
\]
Therefore,
\[
\sum_{k=k'}^{t}\frac{u_i^{(k)}-u_i^{(k-1)}}{u_i^{(k)}}=\begin{cases}
1+\dfrac{t-1}{t}, & k'=1,\\[1em]
\dfrac{t-k'+1}{t}, & k'>1.
\end{cases}
\]
Hence \(k_i^*=1\) for every \(i\in[2t]\), and the corresponding weights are
\[
\lambda_{i,k}=\frac{1}{t},
\qquad \forall i\in[2t],\; k\in[t].
\]

To isolate the aggregation loss from the rounding loss, take \(W=t\). Then \(n_{i,k}=1\) for all \(i\) and \(k\), so no rounding is needed. Consider the expanded bi-valued market. There exists an exact HZ equilibrium of the expanded market in which all prices are zero and the allocation is defined as follows.

For each block \(h\in\{0,1\}\), local agent index \(r\in[t]\), and threshold level \(k\in[t]\), let the expanded agent
\[
(ht+r)^{(k,1)}
\]
receive the good copy
\[
(ht+s)^{(k,1)},
\qquad
s=((r-k-1)\bmod t)+1.
\]
For each fixed \(k\) and block \(h\), this rule is a permutation of the \(t\) goods in that block. Hence every expanded good is allocated exactly once. Moreover, every expanded agent receives a good for which her bi-valued utility is \(1\), which is optimal when all prices are zero. Therefore this allocation, together with zero prices, is an exact HZ equilibrium of the expanded market.

Aggregating back to the original market gives
\[
p_j=0,\qquad \forall j\in[2t],
\]
and
\[
x_{i,j}=\begin{cases}
\dfrac{1}{t}, & \text{if } i \text{ and } j \text{ belong to the same block},\\[0.8em]
0, & \text{otherwise.}
\end{cases}
\]
Since all prices are zero, each agent can obtain utility \(1\) in her demand problem, and hence
\[
\mathrm{value}_p(i)=1,\qquad \forall i\in[2t].
\]
The utility actually obtained by each agent is
\[
\sum_{j=1}^{2t}u_{i,j}x_{i,j}=\frac{1}{t}\sum_{m=0}^{t-1}q^m=1-q^t.
\]
Thus the additive utility loss is exactly
\[
q^t=\left(1-\frac{1}{t}\right)^t.
\]
Since \(t_i=t\) and \(k_i^*=1\) for all \(i\), the main loss term in our upper bound is
\[
\max_{i\in[2t]}\left(1-\frac{1}{t_i-k_i^*+1}\right)^{t_i-k_i^*+1}=\left(1-\frac{1}{t}\right)^t.
\]
Therefore the example matches the aggregation-loss term in the analysis. The additional \(+\delta\) term in the theorem accounts only for rounding when the chosen value of \(W\) does not make all \(\lambda_{i,k}W\) integral.


\section{Proof of Correctness}
\label{sec:proof}
Let \((x^*,p^*)\) be the exact HZ equilibrium computed for the expanded bi-valued market \(\widetilde M\). For each original good \(j\), let
\[
  \mathcal C_j=\{j^{(\ell,b)}:\ell\in[t_j],\; b\in[n_{j,\ell}]\}
\]
denote the set of its \(W\) copies. For each original agent \(i\) and threshold level \(k\), let
\[
  \mathcal A_{i,k}=\{i^{(k,a)}:a\in[n_{i,k}]\}
\]
denote the corresponding group of expanded agents. We also write
\[
  \lambda'_{i,k}=\frac{n_{i,k}}{W}.
\]
By Algorithm~\ref{alg:1},
\[
  \sum_{k=1}^{t_i}\lambda'_{i,k}=1\qquad\text{and}\qquad|\lambda'_{i,k}-\lambda_{i,k}|\le \frac{1}{W}.
\]

We first record two symmetrization lemmas. The first shows that, among identical copies of the same original good, we may replace all prices by the minimum price without changing equilibrium.

\begin{lemma}
\label{lem:SameGood}
Consider an HZ market and let \(G_0\) be a set of goods that are indistinguishable to all agents; that is, for every agent \(i\) and all \(j_1,j_2\in G_0\),
\[
u_{i,j_1}=u_{i,j_2}.
\]
If \((x,p)\) is an HZ equilibrium, define \(p'\) by
\[
p'_j=
\begin{cases}
\min_{g\in G_0}p_g, & j\in G_0,\\
p_j, & j\notin G_0.
\end{cases}
\]
Then \((x,p')\) is also an HZ equilibrium.
\end{lemma}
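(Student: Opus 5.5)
The plan is to check conditions (a)--(d) of Definition~\ref{def:HZE} for \((x,p')\) directly. The main step is to show that lowering the prices in \(G_0\) to their common minimum does not raise any agent's optimal value. Let \(m=\min_{g\in G_0}p_g\), and fix a good \(g_0\in G_0\) with \(p_{g_0}=m\). Then \(p'_{g_0}=p_{g_0}\), and more generally \(p'_j\le p_j\) for every good \(j\).

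Conditions (a) and (b) involve only \(x\), so they carry over unchanged. For (c), since \(p'\le p\) componentwise and \(x\ge 0\), we have \(\sum_j p'_j x_{i,j}\le \sum_j p_j x_{i,j}\le 1\) for every agent \(i\). The same inequality shows that \(x_i\) is feasible in agent \(i\)'s linear program under \(p'\). It attains the utility \(\sum_j u_{i,j}x_{i,j}=\mathrm{value}_p(i)\), so \(\mathrm{value}_{p'}(i)\ge \mathrm{value}_p(i)\).

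For (d), it remains to prove \(\mathrm{value}_{p'}(i)\le \mathrm{value}_p(i)\). Take any bundle \(y\) feasible for agent \(i\) under \(p'\). Define \(y'\) by moving all of \(y\)'s mass on \(G_0\) onto \(g_0\):
\[
y'_{g_0}=\sum_{g\in G_0}y_g,\qquad y'_g=0 \text{ for } g\in G_0\setminus\{g_0\},\qquad y'_j=y_j \text{ for } j\notin G_0 .
\]
Then \(\sum_j y'_j=\sum_j y_j=1\). Because the goods in \(G_0\) are indistinguishable to agent \(i\), the utility of \(y'\) equals that of \(y\). Its cost under \(p\) is
\[
\sum_j p_j y'_j=\sum_{j\notin G_0}p_jy_j+m\sum_{g\in G_0}y_g=\sum_j p'_j y_j\le 1 .
\]
Hence \(y'\) is feasible under \(p\), and the utility of \(y\) is at most \(\mathrm{value}_p(i)\). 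Taking the supremum over \(y\) gives \(\mathrm{value}_{p'}(i)=\mathrm{value}_p(i)=\sum_j u_{i,j}x_{i,j}\), which is (d).

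I expect no serious obstacle. The only point needing care is the upper bound on \(\mathrm{value}_{p'}(i)\), and the transfer to the cheapest copy \(g_0\) handles it, since \(g_0\)'s price is the same under \(p\) and \(p'\).
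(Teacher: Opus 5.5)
Your proposal is correct and follows the paper's proof essentially step for step. Coordinatewise $p'\le p$ gives (a)--(c) and $\mathrm{value}_{p'}(i)\ge\mathrm{value}_p(i)$, and moving the $G_0$-mass of any $p'$-feasible bundle onto the cheapest copy gives the reverse inequality. The only cosmetic difference is that you get the lower bound from feasibility of $x_i$ itself rather than from inclusion of the feasible sets, which amounts to the same thing.
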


\begin{proof}
The allocation is unchanged, so market clearing and unit-demand feasibility remain valid. Since \(p'_j\le p_j\) for every good \(j\), every agent's expenditure weakly decreases; hence the budget constraints also remain valid.

It remains to prove optimality. Fix an agent \(i\). Since \(p'\le p\) coordinatewise, every bundle feasible under \(p\) is feasible under \(p'\). Therefore,
$\mathrm{value}_p(i)\le \mathrm{value}_{p'}(i)$.

We prove the reverse inequality.
Let \(y'\) be any feasible bundle for agent \(i\) under prices \(p'\). Let
$s=\sum_{j\in G_0}y'_j$
be the total amount of goods from \(G_0\) in this bundle. Choose \(j_0\in G_0\) such that \(p_{j_0}=\min_{g\in G_0}p_g\). Construct a new bundle \(y\) by moving all mass assigned to goods in \(G_0\) to \(j_0\):
\[
y_{j_0}=s,\qquad
y_j=0 \;\; \forall j\in G_0\setminus\{j_0\},\qquad
y_j=y'_j \;\; \forall j\notin G_0.
\]
Then \(\sum_j y_j=1\). Moreover,
\[
\sum_j p_jy_j=p_{j_0}s+\sum_{j\notin G_0}p_jy'_j=\sum_j p'_jy'_j\le 1.
\]
Thus \(y\) is feasible under \(p\). Since all goods in \(G_0\) give the same utility to agent \(i\), the utility of \(y\) equals the utility of \(y'\). Hence every utility attainable under \(p'\) is also attainable under \(p\), and so $\mathrm{value}_{p'}(i)\le \mathrm{value}_p(i)$.

Therefore \(\mathrm{value}_{p'}(i)=\mathrm{value}_p(i)\). Since \(x_i\) was optimal under \(p\), it remains optimal under \(p'\). Thus \((x,p')\) is an HZ equilibrium.
\end{proof}

In the expanded market, all copies of the same original good \(j\) are indistinguishable to every expanded agent. Applying Lemma~\ref{lem:SameGood} to each set \(\mathcal C_j\), we may assume without loss of generality that all copies of each original good have the same price. Thus, throughout the rest of the proof, we assume
\[
p^*_{g}=p_j
\qquad
\forall j\in[n],\; \forall g\in\mathcal C_j,
\]
where $p_j=\min_{g\in\mathcal C_j}p^*_g$
is exactly the price used by Algorithm~\ref{alg:2}.

The second symmetrization lemma shows that identical agents may be averaged without changing equilibrium.

\begin{lemma}
\label{lem:SameAgent}
Consider an HZ market and let \(A_0\) be a set of agents with identical utility functions. If \((x,p)\) is an HZ equilibrium, define \(x'\) by leaving all agents outside \(A_0\) unchanged and assigning every agent in \(A_0\) the average bundle
\[
\bar x=\frac{1}{|A_0|}\sum_{a\in A_0}x_a.
\]
That is,
\[
x'_a=\begin{cases}
\bar x, & a\in A_0,\\
x_a, & a\notin A_0.
\end{cases}
\]
Then \((x',p)\) is also an HZ equilibrium.
\end{lemma}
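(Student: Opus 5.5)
We verify the four conditions of Definition~\ref{def:HZE} for \((x',p)\) directly, using that every agent in \(A_0\) faces the same linear program, since the utilities are identical and the price vector \(p\) is shared.

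\emph{Market clearing (a).} For each good \(j\), agents outside \(A_0\) contribute exactly as before. The agents in \(A_0\) contribute
\[
|A_0|\,\bar x_j=\sum_{a\in A_0}x_{a,j},
\]
so column sums are unchanged and equal \(1\).

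\emph{Unit demand (b) and budget (c).} For \(a\in A_0\), we have \(\sum_j\bar x_j=\frac{1}{|A_0|}\sum_{a'\in A_0}\sum_j x_{a',j}=1\). Likewise, \(\sum_j p_j\bar x_j\) is the average of the expenditures \(\sum_j p_jx_{a',j}\le 1\), hence is at most \(1\). Agents outside \(A_0\) are untouched.

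\emph{Optimality (d).} This is the only step requiring an observation. Because all agents in \(A_0\) share the same utility vector \(u\) and face the same prices, they have the same LP. Hence \(\mathrm{value}_p(a)=V\) is a common value for every \(a\in A_0\). Each \(x_a\) attains \(V\), so by linearity \(\sum_j u_j\bar x_j\) is the average of these values and equals \(V\). Moreover, \(\bar x\) is feasible for that LP by (b) and (c). Equivalently, the optimal face of an LP is convex, so the average of optimal solutions is optimal. Thus \(\bar x\) is optimal for every \(a\in A_0\), and \((x',p)\) is an HZ equilibrium.

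No step is a real obstacle. The one point to state explicitly is that the value is common across \(A_0\), which follows because the prices are shared and the utilities are identical.
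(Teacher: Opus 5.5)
Your proposal is correct and follows essentially the same argument as the paper. Both verify that the column sums over \(A_0\) are unchanged. Both show that unit demand and the budget bound survive averaging. Both get optimality from the fact that agents in \(A_0\) share one linear program and the average of its optimal solutions is optimal. The only difference is that you also spell out explicitly that the common optimal value is attained by linearity.
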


\begin{proof}
Market clearing is preserved because the total allocation of the agents in \(A_0\) is unchanged:
\[
\sum_{a\in A_0}x'_a=\sum_{a\in A_0}\bar x=\sum_{a\in A_0}x_a.
\]
Each agent in \(A_0\) receives a bundle of total size one, since \(\bar x\) is an average of bundles of total size one. The budget constraint also holds because
\[
\sum_j p_j\bar x_j=\frac{1}{|A_0|}\sum_{a\in A_0}\sum_jp_jx_{a,j}\le 1.
\]

It remains to check optimality. All agents in \(A_0\) have the same utility-maximization problem under prices \(p\). Each original bundle \(x_a\), \(a\in A_0\), is an optimal solution to this common linear program. Since the set of optimal solutions of a linear program is convex, their average \(\bar x\) is also optimal. Therefore every agent in \(A_0\) remains utility-maximizing, and \((x',p)\) is an HZ equilibrium.
\end{proof}

For every nonempty group \(\mathcal A_{i,k}\), applying Lemma~\ref{lem:SameAgent} allows us to assume that all agents in this group receive the same bundle. For notational convenience, let \(q_{i,k,j}\) denote the total share of copies of original good \(j\) received by one representative agent in group \(\mathcal A_{i,k}\):
\[
q_{i,k,j}=\sum_{g\in\mathcal C_j}x^*_{i^{(k,a)},g},\qquad a\in[n_{i,k}],
\]
where the expression is independent of the choice of \(a\). If \(n_{i,k}=0\), we define \(q_{i,k}\) to be any optimal bundle for the threshold utility type \(i^{(k)}\) under prices \(p\); this choice does not affect the aggregated allocation because \(\lambda'_{i,k}=0\).

With this notation, the allocation returned by Algorithm~\ref{alg:2} can be written as
\[
x_{i,j}=\sum_{k=1}^{t_i}\lambda'_{i,k}q_{i,k,j},\qquad\forall i,j\in[n].
\]

\begin{lemma}
\label{lem:HZConditions}
The pair \((x,p)\) returned by Algorithm~\ref{alg:2} satisfies conditions (a), (b), (c), and (e) of Definition~\ref{def:AHZE}.
\end{lemma}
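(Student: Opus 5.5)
Conditions (a) and (b) are already verified by the computations displayed just before Algorithm~\ref{alg:2}. Those computations use only the facts that \(x^*\) clears every expanded good and gives every expanded agent exactly one unit, and that each original agent and each original good has exactly \(W\) copies. I would simply cite them. Alternatively, I would re-derive them from the representation \(x_{i,j}=\sum_k\lambda'_{i,k}q_{i,k,j}\), using \(\sum_j q_{i,k,j}=1\) and \(\sum_k\lambda'_{i,k}=1\). The symmetrizations via Lemma~\ref{lem:SameGood} and Lemma~\ref{lem:SameAgent} change neither row sums nor column sums, so they are harmless.

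For (c) I would show the stronger statement that the budget holds with no slack, \(\sum_j p_jx_{i,j}\le 1\). By Lemma~\ref{lem:SameGood}, the pair \((x^*,p^*)\), with every copy in \(\mathcal C_j\) repriced to \(p_j=\min_{g\in\mathcal C_j}p^*_g\), is still an exact equilibrium of \(\widetilde M\). Hence every expanded agent \(i^{(k,a)}\) satisfies
\[
\sum_j p_j\sum_{g\in\mathcal C_j}x^*_{i^{(k,a)},g}\le 1 .
\]
After averaging within \(\mathcal A_{i,k}\) by Lemma~\ref{lem:SameAgent}, this gives \(\sum_j p_jq_{i,k,j}\le1\) for every nonempty group. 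Then
\[
\sum_jp_jx_{i,j}=\sum_k\lambda'_{i,k}\sum_jp_jq_{i,k,j}\le\sum_k\lambda'_{i,k}=1 .
\]
Even without invoking the symmetrization lemmas, the same bound follows directly. The quantity \(\frac1W\sum_{k,a}\sum_j p_j\sum_{g\in\mathcal C_j}x^*_{i^{(k,a)},g}\) is at most the average of the true expenditures \(\sum_g p^*_gx^*_{i^{(k,a)},g}\le1\), because \(p_j\le p^*_g\) for every \(g\in\mathcal C_j\).

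For (e), I would use Lemma~\ref{lem:ZeroPrice}: the output \(p^*\) of Algorithm~\ref{alg:VYalgorithm} has some expanded good \(g_0\) with \(p^*_{g_0}=0\). If \(g_0\in\mathcal C_{j_0}\), then \(p_{j_0}=\min_{g\in\mathcal C_{j_0}}p^*_g\le 0\). Since all prices are nonnegative, \(p_{j_0}=0\), and so \(\min_jp_j=0\).

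The only subtle point, and the step I would handle most carefully, is consistency. The symmetrization is applied to the equilibrium actually computed, yet the algorithm's \(p\) is defined from the raw \(p^*\) and its \(x\) from the raw \(x^*\). I would address this by noting two facts. First, the repriced \(p'\) from Lemma~\ref{lem:SameGood} has exactly the value \(p_j\) on \(\mathcal C_j\). Second, the aggregated \(x_{i,j}\) is invariant under the averaging of Lemma~\ref{lem:SameAgent}, because it sums over all of \(\mathcal A_{i,k}\). So the \((x,p)\) returned by the algorithm coincides with the one obtained from the symmetrized equilibrium, and all four conditions transfer.
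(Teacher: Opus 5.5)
Your proposal is correct and follows essentially the same route as the paper: (a) and (b) by summing over the \(W\) copies, (c) by averaging the expanded agents' budget constraints once copy prices are equalized to \(p_j\), and (e) from Lemma~\ref{lem:ZeroPrice} together with \(p_j=\min_{g\in\mathcal C_j}p^*_g\). Two of your additions clarify points the paper leaves implicit: the direct bound for (c) via \(p_j\le p^*_g\), which avoids the symmetrization entirely, and your remark that the returned \((x,p)\) is unchanged by the symmetrizations.
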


\begin{proof}
We verify the four conditions one by one.

For condition (a), fix an original good \(j\). Since every copy \(g\in\mathcal C_j\) is fully allocated in the expanded equilibrium,
\[
\sum_{i=1}^n x_{i,j}=\frac{1}{W}\sum_{g\in\mathcal C_j}\sum_{i=1}^n\sum_{k=1}^{t_i}\sum_{a=1}^{n_{i,k}}x^*_{i^{(k,a)},g}=\frac{1}{W}\sum_{g\in\mathcal C_j}1=1.
\]

For condition (b), fix an original agent \(i\). Since every expanded agent receives one unit of goods,
\[
\sum_{j=1}^n x_{i,j}=\frac{1}{W}\sum_{k=1}^{t_i}\sum_{a=1}^{n_{i,k}}\sum_{j=1}^n\sum_{g\in\mathcal C_j}x^*_{i^{(k,a)},g}=\frac{1}{W}\sum_{k=1}^{t_i}n_{i,k}=1.
\]

For condition (c), using \(p^*_g=p_j\) for every \(g\in\mathcal C_j\), we obtain
\[
\begin{aligned}
\sum_{j=1}^n p_jx_{i,j}&=\frac{1}{W}\sum_{k=1}^{t_i}\sum_{a=1}^{n_{i,k}}\sum_{j=1}^n\sum_{g\in\mathcal C_j}p_jx^*_{i^{(k,a)},g}  \\
&=\frac{1}{W}\sum_{k=1}^{t_i}\sum_{a=1}^{n_{i,k}}\sum_{g}p^*_gx^*_{i^{(k,a)},g}\le\frac{1}{W}\sum_{k=1}^{t_i}n_{i,k}=1.
\end{aligned}
\]
Thus the budget constraint holds even without additive slack.

For condition (e), Lemma~\ref{lem:ZeroPrice} gives
$\min_g p^*_g=0$
in the expanded market. Since \(p_j=\min_{g\in\mathcal C_j}p^*_g\), we have
$\min_{j\in[n]}p_j=\min_g p^*_g=0$.
Therefore condition (e) holds.
\end{proof}

It remains to prove condition (d), namely that every agent's utility is within the claimed additive error of her optimal utility under prices \(p\).

For every agent \(i\), define
\[
w_i^{(k)}=u_i^{(k)}-u_i^{(k-1)},\qquad k\in[t_i].
\]
We also write \(u_{i^{(k)},j}\) for the binary threshold utility
\[
u_{i^{(k)},j}=\begin{cases}
1, & u_{i,j}\ge u_i^{(k)},\\
0, & u_{i,j}<u_i^{(k)}.
\end{cases}
\]
This notation refers to the threshold type \(k\) of original agent \(i\), whether or not \(n_{i,k}\) is positive.

\begin{lemma}
\label{lem:ExpressionOfUtility}
For every \(i,j\in[n]\),
\[
u_{i,j}=\sum_{k=1}^{t_i}w_i^{(k)}u_{i^{(k)},j}.
\]
\end{lemma}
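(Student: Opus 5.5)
The identity is a telescoping sum, and I will prove it directly from the definitions of the levels \(u_i^{(k)}\) and the threshold utilities \(u_{i^{(k)},j}\). Fix \(i,j\in[n]\). By the normalization of Section~\ref{sec:preliminaries}, \(u_{i,j}\) is one of the values \(u_i^{(0)}=0<u_i^{(1)}<\cdots<u_i^{(t_i)}=1\). Zero is included here because every normalized agent has \(\min_j u_{i,j}=0\), and the positive levels list all distinct positive values of \(u_i\). So there is a unique index \(m\in\{0,1,\ldots,t_i\}\) with \(u_{i,j}=u_i^{(m)}\).

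The next step is to determine the binary threshold utilities. Because the levels are strictly increasing, \(u_{i,j}=u_i^{(m)}\ge u_i^{(k)}\) holds exactly when \(k\le m\). It follows that \(u_{i^{(k)},j}=1\) for \(1\le k\le m\) and \(u_{i^{(k)},j}=0\) for \(m<k\le t_i\). Substituting into the right-hand side gives
\[
\sum_{k=1}^{t_i}w_i^{(k)}u_{i^{(k)},j}=\sum_{k=1}^{m}\bigl(u_i^{(k)}-u_i^{(k-1)}\bigr)=u_i^{(m)}-u_i^{(0)}=u_i^{(m)}=u_{i,j},
\]
which telescopes because \(u_i^{(0)}=0\).

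The only point that needs care is the case \(m=0\), where \(u_{i,j}=0\). There the sum is empty and equals \(0\), so the identity still holds. I also need to confirm that every value of \(u_{i,j}\) really does appear among the listed levels, which is exactly what the normalization in Section~\ref{sec:preliminaries} guarantees. Apart from this check, the argument is pure bookkeeping, and I do not expect any real obstacle.
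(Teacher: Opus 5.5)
Your proof is correct and follows the same telescoping argument as the paper: identify the level index \(m\) with \(u_{i,j}=u_i^{(m)}\), observe that the threshold utility equals \(1\) exactly for \(k\le m\), and sum the increments. Your explicit treatment of the case \(m=0\) and your check that every \(u_{i,j}\) appears among the listed levels are points the paper leaves implicit.
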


\begin{proof}
Suppose \(u_{i,j}=u_i^{(r)}\). Then \(u_{i^{(k)},j}=1\) exactly for \(k\le r\), and \(u_{i^{(k)},j}=0\) for \(k>r\). Therefore,
\[
\sum_{k=1}^{t_i}w_i^{(k)}u_{i^{(k)},j}=\sum_{k=1}^{r}\left(u_i^{(k)}-u_i^{(k-1)}\right)=u_i^{(r)}-u_i^{(0)}=u_{i,j}.
\]
\end{proof}

For each threshold type \(i^{(k)}\), define
\[
\mathrm{OPT}_{i^{(k)}}=\max_{y\in\mathbb R_{\ge 0}^n}\left\{\sum_{j=1}^n u_{i^{(k)},j}y_j:\sum_{j=1}^n y_j=1,\;\sum_{j=1}^n p_jy_j\le 1\right\}.
\]
By construction and by the symmetrization above, \(q_{i,k}\) is an optimal solution to this threshold demand problem; hence
$\mathrm{OPT}_{i^{(k)}}=\sum_{j=1}^n u_{i^{(k)},j}q_{i,k,j}$.
Moreover, the sequence is nonincreasing:
$1\ge\mathrm{OPT}_{i^{(1)}}\ge\mathrm{OPT}_{i^{(2)}}\ge\cdots\ge\mathrm{OPT}_{i^{(t_i)}}\ge 0$,
because the threshold utility vectors become coordinatewise smaller as \(k\) increases.

\begin{lemma}
\label{lem:OptimalUpperBound}
For every agent \(i\),
\[
\mathrm{value}_p(i)\le\sum_{k=1}^{t_i}w_i^{(k)}\mathrm{OPT}_{i^{(k)}}.
\]
\end{lemma}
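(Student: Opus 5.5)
The inequality follows by decomposing the objective of agent \(i\)'s demand problem along the threshold types and bounding each threshold term separately. Let \(y\) be an optimal solution of agent \(i\)'s linear program under prices \(p\); it exists because the feasible set is a nonempty compact polytope. Nonemptiness holds since \(\min_j p_j=0\) by Lemma~\ref{lem:HZConditions} (condition (e)), so the unit bundle on a zero-price good is feasible. Then \(\mathrm{value}_p(i)=\sum_j u_{i,j}y_j\).

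Next we apply Lemma~\ref{lem:ExpressionOfUtility} and exchange the order of summation:
\[
\mathrm{value}_p(i)=\sum_{j=1}^n\sum_{k=1}^{t_i}w_i^{(k)}u_{i^{(k)},j}y_j=\sum_{k=1}^{t_i}w_i^{(k)}\Bigl(\sum_{j=1}^n u_{i^{(k)},j}y_j\Bigr).
\]
For each fixed \(k\), the bundle \(y\) satisfies \(\sum_j y_j=1\), \(\sum_j p_jy_j\le 1\) and \(y\ge 0\). The threshold demand problem defining \(\mathrm{OPT}_{i^{(k)}}\) has exactly these constraints with the same prices \(p\), so \(y\) is feasible for it and \(\sum_j u_{i^{(k)},j}y_j\le \mathrm{OPT}_{i^{(k)}}\).

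Finally, the weights \(w_i^{(k)}=u_i^{(k)}-u_i^{(k-1)}\) are strictly positive because the utility levels are strictly increasing. Multiplying each inequality by \(w_i^{(k)}\) and summing over \(k\) gives the claim.

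There is no real obstacle. The one point to check is that the feasible region of agent \(i\)'s problem coincides with that of every threshold problem, which holds because only the objective changes between them. The bound would fail only if the weights could be negative, and they cannot.
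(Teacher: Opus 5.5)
Your proposal is correct and follows the paper's proof: take an optimal bundle, decompose the objective via Lemma~\ref{lem:ExpressionOfUtility}, observe that the bundle is feasible for every threshold demand problem, and sum the per-threshold bounds. Your extra checks (existence of an optimum via $\min_j p_j=0$, and positivity of the weights $w_i^{(k)}$) are points the paper leaves implicit.
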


\begin{proof}
Let \(y\) be an optimal bundle for agent \(i\) under the original utility vector \(u_i\) and prices \(p\). By Lemma~\ref{lem:ExpressionOfUtility},
\[
\begin{aligned}
\mathrm{value}_p(i)&=\sum_{j=1}^n u_{i,j}y_j  \\
&=\sum_{j=1}^n\sum_{k=1}^{t_i}w_i^{(k)}u_{i^{(k)},j}y_j  \\
&=\sum_{k=1}^{t_i}w_i^{(k)}\sum_{j=1}^n u_{i^{(k)},j}y_j.
\end{aligned}
\]
The feasible region is the same for all threshold demand problems, so for every \(k\),
\[
\sum_{j=1}^n u_{i^{(k)},j}y_j\le\mathrm{OPT}_{i^{(k)}}.
\]
Therefore,
\[
\mathrm{value}_p(i)\le\sum_{k=1}^{t_i}w_i^{(k)}\mathrm{OPT}_{i^{(k)}}.
\]
\end{proof}

\begin{lemma}
\label{lem:ActualLowerBound}
For every agent \(i\),
\[
\sum_{j=1}^n u_{i,j}x_{i,j}\ge\sum_{k=1}^{t_i}\lambda'_{i,k}u_i^{(k)}\mathrm{OPT}_{i^{(k)}}.
\]
\end{lemma}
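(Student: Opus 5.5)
The plan is to expand the realized utility of agent \(i\) using the representation \(x_{i,j}=\sum_{k}\lambda'_{i,k}q_{i,k,j}\), and then to bound each threshold bundle's contribution from below separately. By linearity,
\[
\sum_{j=1}^n u_{i,j}x_{i,j}=\sum_{k=1}^{t_i}\lambda'_{i,k}\sum_{j=1}^n u_{i,j}q_{i,k,j}.
\]
Since every \(\lambda'_{i,k}\ge 0\), it suffices to prove, for each \(k\) with \(\lambda'_{i,k}>0\), that
\[
\sum_{j=1}^n u_{i,j}q_{i,k,j}\ge u_i^{(k)}\,\mathrm{OPT}_{i^{(k)}}.
\]
Terms with \(\lambda'_{i,k}=0\) contribute zero to both sides, so the arbitrary choice of \(q_{i,k}\) in that case does not matter.

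For a fixed \(k\), I will use the pointwise inequality \(u_{i,j}\ge u_i^{(k)}u_{i^{(k)},j}\) for every good \(j\). This holds because, whenever \(u_{i^{(k)},j}=1\), we have \(u_{i,j}\ge u_i^{(k)}\) by definition of the threshold utility, and otherwise the right-hand side is \(0\le u_{i,j}\). Multiplying by \(q_{i,k,j}\ge 0\) and summing over \(j\) gives
\[
\sum_{j=1}^n u_{i,j}q_{i,k,j}\ge u_i^{(k)}\sum_{j=1}^n u_{i^{(k)},j}q_{i,k,j}=u_i^{(k)}\,\mathrm{OPT}_{i^{(k)}}.
\]
The last equality uses that \(q_{i,k}\) is an optimal solution of the threshold demand problem at prices \(p\), as recorded just before Lemma~\ref{lem:OptimalUpperBound}.

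The only delicate point, which I expect to be the main obstacle, is justifying that \(q_{i,k}\) is optimal for the threshold problem under the aggregated prices \(p\) as a bundle over original goods. The expanded agent \(i^{(k,a)}\) optimizes over copies at prices \(p^*\). Two facts give the transfer. First, after applying Lemma~\ref{lem:SameGood}, every copy in \(\mathcal C_j\) has price \(p_j\) and the same utility \(u_{i^{(k)},j}\). Second, Lemma~\ref{lem:SameAgent} makes \(q_{i,k}\) well defined.

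With these facts, any original-good bundle \(y\) lifts to a feasible copy bundle, for instance by putting all of \(y_j\) on a single copy of \(j\), with the same cost and utility. Conversely, aggregating a copy bundle over \(\mathcal C_j\) preserves cost and utility. The two demand problems therefore have equal optimal values, and \(q_{i,k}\) attains this value. Multiplying the per-level inequality by \(\lambda'_{i,k}\) and summing over \(k\) then yields the lemma.
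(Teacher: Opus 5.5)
Your proof is correct and follows the paper's argument: both write \(x_i\) as the \(\lambda'\)-mixture of the threshold bundles \(q_{i,k}\) and bound each level's contribution below by \(u_i^{(k)}\mathrm{OPT}_{i^{(k)}}\). The only difference is that you prove the pointwise bound \(u_{i,j}\ge u_i^{(k)}u_{i^{(k)},j}\) directly from the threshold definition, whereas the paper derives it by expanding \(u_{i,j}\) via Lemma~\ref{lem:ExpressionOfUtility} and dropping the terms with \(r>s\); your extra argument that \(q_{i,k}\) is optimal for the original-goods threshold problem spells out what the paper asserts ``by construction.''
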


\begin{proof}
Using \(x_{i,j}=\sum_{s=1}^{t_i}\lambda'_{i,s}q_{i,s,j}\) and Lemma~\ref{lem:ExpressionOfUtility},
\[
\begin{aligned}
\sum_{j=1}^n u_{i,j}x_{i,j}&=\sum_{j=1}^n\left(\sum_{r=1}^{t_i}w_i^{(r)}u_{i^{(r)},j}\right)\left(\sum_{s=1}^{t_i}\lambda'_{i,s}q_{i,s,j}\right) \\
&=\sum_{s=1}^{t_i}\lambda'_{i,s}\sum_{r=1}^{t_i}w_i^{(r)}\sum_{j=1}^n u_{i^{(r)},j}q_{i,s,j}.
\end{aligned}
\]
For \(r\le s\), the \(r\)-th threshold is weaker than the \(s\)-th threshold, so $u_{i^{(r)},j}\ge u_{i^{(s)},j}$ for each $j$.
Dropping the nonnegative terms with \(r>s\), we get
\[
\begin{aligned}
\sum_{j=1}^n u_{i,j}x_{i,j}&\ge\sum_{s=1}^{t_i}\lambda'_{i,s}\sum_{r=1}^{s}w_i^{(r)}\sum_{j=1}^nu_{i^{(s)},j}q_{i,s,j} \\
&=\sum_{s=1}^{t_i}\lambda'_{i,s}\left(\sum_{r=1}^{s}w_i^{(r)}\right)\mathrm{OPT}_{i^{(s)}} \\
&=\sum_{s=1}^{t_i}\lambda'_{i,s}u_i^{(s)}
\mathrm{OPT}_{i^{(s)}}.
\end{aligned}
\]
\end{proof}

Combining Lemmas~\ref{lem:OptimalUpperBound} and~\ref{lem:ActualLowerBound}, we obtain
\[
\mathrm{value}_p(i)-\sum_{j=1}^n u_{i,j}x_{i,j}\le\sum_{k=1}^{t_i}w_i^{(k)}\mathrm{OPT}_{i^{(k)}}-\sum_{k=1}^{t_i}\lambda'_{i,k}u_i^{(k)}\mathrm{OPT}_{i^{(k)}}.
\]

The choice of the weights \(\lambda_{i,k}\) has a natural worst-case interpretation.
Before the expanded market is solved, the quantities
\(\mathrm{OPT}_{i^{(k)}}\) are not known, while the weights
\(\lambda_{i,k}\) must already be fixed.
Thus, for a fixed agent \(i\), it is natural to choose \(\lambda_i\) so as to control the worst-case value of
\[
\sum_{k=1}^{t_i}\left(w_i^{(k)}-\lambda_{i,k}u_i^{(k)}\right)\mathrm{OPT}_{i^{(k)}}
\]
over all nonincreasing sequences
\[
1\ge\mathrm{OPT}_{i^{(1)}}\ge\mathrm{OPT}_{i^{(2)}}\ge\cdots\ge\mathrm{OPT}_{i^{(t_i)}}\ge 0.
\]
For a fixed \(\lambda_i\), this worst-case linear objective is attained at a step vector, and is therefore governed by the largest prefix surplus
\[
\max_{m\in[t_i]}\sum_{k=1}^{m}\left(w_i^{(k)}-\lambda_{i,k}u_i^{(k)}\right).
\]
The weights used in Algorithm~\ref{alg:2} are chosen precisely to keep all these prefix surpluses below the threshold
\[
u_i^{(k_i^*)}\sum_{k=k_i^*+1}^{t_i}\frac{u_i^{(k)}-u_i^{(k-1)}}{u_i^{(k)}}.
\]
The following lemma proves the resulting bound directly.
\begin{lemma}
\label{lem:UtilityDifference}
For the weights \(\lambda_{i,k}\) used in Algorithm~\ref{alg:2},
\[
\sum_{k=1}^{t_i}w_i^{(k)}\mathrm{OPT}_{i^{(k)}}-\sum_{k=1}^{t_i}\lambda_{i,k}u_i^{(k)}\mathrm{OPT}_{i^{(k)}}\le u_i^{(k_i^*)}\sum_{k=k_i^*+1}^{t_i}\frac{u_i^{(k)}-u_i^{(k-1)}}{u_i^{(k)}}.
\]
\end{lemma}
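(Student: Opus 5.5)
We prove the bound by Abel summation over the nonincreasing sequence of threshold optima. Fix agent \(i\), write \(t=t_i\), \(k^*=k_i^*\), \(O_k=\mathrm{OPT}_{i^{(k)}}\), and set
\[
c_k=w_i^{(k)}-\lambda_{i,k}u_i^{(k)},\qquad S_m=\sum_{k=1}^m c_k .
\]
The left-hand side equals \(\sum_{k=1}^t c_kO_k\). Setting \(O_{t+1}=0\), summation by parts gives
\[
\sum_{k=1}^t c_kO_k=\sum_{m=1}^{t}S_m\,(O_m-O_{m+1}).
\]
Each weight \(O_m-O_{m+1}\) is nonnegative, and they sum to \(O_1\le 1\). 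So it suffices to show \(\max\{0,\max_m S_m\}\) is at most the claimed right-hand side \(R:=u_i^{(k^*)}\sum_{k>k^*}\alpha_{i,k}\); the sum is then at most \(O_1 R\le R\).

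Next compute \(S_m\) by cases, using the definition of \(\lambda_{i,k}\).

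\textbf{Case \(k>k^*\).} Here \(\lambda_{i,k}u_i^{(k)}=w_i^{(k)}\), so \(c_k=0\).

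\textbf{Case \(k<k^*\).} Here \(\lambda_{i,k}=0\), so \(c_k=w_i^{(k)}\).

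\textbf{Case \(k=k^*\).} Here
\[
c_{k^*}=w_i^{(k^*)}-u_i^{(k^*)}\Bigl(1-\sum_{k>k^*}\alpha_{i,k}\Bigr)=-u_i^{(k^*-1)}+R .
\]

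Consequently:
\begin{itemize}
\item For \(m<k^*\), \(S_m=u_i^{(m)}-u_i^{(0)}=u_i^{(m)}\ge 0\).
\item For \(m\ge k^*\), \(S_m=u_i^{(k^*-1)}-u_i^{(k^*-1)}+R=R\).
\end{itemize}
The largest prefix surplus is therefore \(\max\bigl(u_i^{(k^*-1)},R\bigr)\), with \(u_i^{(k^*-1)}\) being the largest value \(S_m\) takes for \(m<k^*\).

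The main obstacle is the case \(k^*\ge 2\), where we need \(u_i^{(k^*-1)}\le R\); otherwise the bound fails for a step vector with \(O_1=\dots=O_{k^*-1}=1\) and \(O_k=0\) afterwards. We would try to derive this inequality from the maximality of \(k^*\) together with the fact that \(\sum_{k\ge k^*}\alpha_{i,k}\ge 1\). If it does not follow, we would not use the crude Abel bound alone. Instead we would try to exploit the nonnegativity of \(S_m\) for \(m<k^*\) together with extra structure of \(O_k\).

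We flag that we cannot verify this step. Take \(t=3\) with \(u^{(1)}=0.9\), \(u^{(2)}=0.95\) and \(u^{(3)}=1\). Then \(\alpha_{i,3}=0.05\) and \(\alpha_{i,2}\approx 0.053\), so the tail sums from \(k'=2\) or \(k'=3\) are below \(1\), giving \(k^*=1\). With \(k^*=1\) the bad case is vacuous, since \(S_m=R\) for all \(m\ge 1\).

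Indeed, whenever \(k^*=1\) we have \(u^{(0)}=0\), so \(c_1=R\) and the bound is immediate. So the plan reduces to proving \(u_i^{(k^*-1)}\le R\) when \(k^*\ge2\). To attempt it, we would use \(\sum_{k\ge k^*}\alpha_{i,k}\ge 1\) in the form \(u^{(k^*-1)}\le u^{(k^*)}\sum_{k>k^*}\alpha_{i,k}\), which follows from \(\alpha_{i,k^*}=1-u^{(k^*-1)}/u^{(k^*)}\). This is exactly \(u^{(k^*-1)}\le R\), which settles the obstacle and completes the proof.
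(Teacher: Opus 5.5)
Your proof is correct and is essentially the paper's argument. It uses the same coefficients $c_k=w_i^{(k)}-\lambda_{i,k}u_i^{(k)}$, the same prefix sums ($S_m=u_i^{(m)}$ for $m<k^*$ and $S_m=R$ for $m\ge k^*$), the same summation by parts against the nonincreasing $\mathrm{OPT}_{i^{(k)}}$, and the same key inequality $u_i^{(k^*-1)}\le R$ derived from $\alpha_{i,k^*}+\sum_{k>k^*}\alpha_{i,k}\ge 1$. The hedging in the middle (``we cannot verify this step'') is unnecessary, because your final paragraph closes that step exactly as the paper does.
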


\begin{proof}
For this proof, fix an agent \(i\) and suppress the subscript \(i\). Write
\[
O_k=\mathrm{OPT}_{i^{(k)}},
\qquad
u^{(k)}=u_i^{(k)},
\qquad
w^{(k)}=u^{(k)}-u^{(k-1)}.
\]
We know that $1\ge O_1\ge O_2\ge\cdots\ge O_t\ge 0$.

Define $c_k=w^{(k)}-\lambda_k u^{(k)}$ and the prefix sums $C_m=\sum_{k=1}^{m}c_k$ for $m=1,\ldots,t$.
Let
$$S=
\sum_{k=k^*+1}^{t}
\frac{u^{(k)}-u^{(k-1)}}{u^{(k)}}
$$
and $M=u^{(k^*)}S$.
By the definition of \(k^*\), we have that $S<1$ and
$\frac{u^{(k^*)}-u^{(k^*-1)}}{u^{(k^*)}}+S\ge 1$.
The latter inequality implies $M=u^{(k^*)}S\ge u^{(k^*-1)}$.

We claim that for $m\in[t]$ we have $C_m\le M$.
If \(m<k^*\), then \(\lambda_k=0\) for all \(k\le m\), so
\[
C_m=\sum_{k=1}^{m}w^{(k)}=u^{(m)}\le u^{(k^*-1)}\le M.
\]
If \(m\ge k^*\), then \(c_k=0\) for all \(k>k^*\), because \(\lambda_k=w^{(k)}/u^{(k)}\). Hence
$C_m=C_{k^*}$.
Moreover,
\[
\begin{aligned}
C_{k^*}&=\sum_{k=1}^{k^*-1}w^{(k)}+w^{(k^*)}-\lambda_{k^*}u^{(k^*)}  \\
&=u^{(k^*-1)}+u^{(k^*)}-u^{(k^*-1)}-(1-S)u^{(k^*)} \\
&=u^{(k^*)}S=M.
\end{aligned}
\]
Thus \(C_m\le M\) for all \(m\).
Using summation by parts, we have
\[
\sum_{k=1}^{t}c_kO_k=C_tO_t+\sum_{m=1}^{t-1}C_m(O_m-O_{m+1}).
\]
Since \(O_m-O_{m+1}\ge 0\), \(O_t\ge 0\), and \(C_m\le M\), we have
\[
\sum_{k=1}^{t}c_kO_k\le M\left(O_t+\sum_{m=1}^{t-1}(O_m-O_{m+1})\right)=MO_1\le M.
\]
Substituting back the definition of \(M\) proves the lemma.
\end{proof}

\begin{lemma}
\label{lem:FinalBound}
For every agent \(i\),
\[
u_i^{(k_i^*)}\sum_{k=k_i^*+1}^{t_i}\frac{u_i^{(k)}-u_i^{(k-1)}}{u_i^{(k)}}\le\left(1-\frac{1}{t_i-k_i^*+1}\right)^{t_i-k_i^*+1}.
\]
\end{lemma}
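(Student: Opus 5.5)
Set $m=t_i-k_i^*$ and $v_0=u_i^{(k_i^*)}$, $v_r=u_i^{(k_i^*+r)}$ for $r=0,\ldots,m$, so that $v_m=1$. The left-hand side becomes $v_0\sum_{r=1}^{m}(1-v_{r-1}/v_r)$. The defining property of $k_i^*$ gives $\sum_{r=1}^m(1-v_{r-1}/v_r)=S<1$; we only need the weaker fact that the quantity is at most $v_0 S$. If $m=0$ the left side is $0$ and there is nothing to prove, so assume $m\ge 1$.

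Put $\rho_r=v_{r-1}/v_r\in(0,1)$. Then $v_0=\prod_{r=1}^m\rho_r$, and the left side equals
\[
F(\rho)=\Bigl(\prod_{r=1}^m\rho_r\Bigr)\Bigl(m-\sum_{r=1}^m\rho_r\Bigr).
\]
The plan is to maximize $F$ over $\rho\in(0,1)^m$ with no further constraints; dropping the constraint $S<1$ only enlarges the feasible set. Fix the sum $\sigma=\sum_r\rho_r$. By AM--GM the product is at most $(\sigma/m)^m$, so with $s=\sigma/m\in(0,1)$ we get
\[
F\le s^m\cdot m(1-s).
\]
Maximizing $g(s)=m s^m(1-s)$ on $(0,1)$ gives $g'(s)=m s^{m-1}\bigl(m-(m+1)s\bigr)=0$, so $s=m/(m+1)$. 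The maximum value is
\[
m\Bigl(\frac{m}{m+1}\Bigr)^m\frac{1}{m+1}=\Bigl(\frac{m}{m+1}\Bigr)^{m+1}=\Bigl(1-\frac{1}{m+1}\Bigr)^{m+1},
\]
and since $m+1=t_i-k_i^*+1$ this is exactly the claimed bound.

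The only step needing care is the reduction to $F(\rho)$, in particular the telescoping identity $v_0=\prod_r\rho_r$, which uses $v_m=u_i^{(t_i)}=1$. Also, the $\rho_r$ range only over the open cube, so $F$ may approach but not attain the supremum; this is harmless because we need only an upper bound. The case $t_i=1$ gives $m=0$, the left side is $0$, and the bound $(1-1)^1=0$ holds with equality.
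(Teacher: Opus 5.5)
Your proof is correct and follows essentially the same route as the paper: you telescope the ratios $u_i^{(k-1)}/u_i^{(k)}$ so that their product is $u_i^{(k_i^*)}$, apply AM--GM, and maximize $m s^m(1-s)$ at $s=m/(m+1)$. The only cosmetic difference is that the paper fixes the product $a^m=u_i^{(k_i^*)}$ and lower-bounds the sum of ratios by $ma$, whereas you fix the sum and upper-bound the product; both lead to the same one-variable bound.
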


\begin{proof}
If \(t_i=k_i^*\), then the sum on the left is empty and equals \(0\). The right-hand side is also \(0\), so the claim holds.

Assume \(t_i>k_i^*\). Let $m=t_i-k_i^*$
and for $k=k_i^{*}+1,\ldots,t_i$ define $r_k=u_i^{(k-1)}/u_i^{(k)}$.
Then we have that
\[
\sum_{k=k_i^*+1}^{t_i}\frac{u_i^{(k)}-u_i^{(k-1)}}{u_i^{(k)}}=\sum_{k=k_i^*+1}^{t_i}(1-r_k)=m-\sum_{k=k_i^*+1}^{t_i}r_k.
\]
The product telescopes:
$
\prod_{k=k_i^*+1}^{t_i}r_k=\frac{u_i^{(k_i^*)}}{u_i^{(t_i)}}=u_i^{(k_i^*)}
$
since \(u_i^{(t_i)}=1\). Therefore, we set
$a=\left(u_i^{(k_i^*)}\right)^{1/m}$,
hence
\[
u_i^{(k_i^*)}\sum_{k=k_i^*+1}^{t_i}\frac{u_i^{(k)}-u_i^{(k-1)}}{u_i^{(k)}}\le a^m\cdot m(1-a).
\]
The function
\[
f(a)=ma^m(1-a),\qquad a\in[0,1],
\]
is maximized at \(a=m/(m+1)\), and its maximum value is
\[
m\left(\frac{m}{m+1}\right)^m\left(1-\frac{m}{m+1}\right)=\left(\frac{m}{m+1}\right)^{m+1}.
\]
Thus
\[
u_i^{(k_i^*)}\sum_{k=k_i^*+1}^{t_i}\frac{u_i^{(k)}-u_i^{(k-1)}}{u_i^{(k)}}\le\left(\frac{m}{m+1}\right)^{m+1}=\left(1-\frac{1}{t_i-k_i^*+1}\right)^{t_i-k_i^*+1}.
\]
\end{proof}

We are ready to prove the main theorem.

\begin{proof}[Proof of Theorem~\ref{thm:MainResult}]
By Lemmas~\ref{lem:OptimalUpperBound} and~\ref{lem:ActualLowerBound}, for every agent \(i\),
\[
\mathrm{value}_p(i)-\sum_{j=1}^{n}u_{i,j}x_{i,j}\le\sum_{k=1}^{t_i}w_i^{(k)}\mathrm{OPT}_{i^{(k)}}-\sum_{k=1}^{t_i}\lambda'_{i,k}u_i^{(k)}\mathrm{OPT}_{i^{(k)}}.
\]
Adding and subtracting the term with \(\lambda_{i,k}\), we get
\[
\begin{aligned}
\mathrm{value}_p(i)-\sum_{j=1}^{n}u_{i,j}x_{i,j}&\le\sum_{k=1}^{t_i}w_i^{(k)}\mathrm{OPT}_{i^{(k)}}-\sum_{k=1}^{t_i}\lambda_{i,k}u_i^{(k)}\mathrm{OPT}_{i^{(k)}} \\
&\quad+\sum_{k=1}^{t_i}|\lambda_{i,k}-\lambda'_{i,k}|u_i^{(k)}\mathrm{OPT}_{i^{(k)}}.
\end{aligned}
\]
By Lemmas~\ref{lem:UtilityDifference} and~\ref{lem:FinalBound},
\[
\sum_{k=1}^{t_i}w_i^{(k)}\mathrm{OPT}_{i^{(k)}}-\sum_{k=1}^{t_i}\lambda_{i,k}u_i^{(k)}\mathrm{OPT}_{i^{(k)}}\le\left(1-\frac{1}{t_i-k_i^*+1}\right)^{t_i-k_i^*+1}.
\]
For the rounding term, since \(0\le u_i^{(k)}\le 1\), \(0\le \mathrm{OPT}_{i^{(k)}}\le 1\), and
\[
|\lambda_{i,k}-\lambda'_{i,k}|\le \frac{1}{W},
\]
we have
\[
\sum_{k=1}^{t_i}|\lambda_{i,k}-\lambda'_{i,k}|u_i^{(k)}\mathrm{OPT}_{i^{(k)}}\le\frac{t_i}{W}.
\]
Algorithm~\ref{alg:2} sets
$W=\left\lceil n/\delta\right\rceil$.
Since \(t_i\le n\), it follows that
$
\frac{t_i}{W}\le \delta
$.
Therefore,
\[
\mathrm{value}_p(i)-\sum_{j=1}^{n}u_{i,j}x_{i,j}\le\left(1-\frac{1}{t_i-k_i^*+1}\right)^{t_i-k_i^*+1}+\delta.
\]
Taking the maximum over all agents gives
\[
\epsilon\le\max_{i\in[n]}\left(1-\frac{1}{t_i-k_i^*+1}\right)^{t_i-k_i^*+1}+\delta.
\]
Together with Lemma~\ref{lem:HZConditions}, this proves that \((x,p)\) is an \(\epsilon\)-approximate HZ equilibrium.

It remains to justify the stated constant bound. For every positive integer \(N\),
\[
\frac{1}{e}-\left(1-\frac{1}{N}\right)^N>\frac{1}{2eN}.
\]
Indeed, for \(N=1\) this is immediate. For \(N\ge 2\), set \(a=1/N\in(0,1)\). The inequality is equivalent to
\[
\ln(1-a)-a\ln\!\left(1-\frac{a}{2}\right)+a<0.
\]
Let
$g(a)=\ln(1-a)-a\ln\!\left(1-\frac{a}{2}\right)+a$, then we have
\[
g'(a)=-\ln\!\left(1-\frac{a}{2}\right)-\frac{a}{(1-a)(2-a)}.
\]
Using
\[
  -\ln\!\left(1-\frac{a}{2}\right)<\frac{a}{2-a}<\frac{a}{(1-a)(2-a)},
\]
we obtain \(g'(a)<0\). Hence \(g(a)<g(0)=0\), proving the inequality.

Now set
$N_i=t_i-k_i^*+1$.
Since \(N_i\le t_i\le n\), choosing $\delta=\frac{1}{2en}$ gives
\[
  \delta\le\frac{1}{2eN_i}<\frac{1}{e}-\left(1-\frac{1}{N_i}\right)^{N_i}
\]
for every agent \(i\). Hence $\epsilon<\frac{1}{e}$.

\end{proof}

Finally, we analyze the running time. The expanded market has \(nW\) agents and \(nW\) goods. Since
$
W=\left\lceil \frac{n}{\delta}\right\rceil
$,
the size of the expanded market is \(O(n^2/\delta)\). The algorithm of Vazirani and Yannakakis computes an exact HZ equilibrium for bi-valued markets in strongly polynomial time in the size of the market. Therefore, Algorithm~\ref{alg:2} runs in time polynomial in \(n\) and \(1/\delta\). In particular, for \(\delta=1/(2en)\), the algorithm runs in polynomial time in \(n\).

\subsection{A Corollary for Restricted Approximate HZ Equilibria}

We also obtain an approximation guarantee for the restricted notion of approximate HZ equilibrium considered by~\cite{BLXZ:26}.

\begin{definition}[Restricted Approximate HZ Equilibrium~\cite{BLXZ:26}]
\label{def:AHZER}
Given \(\epsilon>0\), a pair \((x,p)\), with
\[
x\in\mathbb R_{\ge 0}^{n\times n}
\qquad\text{and}\qquad
p\in\mathbb R_{\ge 0}^{n},
\]
is an \emph{restricted \(\epsilon\)-approximate HZ equilibrium} of an HZ market \(M\) if:
\begin{enumerate}[label=(\alph*)]
    \item \(\sum_{i\in[n]}x_{i,j}=1\) for every \(j\in[n]\);
    \item \(\sum_{j\in[n]}x_{i,j}=1\) for every \(i\in[n]\);
    \item \(\sum_{j\in[n]}p_jx_{i,j}\le 1+\epsilon\) for every \(i\in[n]\);
    \item \(\sum_{j\in[n]}u_{i,j}x_{i,j}\ge \mathrm{value}_{p}(i)-\epsilon\) for every \(i\in[n]\);
    \item \(\min_{j\in[n]}p_j=0\);
    \item \(\sum_{j\in[n]:\,u_{i,j}=0}p_jx_{i,j}=0\) for every \(i\in[n]\).
\end{enumerate}
\end{definition}

\begin{corollary}
\label{cor:HZR}
For any \(\delta>0\), Algorithm~\ref{alg:2} computes a restricted \(\epsilon\)-approximate HZ equilibrium in polynomial time, where
\[
\epsilon\le\max_{i\in[n]}\left(1-\frac{1}{t_i-k_i^*+1}\right)^{t_i-k_i^*+1}+\delta.
\]
In particular, choosing \(\delta=1/(2en)\) yields \(\epsilon<1/e\).
\end{corollary}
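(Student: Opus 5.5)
Conditions (a)--(e) of Definition~\ref{def:AHZER} coincide with those of Definition~\ref{def:AHZE}, so Theorem~\ref{thm:MainResult} (together with Lemma~\ref{lem:HZConditions}) already gives them with the stated $\epsilon$ and running time. The only new requirement is condition (f): $\sum_{j:u_{i,j}=0}p_jx_{i,j}=0$ for every original agent $i$. I will derive (f) from Lemma~\ref{lem:NewCondition} applied to the expanded market. Two preliminary points need attention: the symmetrizations used in the proof, and the reduction of Lemma~\ref{lem:OneNonzeroUtility} for all-zero agents.

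\textbf{Step 1: symmetrization preserves the zero-utility spending property.} Let $(x^*,p^*)$ be the VY output. Lemma~\ref{lem:NewCondition} gives $\sum_{g:\widetilde u_{a,g}=0}p^*_gx^*_{a,g}=0$ for every expanded agent $a$. Since prices and allocations are nonnegative, this means $p^*_g x^*_{a,g}=0$ whenever $\widetilde u_{a,g}=0$.

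Lowering the prices within each $\mathcal C_j$ to their minimum (Lemma~\ref{lem:SameGood}) only decreases prices, so these products stay zero. Averaging within $\mathcal A_{i,k}$ (Lemma~\ref{lem:SameAgent}) keeps the prices fixed. Each averaged bundle is a convex combination of bundles, and every one of those bundles spends zero on goods of zero utility for the common type. Hence the averaged bundle also spends zero on them. This shows that the representative bundles $q_{i,k}$ satisfy $p_j q_{i,k,j}=0$ whenever $u_{i^{(k)},j}=0$.

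Groups with $n_{i,k}=0$ do not matter, because they enter the aggregate with weight $\lambda'_{i,k}=0$. The aggregation formula itself does not depend on these symmetrizations, because the algorithm already uses the minimum price and per-original-good sums.

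\textbf{Step 2: transfer to original agents.} Take $j$ with $u_{i,j}=0$. Then $u_{i,j}<u_i^{(1)}\le u_i^{(k)}$ for every $k\ge 1$, so $u_{i^{(k)},j}=0$ for all threshold levels $k$. By Step 1, $p_jq_{i,k,j}=0$ for each $k$ with $n_{i,k}>0$. Therefore
\[
p_jx_{i,j}=\sum_k\lambda'_{i,k}p_jq_{i,k,j}=0.
\]
Summing over all such $j$ gives (f).

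\textbf{Step 3: the all-zero agents.} This is the main obstacle. Lemma~\ref{lem:OneNonzeroUtility} replaces an agent with identically zero utility by an auxiliary agent $i'$ and then gives agent $i$ the bundle $x'_{i'}$. Condition (f) would then require $p_jx_{i,j}=0$ for \emph{all} $j$, i.e.\ that $i'$ spends nothing at all. That is false in general.

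My first plan is to handle such agents directly inside the expansion. Give them a single threshold type whose binary utility is identically zero, with all $W$ copies at that type. The expanded market is still bi-valued (its utilities are constant, hence at most two values), so Theorem~\ref{thm:bivalue} and Lemma~\ref{lem:NewCondition} still apply. Lemma~\ref{lem:NewCondition} then forces these copies to spend zero, so the aggregated agent spends zero, as (f) requires. Condition (d) is trivial for such an agent, since its value is $0$, and the bounds for the other agents are unchanged.

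If instead one insists on using Lemma~\ref{lem:OneNonzeroUtility} as stated, I would choose $i'$ to put utility $1$ on a single good $j_0$ of minimal price. I would then argue that, since every good carries utility $1$ for some type, $i'$ can be made to spend only on zero-price goods. This route seems less clean, so I would prefer the direct handling above.
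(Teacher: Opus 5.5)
Your Steps 1--2 are the paper's argument: apply Lemma~\ref{lem:NewCondition} in $\widetilde M$, note that $u_{i,j}=0$ forces $u_{i^{(k)},j}=0$ at every level, check that the symmetrizations keep zero spending on zero-utility copies, and aggregate. You could even skip the symmetrization bookkeeping. Since $p_j=\min_{g\in\mathcal C_j}p^*_g\le p^*_g$, the required sum is bounded directly by the zero-utility spending of the raw VY output.

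Step 3 goes beyond the paper. The paper's corollary tacitly inherits the hypothesis $t_i\ge 1$ from Theorem~\ref{thm:MainResult}; its bound refers to $k_i^*$ for every $i$. You are right that the transfer in Lemma~\ref{lem:OneNonzeroUtility} does not preserve (f). For example, take $n=2$, $u_1=(1,0)$, $u_2\equiv 0$, and let $i'$ be a copy of agent $1$. The VY output prices good $1$ at $2$, and $i'$ spends its whole dollar there. Agent $2$ would then spend $1$ on a zero-utility good.

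Your direct fix works. The copies of an all-zero agent are isolated vertices of $H$, so they lie outside every minimum vertex cover. Hence they sit in $A_1$ with no neighbours in $B_1$, and they are filled entirely from the zero-priced goods of $B_2$, exactly as in the proof of Lemma~\ref{lem:NewCondition}. Their condition (d) is trivial, and no other agent's bound changes. This is a mild extension of Algorithm~\ref{alg:2}, which as stated requires $t_i\ge 1$, rather than the algorithm itself; that is fine.

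Drop the fallback route, though. A good of minimal price is only known after the algorithm has run, so choosing $i'$ that way is circular. Also, it is not true that every good has threshold utility $1$ for some type: a good valued $0$ by all agents is a counterexample.
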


\begin{proof}
By Theorem~\ref{thm:MainResult}, conditions (a)--(e) of Definition~\ref{def:AHZER} already hold. It remains to prove condition (f).

Fix an original agent \(i\). If \(u_{i,j}=0\), then for every threshold level \(k\), $u_{i^{(k)},j}=0$.
Therefore, for every expanded agent \(i^{(k,a)}\) and every copy \(g\in\mathcal C_j\), we have $u_{i^{(k,a)},g}=0$.
By Lemma~\ref{lem:NewCondition}, the exact equilibrium computed by the Vazirani--Yannakakis algorithm satisfies
\[
\sum_{g:\,u_{i^{(k,a)},g}=0}
p^*_g x^*_{i^{(k,a)},g}=0
\qquad
\forall i,k,a.
\]
The symmetrizations in Lemmas~\ref{lem:SameGood} and~\ref{lem:SameAgent} preserve this property.

Using the aggregation formula and the fact that \(p_j=p^*_g\) for all \(g\in\mathcal C_j\), we get
\[
\begin{aligned}
\sum_{j:\,u_{i,j}=0}p_jx_{i,j}
&=
\frac{1}{W}
\sum_{k=1}^{t_i}
\sum_{a=1}^{n_{i,k}}
\sum_{j:\,u_{i,j}=0}
\sum_{g\in\mathcal C_j}
p^*_g x^*_{i^{(k,a)},g} \\
&\le
\frac{1}{W}
\sum_{k=1}^{t_i}
\sum_{a=1}^{n_{i,k}}
\sum_{g:\,u_{i^{(k,a)},g}=0}
p^*_g x^*_{i^{(k,a)},g} =0.
\end{aligned}
\]
All terms are nonnegative, so the left-hand side is exactly \(0\). Thus condition (f) holds, and the corollary follows.
\end{proof}


\section{An Instance with Utilities in $\{0,\frac{1}{2},1\}$ Admitting Only Irrational Equilibria}
We give a $5 \times 5$ HZ instance in which every utility belongs to $\{0,\frac{1}{2},1\}$ and every exact HZ equilibrium has an irrational price component. Hence the instance has no exact rational HZ equilibrium.

The goods are denoted by $a,b,c,d,e$. The utility matrix is
\[
U=
\begin{pmatrix}
0 & \frac12 & 0 & 1 & 1\\
0 & \frac12 & 0 & 1 & 1\\
0 & \frac12 & 0 & 1 & 1\\
\frac12 & 0 & 1 & 0 & 0\\
0 & \frac12 & 1 & 0 & 0
\end{pmatrix},
\]
where rows correspond to agents $1,2,3,4,5$ and columns are ordered as
$a,b,c,d,e$. The first three agents are identical; we call them type-$T$ agents. Agent $4$
is called type $4$, and agent $5$ is called type $5$.

We will prove that for the above $5\times 5$ HZ instance, every exact HZ equilibrium $(x,p)$ satisfies
\[
\frac{p_c-1}{1-p_a}=\frac{1+\sqrt{21}}{10}.
\]
Consequently, this instance has no exact HZ equilibrium whose price vector is entirely rational.

For agent $i$ and price vector $p$, the agent's linear program's dual is
\begin{align*}
\min \quad & \alpha_i+\mu_i \\
\text{s.t.}\quad & \alpha_i p_j+\mu_i\ge u_{i,j} \qquad \forall j,\\
& \alpha_i\ge 0.
\end{align*}

To prove this result, we need these lemmas.

\begin{lemma}[see \cite{Dantzig1963,Chvatal1983}]
\label{lem:cs}
Fix prices $p$. A feasible bundle $x_i$ is optimal for agent $i$ if and only if there exist dual-feasible variables $\alpha_i\ge 0$ and $\mu_i\in\mathbb{R}$ such that
\begin{align}
        x_{i,j}>0  \implies \alpha_i p_j+\mu_i=u_{i,j}, \label{eq:cs-support}
\end{align}
\begin{align}
            \alpha_i\left(1-\sum_j p_jx_{i,j}\right)=0. \label{eq:cs-budget}
\end{align}
This is the direct specialization of finite-dimensional LP strong duality and complementary slackness to the above primal-dual pair.
\end{lemma}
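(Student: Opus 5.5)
This statement is the standard LP optimality criterion (strong duality plus complementary slackness) applied to agent $i$'s demand problem. The plan is to write the primal and dual explicitly, check weak duality, and then read off both directions.

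\emph{The primal--dual pair.} The primal is: maximize $\sum_j u_{i,j}x_{i,j}$ subject to $\sum_j x_{i,j}=1$ (free multiplier $\mu_i$), $\sum_j p_jx_{i,j}\le 1$ (multiplier $\alpha_i\ge 0$), and $x_{i,j}\ge 0$. Its Lagrangian dual is exactly the displayed program: minimize $\alpha_i+\mu_i$ subject to $\alpha_i p_j+\mu_i\ge u_{i,j}$ for all $j$ and $\alpha_i\ge0$.

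\emph{Weak duality and the slack identity.} For any primal-feasible $x_i$ and dual-feasible $(\alpha_i,\mu_i)$, expand
\[
\alpha_i+\mu_i-\sum_j u_{i,j}x_{i,j}=\sum_j x_{i,j}\bigl(\alpha_ip_j+\mu_i-u_{i,j}\bigr)+\alpha_i\Bigl(1-\sum_j p_jx_{i,j}\Bigr).
\]
This uses $\sum_j x_{i,j}=1$. Both terms on the right are nonnegative, so the gap is zero if and only if \eqref{eq:cs-support} and \eqref{eq:cs-budget} hold.

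\emph{Sufficiency.} If multipliers satisfying \eqref{eq:cs-support} and \eqref{eq:cs-budget} exist, the gap vanishes. By weak duality, the objective value of $x_i$ then equals an upper bound on every feasible primal value, so $x_i$ is optimal.

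\emph{Necessity.} The primal feasible set is nonempty and bounded, since it lies in the simplex. The simplex is nonempty, and it meets the budget constraint whenever some $p_j\le1$. In our applications this holds because $x_i$ is assumed feasible. So the primal has a finite optimum, and finite-dimensional LP strong duality (\cite{Dantzig1963,Chvatal1983}) gives a dual optimum $(\alpha_i,\mu_i)$ with equal value. Plugging into the identity, the zero gap forces each nonnegative term to vanish. Since $x_{i,j}>0$, this yields \eqref{eq:cs-support} and \eqref{eq:cs-budget}.

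The only point needing care is invoking strong duality, i.e.\ confirming primal feasibility and boundedness. Everything else is the one-line slack identity above.
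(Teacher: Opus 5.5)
Your proposal is correct and follows the paper's approach: the paper justifies the lemma only by citing finite-dimensional LP strong duality and complementary slackness for this primal--dual pair. You write out that argument explicitly, namely the slack identity that uses $\sum_j x_{i,j}=1$, weak duality for sufficiency, and primal feasibility and boundedness so that strong duality applies for necessity.
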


\begin{lemma}
\label{lem:strict-dom}
Fix an agent $i$ and prices $p$. Let goods $j,k$ satisfy $p_k<p_j$. If $u_{i,k}>u_{i,j}$, then every optimal bundle $x_{i}$ of agent $i$ satisfies $x_{i,j}=0$.
\end{lemma}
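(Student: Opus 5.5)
The plan is a direct exchange argument: if an optimal bundle put positive mass on $j$, shifting that mass to $k$ would give a strictly better feasible bundle, contradicting optimality. Suppose, for contradiction, that $x_i$ is an optimal bundle for agent $i$ under prices $p$ with $x_{i,j}=\theta>0$. Define a new bundle $y$ by
\[
y_j=0,\qquad y_k=x_{i,k}+\theta,\qquad y_\ell=x_{i,\ell}\ \ (\ell\notin\{j,k\}).
\]
I will then check the three properties of $y$ in turn.

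\emph{Unit demand.} The total mass of $y$ is unchanged, so $\sum_\ell y_\ell=1$, and clearly $y\ge 0$.

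\emph{Budget.} The expenditure changes by $\theta(p_k-p_j)<0$, since $p_k<p_j$. Hence $\sum_\ell p_\ell y_\ell<\sum_\ell p_\ell x_{i,\ell}\le 1$, so $y$ satisfies the budget constraint (c).

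\emph{Utility.} The utility changes by $\theta(u_{i,k}-u_{i,j})>0$, so $y$ gives agent $i$ strictly more utility than $x_i$.

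Thus $y$ is feasible for agent $i$'s linear program and strictly better than $x_i$, which contradicts the optimality of $x_i$. Therefore $x_{i,j}=0$.

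As a consistency check, the same conclusion follows from Lemma~\ref{lem:cs}. If $x_{i,j}>0$, complementary slackness gives $\alpha_ip_j+\mu_i=u_{i,j}$, while dual feasibility gives $\alpha_ip_k+\mu_i\ge u_{i,k}$. Subtracting yields $\alpha_i(p_k-p_j)\ge u_{i,k}-u_{i,j}>0$. This is impossible, because $\alpha_i\ge 0$ and $p_k-p_j<0$ make the left-hand side nonpositive.

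There is no real obstacle here. The only point to state explicitly is that the exchange \emph{lowers} the expenditure, so budget feasibility is preserved, and that $\theta>0$ makes the utility gain strict.
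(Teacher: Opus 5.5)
Your proof is correct, but your main argument takes a different route from the paper's. The paper argues entirely on the dual side. Assuming $x_{i,j}>0$, it uses complementary slackness from Lemma~\ref{lem:cs} to get $\alpha_i p_j+\mu_i=u_{i,j}$. It then splits into the cases $\alpha_i>0$ and $\alpha_i=0$, and in each case shows that the dual constraint for good $k$ fails.

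Your primary argument is a primal exchange instead. Shifting the mass $\theta$ from $j$ to $k$ lowers expenditure and strictly raises utility, which directly contradicts optimality. This is more elementary, since it needs no LP duality, and it only uses $p_k\le p_j$ rather than strict inequality.

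Your ``consistency check'' is essentially the paper's proof. Your single inequality $\alpha_i(p_k-p_j)\ge u_{i,k}-u_{i,j}>0$ collapses the paper's two-case analysis into one line.

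What the dual formulation buys the paper is uniformity. The same complementary-slackness template proves Lemma~\ref{lem:equal-dom} and is reused in Lemma~\ref{lem:support-implications}. In Lemma~\ref{lem:equal-dom}, shifting mass from $j$ to $k$ yields no utility gain on its own. An exchange proof there would need a second step: spend the freed budget $\theta(p_j-p_k)>0$ on a small amount of a strictly better good.
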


\begin{proof}
Suppose, to the contrary, that some optimal bundle buys good $j$ with positive amount, that is $x_{i,j}>0$. By Lemma~\ref{lem:cs}, there are optimal dual variables with $\alpha_i p_j+\mu_i=u_{i,j}$.

If $\alpha_i>0$, then $\alpha_i p_k+\mu_i<\alpha_i p_j+\mu_i=u_{i,j}<u_{i,k}$, which violates the dual constraint for good $k$.

If $\alpha_i=0$, then $\mu_i=u_{i,j}<u_{i,k}$, again violating the dual constraint for good $k$. Hence $x_{i,j}=0$ in every optimal bundle.
\end{proof}

\begin{lemma}
\label{lem:equal-dom}
Fix an agent $i$ and prices $p$. Let goods $j,k$ satisfy $p_k<p_j$. If $u_{i,k}=u_{i,j}<\max_\ell u_{i,\ell}$, then every optimal bundle of agent $i$ satisfies $x_{i,j}=0$.
\end{lemma}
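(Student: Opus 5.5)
The plan mirrors the proof of Lemma~\ref{lem:strict-dom}. We argue by contradiction via complementary slackness (Lemma~\ref{lem:cs}), with one extra twist to handle the case where the dual price multiplier vanishes. Suppose some optimal bundle $x_i$ has $x_{i,j}>0$. By Lemma~\ref{lem:cs} there exist dual-feasible $\alpha_i\ge 0$ and $\mu_i$ with $\alpha_i p_j+\mu_i=u_{i,j}$ that also satisfy \eqref{eq:cs-budget}. We split on the sign of $\alpha_i$.

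\emph{Case $\alpha_i>0$.} Since $p_k<p_j$, we get
\[
\alpha_i p_k+\mu_i<\alpha_i p_j+\mu_i=u_{i,j}=u_{i,k}.
\]
This violates the dual constraint for good $k$, so the case is impossible.

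\emph{Case $\alpha_i=0$.} Here $\mu_i=u_{i,j}$. Pick a good $\ell$ with $u_{i,\ell}=\max_{\ell'}u_{i,\ell'}$. The dual constraint for $\ell$ requires
\[
0\cdot p_\ell+\mu_i\ge u_{i,\ell}>u_{i,j}=\mu_i,
\]
which is again a contradiction. This is exactly where the hypothesis $u_{i,j}<\max_\ell u_{i,\ell}$ is used: without it, $\alpha_i=0$ and $\mu_i=\max u_i$ is a legitimate dual solution (the budget is slack and the agent is satiated), and buying the pricier equal-utility good $j$ can indeed be optimal.

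Thus in both cases $x_{i,j}=0$ for every optimal bundle. The only step needing care is the second case; the first is identical to Lemma~\ref{lem:strict-dom} with strict inequality replaced by equality of utilities.
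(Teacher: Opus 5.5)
Your proof is correct and matches the paper's argument exactly. Both invoke complementary slackness (Lemma~\ref{lem:cs}) and split on $\alpha_i$: when $\alpha_i>0$ the dual constraint at the cheaper equal-utility good $k$ fails, and when $\alpha_i=0$ the dual constraint at a utility-maximizing good $\ell$ fails, which is exactly where the hypothesis $u_{i,j}<\max_\ell u_{i,\ell}$ is used.
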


\begin{proof}
Suppose, to the contrary, that some optimal bundle buys good $j$ with positive amount, that is $x_{i,j}>0$.. By Lemma~\ref{lem:cs}, there are optimal dual variables with $\alpha_i p_j+\mu_i=u_{i,j}$.

If $\alpha_i>0$, then $\alpha_i p_k+\mu_i<\alpha_i p_j+\mu_i=u_{i,j}=u_{i,k},$which violates the dual constraint for good $k$.

If $\alpha_i=0$, then $\mu_i=u_{i,j}$. Since $u_{i,j}<\max_\ell u_{i,\ell}$, there is a good $\ell$ with $u_{i,\ell}>u_{i,j}$, and the dual constraint $\mu_i\ge u_{i,\ell}$ is violated. Hence $x_{i,j}=0$ in every optimal bundle.
\end{proof}

Now we need to analyze the equilibrium price.

\begin{lemma}
\label{lem:high-prices}
Every exact HZ equilibrium of this instance satisfies
\[
        p_c>1,\qquad p_d>1,
        \qquad p_e>1.
\]
\end{lemma}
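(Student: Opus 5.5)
The plan is a direct overdemand argument for each of the three goods, using only the definition of an exact HZ equilibrium (Definition~\ref{def:HZE}). The point is that in the matrix $U$ each of the goods $c$, $d$, $e$ is a top good for more agents than there are units of it. If its price were at most $1$, all of those agents would demand only top goods, and market clearing would fail.

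\emph{Step 1: $p_d>1$ and $p_e>1$.} Suppose for contradiction that $p_d\le 1$ (the case $p_e\le 1$ is symmetric, since $d$ and $e$ are identical columns of $U$).
\begin{enumerate}[label=(\roman*)]
\item For each type-$T$ agent $i\in\{1,2,3\}$, the bundle buying one unit of $d$ satisfies the unit-demand and budget constraints. It gives utility $1=\max_j u_{i,j}$, so $\mathrm{value}_p(i)=1$.
\item By condition (d), $\sum_j u_{i,j}x_{i,j}=1$. Since $\sum_j x_{i,j}=1$ and every $u_{i,j}\le 1$, this forces $x_{i,j}=0$ whenever $u_{i,j}<1$. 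Thus $x_i$ is supported on $\{d,e\}$.
\item Summing over the three type-$T$ agents gives $\sum_{i=1}^3(x_{i,d}+x_{i,e})=3$.
\item Condition (a) gives $\sum_{i=1}^5(x_{i,d}+x_{i,e})=2$, and allocations are nonnegative. This contradicts (iii).
\end{enumerate}

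\emph{Step 2: $p_c>1$.} The argument is the same with agents $4$ and $5$. Both value good $c$ at $1$, and $1$ is their maximum utility.
\begin{enumerate}[label=(\roman*)]
\item If $p_c\le 1$, the all-$c$ bundle is affordable for each of them, so each has $\mathrm{value}_p(i)=1$.
\item By the same reasoning as in Step 1, each of agents $4$ and $5$ must buy only good $c$, since $c$ is the unique good giving them utility $1$.
\item This requires two units of $c$, while condition (a) says only one unit is available. Contradiction.
\end{enumerate}

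The only step needing care is the implication from ``optimal utility equals the maximal utility $1$'' to ``the bundle is supported on utility-$1$ goods''. This follows because the utility is a convex combination of entries of $u_i$, all at most $1$. Lemma~\ref{lem:cs} is not needed here. Neither is any price normalization: Lemma~\ref{lemma:PriceScaling} preserves the sign of $p_j-1$, so the conclusion is invariant under that rescaling in any case.
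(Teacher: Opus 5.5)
Your proposal is correct and follows the paper's proof essentially step for step. The argument is the same overdemand argument: if $p_c\le 1$, agents $4$ and $5$ would each need a full unit of $c$; if $p_d\le 1$ (or $p_e\le 1$), the three type-$T$ agents would need three units from $\{d,e\}$, whose total supply is $2$. Your explicit justification that an optimal utility of $1$ forces the bundle onto utility-$1$ goods is a detail the paper leaves implicit.
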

\begin{proof}
If $p_c\le 1$, then agents $4$ and $5$ can each buy one full unit of good $c$ and obtain utility $1$. For both agents, $c$ is the unique good with utility $1$. Hence any optimal bundle attaining utility $1$ must put the full unit on good $c$, so we would have $x_{4,c}=x_{5,c}=1$, contradicting the unit supply of $c$. Thus $p_c>1$.

If $p_d\le 1$, then each type-$T$ agent can buy one full unit of good $d$ and obtain utility $1$. The only goods of utility $1$ for a type-$T$ agent are $d$ and $e$. Therefore the three type-$T$ agents would demand a total of three units from $\{d,e\}$, whereas these two goods have total supply $2$. This is impossible. Hence $p_d>1$. The same argument gives $p_e>1$.
\end{proof}

\begin{lemma}
\label{lem:low-prices}
Every exact HZ equilibrium of this instance satisfies $p_a<1$ and $p_b<1$.
\end{lemma}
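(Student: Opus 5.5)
The plan is to argue by contradiction separately for $p_a$ and $p_b$. Two ingredients do the work: the aggregate budget inequality and the complementary-slackness characterization of Lemma~\ref{lem:cs}. Summing the budget constraints over agents and using market clearing and unit demand gives
\[
\sum_{j\in\{a,b,c,d,e\}}(p_j-1)=\sum_i\sum_j (p_j-1)x_{i,j}\le 0 .
\]
Combined with Lemma~\ref{lem:high-prices} ($p_c,p_d,p_e>1$), this shows that \emph{any} two goods other than the one on which we focus have prices summing to strictly less than $2$. For example, $p_b+p_c<2$ whenever $p_a\ge 1$, and $p_a+p_d<2$ whenever $p_b\ge 1$.

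\emph{Step 1: $p_a<1$.} Suppose $p_a\ge 1$. First I need $p_b<p_a$; this follows from the display, since $(p_a-1)+(p_b-1)<0$. Type-$T$ agents and agent $5$ have $u_b=\tfrac12>0=u_a$, so Lemma~\ref{lem:strict-dom} forbids them from buying $a$. Hence agent $4$ holds all of $a$, i.e.\ $x_{4,a}=1$. The budget then forces $p_a=1$, and agent $4$ gets utility exactly $\tfrac12$. Now consider the alternative bundle for agent $4$ that mixes $c$ and $b$ with weight $\lambda=(1-p_b)/(p_c-p_b)$ on $c$. It is affordable because $p_b<1<p_c$, and it yields utility $\lambda$. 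Moreover, $\lambda>\tfrac12$ is equivalent to $p_b+p_c<2$, which holds by the aggregate inequality (with $p_a=1$ and $p_d,p_e>1$). This contradicts the optimality of $x_4$.

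\emph{Step 2: $p_b<1$.} Suppose $p_b\ge 1$; by Step 1, $p_a<1\le p_b$. Agent $4$ has $u_{4,b}=0<\tfrac12=u_{4,a}$, so Lemma~\ref{lem:strict-dom} excludes agent $4$ from $b$. Since $b$ must be sold, some type-$T$ agent or agent $5$, say $i$, has $x_{i,b}>0$. In both cases $u_{i,a}=0$, $u_{i,b}=\tfrac12$, and there is a utility-$1$ good with price $>1$: $d$ for type $T$, or $c$ for agent $5$. Call this good $h$.

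By Lemma~\ref{lem:cs} there exist $\alpha\ge0$ and $\mu$ with
\[
\alpha p_b+\mu=\tfrac12,\qquad \alpha p_a+\mu\ge 0,\qquad \alpha p_h+\mu\ge 1 .
\]
Take the convex combination with weight $\theta=(1-p_a)/(p_h-p_a)$ on $h$ and $1-\theta$ on $a$. This combination has price exactly $1\le p_b$, and it gives
\[
\alpha\cdot 1+\mu\ge \theta .
\]
If $\alpha>0$, then $\alpha p_b+\mu\ge\alpha+\mu\ge\theta$. If $\alpha=0$, then directly $\mu\ge 1$. Either way we get $\tfrac12\ge\theta$, i.e.\ $p_a+p_h\ge 2$. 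But the aggregate inequality with $p_b\ge1$ and the remaining prices $>1$ gives $p_a+p_h<2$, a contradiction.

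The main obstacle I expect is the case bookkeeping in Step 2: checking that the chosen high-utility good $h$ really has the right utility for whichever agent buys $b$, and that the strict inequality survives when $\alpha=0$. I would handle this uniformly through the dual inequalities as above, rather than by exhibiting explicit improving bundles.
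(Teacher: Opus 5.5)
Your proof is correct, and it takes a different route from the paper's.

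\textbf{How the paper argues.} The paper's argument is global. It first uses the aggregate budget to rule out $p_a,p_b\ge 1$ holding simultaneously. In each remaining case it uses Lemmas~\ref{lem:strict-dom} and~\ref{lem:equal-dom} to pin down which agents can clear which goods. It then concludes that some group of agents must buy goods whose total price exceeds the group's budget: agents $4,5$ clearing $a,c$ with $p_a+p_c>2$, or type-$T$ agents clearing $b,d,e$ with $p_b+p_d+p_e>3$. In the case $p_b\ge 1$ this needs a long support analysis of agents $4$ and $5$: ruling out $b$ for agent $4$, computing $x_{4,c}=A/(A+C)$, and showing $x_{5,b}=2B/(A-B)\le 0$.

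\textbf{How your argument differs.} Your argument is local. In each case a single agent's optimality fails, witnessed by a swap to a bundle of price exactly $1$: $b$ mixed with $c$ for agent $4$, or $a$ mixed with a utility-$1$ good $h$ for whichever agent buys $b$. The only global input is $\sum_j(p_j-1)\le 0$, which supplies the bound $p_{j'}+p_h<2$. Handling $p_a\ge 1$ first automatically covers the case where both prices are at least $1$. Using $p_a<1$ in Step~2 then lets you exclude agent $4$ from $b$ and spares you computing any allocation. Your version is therefore considerably shorter. The paper's version, by contrast, rehearses the support bookkeeping it reuses in Lemmas~\ref{lem:pd-pe}--\ref{lem:support}.

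\textbf{Two cosmetic points.} First, your general claim that ``any two goods other than the one on which we focus'' have price sum below $2$ is overstated; for example, $p_c+p_d>2$. What is true, and what you actually use, is this: if one of $a,b$ has price at least $1$, then the other one together with any single good from $\{c,d,e\}$ has price sum below $2$. Second, the case split on $\alpha$ in Step~2 is unnecessary, since $p_b\ge 1$ and $\alpha\ge 0$ give $\alpha p_b+\mu\ge\alpha+\mu$ directly.
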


\begin{proof}
First, $p_a$ and $p_b$ cannot both be at least $1$. By market clearing, total expenditure equals the sum of prices:
\[
        \sum_i\sum_j p_jx_{i,j}=\sum_j p_j.
\]
Since every agent spends at most one dollar, $\sum_jp_j\le 5$. If $p_a\ge 1$ and $p_b\ge 1$, then, by Lemma~\ref{lem:high-prices}, $p_c,p_d,p_e>1$, giving $\sum_jp_j>5$, a contradiction. Thus at least one of $p_a,p_b$ is below $1$.

Suppose first that $p_a\ge 1$ and $p_b<1$. For a type-$T$ agent, good $b$ is cheaper than both $a$ and $c$ and has strictly higher utility than both; by Lemma~\ref{lem:strict-dom}, type-$T$ agents do not buy $a$ or $c$. For agent $4$, good $b$ has the same utility as $d$ and $e$, is cheaper than both, and agent $4$ has a strictly higher-utility good $c$; by Lemma~\ref{lem:equal-dom}, agent $4$ does not buy $d$ or $e$. For agent $5$, good $b$ is cheaper than $d$ and $e$ and has
strictly higher utility than both; by Lemma~\ref{lem:strict-dom}, agent $5$ does not buy $d$ or $e$.

Consequently, $d$ and $e$ must be cleared by the three type-$T$ agents. Since these agents do not buy $a$ or $c$, their remaining aggregate demand of one unit must be placed on good $b$, exhausting all of $b$. The remaining agents, $4$ and $5$, must then clear goods $a$ and $c$. But $p_a+p_c>1+1=2,$ whereas agents $4$ and $5$ have total budget at most $2$, a contradiction.

Now suppose that $p_b\ge 1$ and $p_a<1$. Type-$T$ agents do not buy $c$: goods $a$ and $c$ have the same utility $0$, good $a$ is cheaper, and type-$T$ agents have higher-utility goods, so this uses Lemma~\ref{lem:equal-dom}. Agent $4$ does not buy $d$ or $e$: good $a$ is cheaper and has strictly higher utility, so this uses Lemma~\ref{lem:strict-dom}. Agent $5$ does not buy $d$ or $e$: good $a$ has the same utility $0$, is cheaper, and agent $5$ has a higher-utility good, so this uses Lemma~\ref{lem:equal-dom}. Hence $d$ and $e$ can only be cleared by type-$T$ agents, and $c$ can only be cleared by agents $4$ and $5$. Since $p_c>1$, no single agent can buy exactly one unit of $c$ under the budget constraint. Thus both agents $4$ and $5$ buy a positive amount of $c$.

Set
\[
        A=1-p_a>0,
        \qquad B=1-p_b\le 0,
        \qquad C=p_c-1>0.
\]
We first show that agent $4$ cannot buy good $b$. If agent $4$ buys $b$ and $c$ but not $a$, then, since $p_b\ge 1$, $p_c>1$, and its share of $c$ is positive, its expenditure is strictly larger than $1$. If agent $4$ buys all three goods $a,b,c$ with positive amounts, then by Lemma~\ref{lem:cs} , we have:
\[
        \alpha_4p_a+\mu_4=\frac12,
        \qquad
        \alpha_4p_b+\mu_4=0,
        \qquad
        \alpha_4p_c+\mu_4=1.
\]
Subtracting the first two equations and the first and third equations yields $B=2A+C>0$, contradicting $B\le 0$. Therefore agent $4$ uses only $a$ and $c$. Since agent $4$ buys positive amounts of two goods with different utilities, its dual slope satisfies $\alpha_4>0$, and the budget binds by \eqref{eq:cs-budget}. The row-sum and budget equations give
\[
        x_{4,c}=\frac{A}{A+C},
        \qquad
        x_{4,a}=\frac{C}{A+C}.
\]
Therefore
\[
        x_{5,c}=1-x_{4,c}=\frac{C}{A+C}>0.
\]
Agent $5$ must buy a positive amount of $a$; otherwise its remaining demand, besides its positive share of $c$, would have to be assigned to $b$, and since $p_b\ge 1$ and $p_c>1$, its expenditure would exceed $1$. If agent $5$ also buys $b$ with positive amount, then by Lemma~\ref{lem:cs} , we have:
\[
        \alpha_5p_a+\mu_5=0,
        \qquad
        \alpha_5p_b+\mu_5=\frac12,
        \qquad
        \alpha_5p_c+\mu_5=1,
\]
and hence
\begin{equation}
        A=2B+C.
        \label{eq:A-2B-C-low-price-case}
\end{equation}
The row-sum and binding-budget equations imply
\[
        x_{5,a}+x_{5,b}=\frac{A}{A+C},
        \qquad
        Ax_{5,a}+Bx_{5,b}=C\cdot\frac{C}{A+C}.
\]
Together with \eqref{eq:A-2B-C-low-price-case}, these equations yield
\[
        x_{5,b}=\frac{2B}{A-B}\le 0,
\]
contradicting the assumption that agent $5$ buys $b$ with positive amount. Hence agent $5$ also
uses only $a$ and $c$.

Thus agents $4$ and $5$ jointly clear $a$ and $c$, while goods $b,d,e$ must all be purchased by the
three type-$T$ agents. But
\[
        p_b+p_d+p_e>1+1+1=3,
\]
which exceeds the total budget of the three type-$T$ agents, a contradiction. Therefore the case
$p_b\ge 1$ and $p_a<1$ is impossible. We conclude that $p_a<1$ and $p_b<1$.
\end{proof}

Because agents $1,2,3$ have identical utility functions and we want to prove that $\frac{p_c-1}{1-p_a}=\frac{1+\sqrt{21}}{10}$, by the Lemma~\ref{lem:SameAgent}, we may replace their three bundles by their average bundle and assign this same average bundle to all three agents. Without loss of generality, the three type-$T$ agents receive a common bundle, which we denote by $x_T$.

\begin{lemma}
\label{lem:pd-pe}
Every exact HZ equilibrium of this instance satisfies $p_d=p_e$.
\end{lemma}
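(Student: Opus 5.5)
We argue by contradiction: assume without loss of generality that $p_d<p_e$; the case $p_e<p_d$ is symmetric, since $d$ and $e$ are interchangeable in $U$. We already know that $p_d,p_e>1$ (Lemma~\ref{lem:high-prices}) and $p_a,p_b<1$ (Lemma~\ref{lem:low-prices}).

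First we determine who can buy good $e$. For a type-$T$ agent, $u_{T,d}=u_{T,e}=1=\max_\ell u_{T,\ell}$. So Lemma~\ref{lem:equal-dom} does not apply directly, and we argue instead through Lemma~\ref{lem:cs}. Suppose a type-$T$ agent buys $e$ with positive amount. Then $\alpha_T p_e+\mu_T=1$, while dual feasibility at $d$ requires $\alpha_T p_d+\mu_T\ge 1$. Since $p_d<p_e$, this forces $\alpha_T=0$ and hence $\mu_T=1$. But then the dual objective is $\alpha_T+\mu_T=1$, so the agent's optimal value is $1$.

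Any bundle achieving value $1$ must lie entirely on $\{d,e\}$. Such a bundle costs more than $1$ because both prices exceed $1$, which breaks the budget. Hence type-$T$ agents never buy $e$.

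For agents $4$ and $5$, good $e$ has utility $0$. Goods $a$ and $b$ are cheaper than $e$ and carry utilities $\frac12,0$ for agent $4$ and $0,\frac12$ for agent $5$, and both agents have the strictly better good $c$. So Lemma~\ref{lem:strict-dom} or Lemma~\ref{lem:equal-dom}, applied with $k\in\{a,b\}$, shows that neither agent $4$ nor agent $5$ buys $e$. (For agent $4$, use $k=b$ with equal utility $0$; for agent $5$, use $k=a$.)

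Consequently nobody buys $e$, which contradicts market clearing $\sum_i x_{i,e}=1$. Therefore $p_d=p_e$.

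The only delicate step is the type-$T$ case. There the two goods have the top utility, so the domination lemmas are unavailable, and the argument must combine $\alpha_T=0$ with $p_d,p_e>1$ to show that an optimal value of $1$ is infeasible. It is also worth noting why $\alpha_T=0$ yields value exactly $1$: dual feasibility forces $\mu_T\ge\max_\ell u_{T,\ell}=1$, so $\mu_T=1$, and strong duality then gives value $1$.
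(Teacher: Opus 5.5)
Your proof is correct, but it takes a different route from the paper.

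\textbf{The paper's route.} It first uses Lemma~\ref{lem:equal-dom} to show that $d$ and $e$ can only be bought by type-$T$ agents. It then invokes Lemma~\ref{lem:SameAgent} to replace the three type-$T$ bundles by their average, so the common bundle has $x_{T,d}=x_{T,e}=1/3$. Since $2/3<1$, this bundle also contains a lower-utility good, which forces $\alpha_T>0$. The two complementary-slackness equalities $\alpha_Tp_d+\mu_T=1=\alpha_Tp_e+\mu_T$ then give $p_d=p_e$ directly.

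\textbf{Your route.} You argue by contradiction that the strictly more expensive of $d,e$ would go unsold. Agents $4$ and $5$ are excluded by domination. A type-$T$ buyer of the pricier good would need $\alpha_T=0$, which confines her bundle to $\{d,e\}$, and that bundle is unaffordable because $p_d,p_e>1$.

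\textbf{Comparison.} The ingredient that rules out $\alpha_T=0$ differs. The paper uses the presence of a low-utility good in the averaged support; you use Lemma~\ref{lem:high-prices}. Your version works agent by agent and needs no symmetrization. It therefore sidesteps the possibility that an individual type-$T$ agent buys only one of $d,e$, which is exactly why the paper averages first. The paper's version is a direct derivation that simultaneously establishes $\alpha_T>0$ and the support pattern of the representative type-$T$ bundle. Both facts are reused immediately afterward for the binding type-$T$ budget equation in the rest of the irrationality argument.
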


\begin{proof}
By Lemmas~\ref{lem:high-prices} and~\ref{lem:low-prices},
\[
        p_a,p_b<1<p_c,p_d,p_e.
\]
The following three exclusions all use Lemma~\ref{lem:equal-dom}. Type-$T$ agents do not buy $c$: goods $a$ and $c$ have the same utility $0$, good $a$ is cheaper, and type-$T$ agents have higher-utility goods. Agent $4$ does not buy $d$ or $e$: goods $d,e$ and $b$ have the same utility $0$, good $b$ is cheaper, and agent $4$ has a higher-utility good. Agent $5$ does not buy $d$ or $e$: goods $d,e$ and $a$ have the same utility $0$, good $a$ is cheaper, and agent $5$ has a higher-utility good.

Therefore $d$ and $e$ can only be cleared by type-$T$ agents. After symmetrization, the representative type-$T$ bundle satisfies $x_{T,d}=x_{T,e}=1/3>0$. Since $x_{T,d}+x_{T,e}=2/3<1$, this bundle also contains a positive amount of $a$ or $b$, whose utility is strictly below $1$. Hence the optimal dual slope for a type-$T$ agent must satisfy $\alpha_T>0$: if $\alpha_T=0$, then all goods bought with positive amount would have the same utility $\mu_T$ by Lemma~\ref{lem:cs}, contradicting the presence of both a utility-$1$ good and a lower-utility good in the positive support.

Since $d$ and $e$ are both in the positive support,
\[
        \alpha_Tp_d+\mu_T=1,
        \qquad
        \alpha_Tp_e+\mu_T=1.
\]
As $\alpha_T>0$, we obtain $p_d=p_e$.
\end{proof}

For the remainder of the proof,  we write
\[
        p_h=p_d=p_e,
        \qquad
        A=1-p_a,
        \qquad
        B=1-p_b,
        \qquad
        C=p_c-1,
        \qquad
        H=p_h-1.
\]
By the preceding lemmas, $A,B,C,H>0$.

From the proof of Lemma~\ref{lem:pd-pe}, type-$T$ agents do not buy $c$, and agents $4$ and $5$ do not buy $d$ or $e$. Hence the three type-$T$ agents jointly clear $d$ and $e$, while agents $4$ and $5$ jointly clear $c$. Let $t_a$ and $t_b$ denote the aggregate demands of the three type-$T$ agents for goods $a$ and $b$:
\[
        t_a=\sum_{i=1}^{3}x_{i,a},
        \qquad
        t_b=\sum_{i=1}^{3}x_{i,b}.
\]
The row-sum constraints and market clearing imply
\begin{align}
        t_a+t_b&=1, \label{eq:tab-one}
\end{align}
\begin{align}
    x_{4,a}+x_{4,b}+x_{4,c}&=1,
    &x_{5,a}+x_{5,b}+x_{5,c}&=1,
    &x_{4,c}+x_{5,c}&=1, \label{eq:rows-c-clearing}
\end{align}
\begin{align}
    t_a+x_{4,a}+x_{5,a}&=1,
    &t_b+x_{4,b}+x_{5,b}&=1. \label{eq:market-low}
\end{align}

Each of the three relevant bundles contains a positive amount of a high-price good and a positive amount of a low-price good, and these goods have different utilities for the corresponding agent. Thus the corresponding dual slope is positive, and \eqref{eq:cs-budget} implies that each budget binds. Expressing the binding budget equations relative to price $1$ gives
\begin{align}
        At_a+Bt_b&=2H, \label{eq:T-budget}
\end{align}
\begin{align}
    Ax_{4,a}+Bx_{4,b}&=Cx_{4,c}, \label{eq:4-budget}
\end{align}
\begin{align}
    Ax_{5,a}+Bx_{5,b}&=Cx_{5,c}. \label{eq:5-budget}
\end{align}

Adding \eqref{eq:T-budget}--\eqref{eq:5-budget} and using
\eqref{eq:rows-c-clearing}--\eqref{eq:market-low} yields the total-budget identity
\begin{equation}
        A+B=C+2H.
        \label{eq:total-budget}
\end{equation}

Define the positive low-price supports
\[
        F_T=\{j\in\{a,b\}:t_j>0\},
        \qquad
        F_4=\{j\in\{a,b\}:x_{4,j}>0\},
        \qquad
        F_5=\{j\in\{a,b\}:x_{5,j}>0\}.
\]
All three sets are nonempty: type-$T$ agents have aggregate demand $1$ on $a,b$, and agents $4$ and
$5$ cannot each buy only good $c$, since $p_c>1$ and each budget is at most $1$.

\begin{lemma}
\label{lem:support-implications}
Under the above notation, the following implications hold:
\begin{align}
        F_T=\{a,b\} &\implies H=A-2B, \label{eq:FT-ab}
\end{align}
\begin{align}
        F_4=\{b\} &\implies B-2A-C\ge 0, \label{eq:F4-b}
\end{align}
\begin{align}
        F_4=\{a,b\} &\implies B=2A+C, \label{eq:F4-ab}
\end{align}
\begin{align}
        F_5=\{a\} &\implies A-2B-C\ge 0, \label{eq:F5-a}
\end{align}
\begin{align}
        F_5=\{a,b\} &\implies A=2B+C. \label{eq:F5-ab}
\end{align}
\end{lemma}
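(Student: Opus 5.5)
Each of the five implications comes from complementary slackness, Lemma~\ref{lem:cs}, applied to the relevant agent. The facts established so far give each agent a good in its positive support together with a known utility. We then write the equality condition \eqref{eq:cs-support} for the supported goods and the dual inequalities for the unsupported ones. After that we eliminate $\alpha_i$ and $\mu_i$, using $\alpha_i>0$, which was shown just before the statement because every relevant bundle mixes a high-price good and a low-price good of different utilities. Throughout, we rewrite prices relative to $1$: $p_a=1-A$, $p_b=1-B$, $p_c=1+C$, $p_h=1+H$. This makes each differenced equation linear in $A,B,C,H$.

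\textbf{Implication \eqref{eq:FT-ab}.} Suppose $F_T=\{a,b\}$. The representative type-$T$ bundle then buys $a$, $b$ and $d$ with positive amounts, since $x_{T,d}=1/3$. This gives
\[
\alpha_Tp_a+\mu_T=0,\qquad \alpha_Tp_b+\mu_T=\tfrac12,\qquad \alpha_Tp_h+\mu_T=1 .
\]
Differencing gives $\alpha_T(p_b-p_a)=\tfrac12$ and $\alpha_T(p_h-p_b)=\tfrac12$, hence $p_h-p_b=p_b-p_a$. In the new variables this reads $H+B=A-B$, that is, $H=A-2B$. Here $t_j>0$ means that some type-$T$ agent buys $j$, and after symmetrization the common bundle $x_T$ buys $j$.

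\textbf{Agents $4$ and $5$.} Each of these agents buys $c$ with positive amount. This is because $c$ is cleared only by agents $4$ and $5$, and $p_c>1$ prevents either of them from taking all of it; this argument is already in the paper. Hence $\alpha_ip_c+\mu_i=1$ for $i=4,5$.

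For agent $4$ with $F_4=\{a,b\}$, the support equations at $a$ (utility $\tfrac12$), $b$ (utility $0$) and $c$ (utility $1$) give $B=2A+C$. This is exactly the computation in the proof of Lemma~\ref{lem:low-prices}. For agent $4$ with $F_4=\{b\}$, the equalities hold at $b$ and $c$, which gives $\alpha_4(p_c-p_b)=1$. The dual constraint at $a$ is $\alpha_4p_a+\mu_4\ge\tfrac12$, which becomes $\alpha_4(p_c-p_a)\le\tfrac12$. Substituting $\alpha_4=1/(C+B)$ yields $C+A\le\tfrac12(C+B)$, that is, $B-2A-C\ge0$.

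Agent $5$ is symmetric with the roles of $a$ and $b$ swapped, since $u_{5,a}=0$ and $u_{5,b}=\tfrac12$. The case $F_5=\{a,b\}$ gives $A=2B+C$, and the case $F_5=\{a\}$ gives $A-2B-C\ge0$ via the dual inequality at $b$.

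The main point of care is making sure the sign conventions are right, since $A=1-p_a$ and $B=1-p_b$ enter with a minus sign while $C$ and $H$ enter with a plus sign. A second point is justifying that the dual inequalities hold for the \emph{same} dual pair $(\alpha_i,\mu_i)$. This is immediate from Lemma~\ref{lem:cs}, because dual feasibility is part of its conclusion. Everything else is routine elimination.
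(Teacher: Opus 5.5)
Your proposal is correct and takes essentially the same approach as the paper: apply Lemma~\ref{lem:cs} with equalities on the supported goods and the dual inequality at the unsupported low-price good, then eliminate $(\alpha_i,\mu_i)$ in the variables $A,B,C,H$. The differences are cosmetic: you eliminate $\mu_4$ through the equation at $c$ rather than at $b$, and your separate appeal to $\alpha_i>0$ is unnecessary, since the differenced equations (e.g.\ $\alpha_4(B+C)=1$) already force it.
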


\begin{proof}
All implications follow directly from the positive-support equalities and the dual inequalities in
Lemma~\ref{lem:cs}.

If $F_T=\{a,b\}$, then the representative type-$T$ bundle buys $a,b,d,e$ with positive amounts.
Let $(\alpha_T,\mu_T)$ be optimal dual variables for a type-$T$ agent. By Lemma~\ref{lem:cs} on
$a,b,h$ gives $\alpha_Tp_a+\mu_T=0$, $\alpha_Tp_b+\mu_T=\frac12$ and $\alpha_Tp_h+\mu_T=1$.
Thus $\alpha_T(A-B)=\frac12$ and $\alpha_T(H+B)=\frac12$
and hence $H=A-2B$.

If $F_4=\{b\}$, then agent $4$ buys $b$ and $c$ with positive amounts and does not buy $a$.
By Lemma~\ref{lem:cs} on $b,c$, we have $\alpha_4p_b+\mu_4=0$ and $\alpha_4p_c+\mu_4=1$,
so $\alpha_4=1/(B+C)>0$. The dual inequality for good $a$ requires
\[
        \alpha_4p_a+\mu_4\ge \frac12
        \quad\Longleftrightarrow\quad
        \frac{B-A}{B+C}\ge \frac12,
\]
which is equivalent to $B-2A-C\ge 0$.

If $F_4=\{a,b\}$, then agent $4$ buys $a,b,c$ with positive amounts. By Lemma~\ref{lem:cs}, we have $\alpha_4p_a+\mu_4=\frac12$, $\alpha_4p_b+\mu_4=0$  and $\alpha_4p_c+\mu_4=1$
and subtracting equations yields $B=2A+C$.

If $F_5=\{a\}$, then agent $5$ buys $a$ and $c$ with positive amounts and does not buy $b$.
By Lemma~\ref{lem:cs}, we have $\alpha_5p_a+\mu_5=0$ and $\alpha_5p_c+\mu_5=1$,
so $\alpha_5=1/(A+C)>0$. The dual inequality for good $b$ requires
\[
        \alpha_5p_b+\mu_5\ge \frac12
        \quad\Longleftrightarrow\quad
        \frac{A-B}{A+C}\ge \frac12,
\]
which is equivalent to $A-2B-C\ge 0$.

If $F_5=\{a,b\}$, then agent $5$ buys $a,b,c$ with positive amounts. By Lemma~\ref{lem:cs}, we have $\alpha_5p_a+\mu_5=0$, $\alpha_5p_b+\mu_5=\frac12$ and $\alpha_5p_c+\mu_5=1$,
and subtracting equations yields $A=2B+C$.
\end{proof}

\begin{lemma}
\label{lem:support}
In every exact HZ equilibrium of this instance, the positive low-price supports are $F_T=\{a,b\}$, $F_4=\{a\}$ and $F_5=\{b\}$.
\end{lemma}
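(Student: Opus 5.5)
The plan is a finite case analysis over the three supports $F_T,F_4,F_5\subseteq\{a,b\}$, each of which is nonempty. We will feed each case into the implications of Lemma~\ref{lem:support-implications}, the binding budgets \eqref{eq:T-budget}--\eqref{eq:5-budget}, the total-budget identity \eqref{eq:total-budget}, and the clearing equations \eqref{eq:tab-one}--\eqref{eq:market-low}. Throughout we use $A,B,C,H>0$. Two kinds of contradiction recur. The first is a clearing contradiction: if only one agent can buy $a$ (or $b$), that agent must take a full unit of it, which is incompatible with its positive share of $c$. The second is a sign contradiction coming from the support equalities.

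\emph{Step 1: $b\notin F_4$, hence $F_4=\{a\}$.} If $b\in F_4$, then \eqref{eq:F4-b} or \eqref{eq:F4-ab} gives $B\ge 2A+C$, so $B>A$. Since we also have $A>0$, neither \eqref{eq:F5-a} nor \eqref{eq:F5-ab} can hold, so $a\notin F_5$ and $F_5=\{b\}$. Now we split on $F_T$.
\begin{itemize}
\item $F_T=\{a,b\}$: then \eqref{eq:FT-ab} gives $H=A-2B<0$, which is impossible.
\item $F_T=\{a\}$: complementary slackness for a type-$T$ agent on $a$ and $h$ gives $\alpha_T=1/(A+H)$, and the dual inequality for $b$ then requires $(A-B)/(A+H)\ge \frac12$. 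This fails because $A<B$.
\item $F_T=\{b\}$: the only possible buyer of $a$ is agent $4$, and only if $F_4=\{a,b\}$. Then $x_{4,a}=1$, contradicting $x_{4,c}>0$.
\end{itemize}

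\emph{Step 2: $a\notin F_5$, hence $F_5=\{b\}$.} Suppose $a\in F_5$; then $A\ge 2B+C>B$.
\begin{itemize}
\item $F_5=\{a\}$: good $b$ can only be bought by type-$T$ agents, so $t_b=1$ and $t_a=0$. Then \eqref{eq:T-budget} gives $2H=B$, and \eqref{eq:total-budget} gives $A=C$. This contradicts $A\ge 2B+C$.
\item $F_5=\{a,b\}$, so $A=2B+C$. If $F_T=\{b\}$, the same computation gives $A=C$, a contradiction. If $F_T=\{a\}$, then $b$ is bought only by agent $5$, so $x_{5,b}=1$, contradicting $x_{5,c}>0$. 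The remaining subcase is $F_T=\{a,b\}$. Here \eqref{eq:FT-ab} gives $H=A-2B=C$, and with \eqref{eq:total-budget} this pins everything down to $B=\tfrac23C$ and $A=\tfrac73C$. I would then solve the linear system formed by the budgets of agents $4$ and $5$, the row sums, and the clearing of $a$ and $b$. The aim is to show that some share, most likely $x_{5,b}$ or $t_b$, comes out nonpositive, in the same way the computation $x_{5,b}=2B/(A-B)$ was handled in Lemma~\ref{lem:low-prices}.
\end{itemize}
I expect this last subcase to be the main obstacle, since it is the only one not closed by a one-line sign or clearing argument.

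\emph{Step 3: $F_T=\{a,b\}$.} Now $F_4=\{a\}$ and $F_5=\{b\}$. If $F_T=\{a\}$, then $b$ is bought only by agent $5$, forcing $x_{5,b}=1$. If $F_T=\{b\}$, then $a$ is bought only by agent $4$, forcing $x_{4,a}=1$. Both contradict the positive shares of $c$. This leaves exactly the support pattern claimed in the lemma.
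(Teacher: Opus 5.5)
You leave the one nontrivial subcase unproved: at the end of Step 2 ($F_T=F_5=\{a,b\}$, $F_4=\{a\}$) you never derive a contradiction, and the mechanism you predict is the wrong one. The analogy with $x_{5,b}=2B/(A-B)$ in Lemma~\ref{lem:low-prices} does not carry over, because that sign argument relied on $B\le 0$, while here $B>0$.

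Here is the missing computation. With $F_4=\{a\}$, agent~4's row sum and binding budget give $x_{4,c}=A/(A+C)$. Hence $x_{5,c}=C/(A+C)$ and $x_{5,a}+x_{5,b}=A/(A+C)$. Agent~5's budget $Ax_{5,a}+Bx_{5,b}=C^2/(A+C)$ then gives $x_{5,b}=(A-C)/(A-B)=2B/(A-B)$. Substituting your values $A=\tfrac{7}{3}C$, $B=\tfrac{2}{3}C$:
\[
x_{4,c}=\tfrac{7}{10},\qquad x_{5,b}=\tfrac{4}{5}>0,\qquad t_b=1-x_{5,b}=\tfrac{1}{5}>0,\qquad x_{5,a}=\tfrac{7}{10}-\tfrac{4}{5}=-\tfrac{1}{10}<0 .
\]
So the contradiction comes from $x_{5,a}$, which violates nonnegativity and the assumption $a\in F_5$. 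It does not come from $x_{5,b}$ or $t_b$; both of the shares you named are positive, so checking only them would wrongly suggest the subcase is consistent. This is exactly the paper's final step, and inserting it completes your proof. Your repeated use of $x_{4,c},x_{5,c}>0$ is justified: $p_c>1$ prevents either agent from taking all of $c$.

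Apart from this, your argument is correct and differs from the paper's only in the order of elimination.
\begin{itemize}
\item The paper settles $F_T$ first. A singleton $F_T$ forces, by clearing, $F_4=F_5$ to be the other singleton. The total-budget identity then violates the corresponding inequality of Lemma~\ref{lem:support-implications}.
\item Once $F_T=\{a,b\}$, the relation $C=5B-A$ gives $A/5<B<A/2$. This excludes $b\in F_4$ in one line, leaving only $F_5$.
\item Your order is $F_4$ first (comparing $B$ with $A$), then $F_5$, then $F_T$ by clearing. This is also valid, and your extra complementary-slackness inequality for $F_T=\{a\}$ is correct.
\end{itemize}
Your Step~1 avoids the total-budget identity but needs more subcases. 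Both routes end in the same explicit computation for $F_T=F_5=\{a,b\}$, $F_4=\{a\}$.
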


\begin{proof}
We first rule out the two singleton cases for $F_T$.

If $F_T=\{a\}$, then $t_a=1$ and $t_b=0$. By \eqref{eq:T-budget}, $A=2H$. By the total-budget identity \eqref{eq:total-budget}, $C=B$. Market clearing in \eqref{eq:market-low} gives $x_{4,a}=x_{5,a}=0$. Since $F_4$ and $F_5$ are nonempty, $F_4=F_5=\{b\}$. But $F_4=\{b\}$ must satisfy \eqref{eq:F4-b}, namely $B-2A-C\ge 0$.
Substituting $C=B$ gives $-2A\ge 0$, contradicting $A>0$. Hence $F_T\ne\{a\}$.

If $F_T=\{b\}$, then $t_a=0$ and $t_b=1$. By \eqref{eq:T-budget}, $B=2H$. By \eqref{eq:total-budget}, $C=A$. Market clearing in \eqref{eq:market-low} gives $x_{4,b}=x_{5,b}=0$, so $F_4=F_5=\{a\}$. But $F_5=\{a\}$ must satisfy \eqref{eq:F5-a}, namely $A-2B-C\ge 0$.
Substituting $C=A$ gives $-2B\ge 0$, contradicting $B>0$. Hence $F_T\ne\{b\}$.

Therefore $F_T=\{a,b\}$. By \eqref{eq:FT-ab}, we have $H=A-2B$.
Combining this equality with \eqref{eq:total-budget} gives
\begin{equation}
        C=5B-A.
        \label{eq:C-5B-A}
\end{equation}
Since $C>0$ and $H>0$, we obtain
\begin{equation}
        \frac{A}{5}<B<\frac{A}{2}.
        \label{eq:B-range}
\end{equation}

We next identify $F_4$. If $F_4=\{b\}$, then \eqref{eq:F4-b} would give $B-2A-C\ge 0$. However, by \eqref{eq:B-range} and $C>0$, $B-2A-C<\frac{A}{2}-2A<0$, a contradiction. If $F_4=\{a,b\}$, then \eqref{eq:F4-ab} would give $B=2A+C>2A$, again contradicting $B<A/2$. Thus $F_4=\{a\}$.
Hence $x_{4,b}=0$. From $x_{4,a}+x_{4,c}=1$ and \eqref{eq:4-budget},
\begin{equation}
        x_{4,c}=\frac{A}{A+C},
        \qquad
        x_{4,a}=\frac{C}{A+C}.
        \label{eq:x4-ac}
\end{equation}

Finally, we identify $F_5$. If $F_5=\{a\}$, then $x_{5,b}=0$.  $t_b+x_{4,b}+x_{5,b}=1$ implies $t_b=1, t_a=0$.
This gives $F_T=\{b\}$, contradicting $F_T=\{a,b\}$.

If $F_5=\{a,b\}$, then \eqref{eq:F5-ab} and \eqref{eq:C-5B-A} give $A=2B+(5B-A)$, so we have  $B=\frac{2A}{7}$ and $C=\frac{3A}{7}$.
By \eqref{eq:x4-ac},
\[
        x_{4,c}=\frac{A}{A+C}=\frac{7}{10},
        \qquad
        x_{5,c}=\frac{3}{10}.
\]
Thus $x_{5,a}+x_{5,b}=7/10$. The budget equation \eqref{eq:5-budget} becomes
\[
        x_{5,a}+\frac{2}{7}x_{5,b}=\frac{9}{70}.
\]
Solving this equation with $x_{5,a}+x_{5,b}=7/10$ gives
\[
        x_{5,b}=\frac45,
        \qquad
        x_{5,a}=-\frac{1}{10},
\]
violating nonnegativity. Thus $F_5\ne\{a,b\}$. The only remaining possibility is $F_5=\{b\}$.
\end{proof}

\begin{proposition}
\label{prop:IrrationalRatio}
In every exact HZ equilibrium of this instance, we have
\[
\frac{p_c-1}{1-p_a}=\frac{1+\sqrt{21}}{10}.
\]
\end{proposition}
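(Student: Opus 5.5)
By Lemma~\ref{lem:support}, every exact HZ equilibrium has positive low-price supports $F_T=\{a,b\}$, $F_4=\{a\}$ and $F_5=\{b\}$. The plan is to turn these supports into two polynomial relations among $A=1-p_a$, $B=1-p_b$, $C=p_c-1$, all positive by Lemmas~\ref{lem:high-prices} and~\ref{lem:low-prices}, and then eliminate $B$. We may use the symmetrization of the type-$T$ agents via Lemma~\ref{lem:SameAgent}, since the claim concerns only prices, which that lemma does not change.

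The first relation is already available. The proof of Lemma~\ref{lem:support} derived \eqref{eq:C-5B-A}, $C=5B-A$, from $F_T=\{a,b\}$ via \eqref{eq:FT-ab} together with the total-budget identity \eqref{eq:total-budget}.

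For the second relation, use the bundles of agents $4$ and $5$. Agent $4$ has $F_4=\{a\}$, so \eqref{eq:x4-ac} gives $x_{4,c}=A/(A+C)$. Agent $5$ has $F_5=\{b\}$, so $x_{5,a}=0$ and $x_{5,b}+x_{5,c}=1$. The binding budget \eqref{eq:5-budget} then reads $Bx_{5,b}=Cx_{5,c}$, which gives $x_{5,c}=B/(B+C)$. Clearing of good $c$ in \eqref{eq:rows-c-clearing} requires
\[
\frac{A}{A+C}+\frac{B}{B+C}=1 .
\]
Clearing denominators gives $A(B+C)+B(A+C)=(A+C)(B+C)$, which simplifies to $AB=C^2$.

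Finally, substitute $B=(A+C)/5$ from $C=5B-A$ into $AB=C^2$. This gives $A(A+C)=5C^2$. Dividing by $A^2>0$ and setting $z=C/A$ yields $5z^2-z-1=0$. Its roots are $(1\pm\sqrt{21})/10$, and since $z>0$ we must have $z=(1+\sqrt{21})/10$, which is the claimed value of $(p_c-1)/(1-p_a)$. There is no real obstacle here, because all the case analysis was done in Lemma~\ref{lem:support}. The only care needed is to confirm that agent $5$'s budget binds, which holds because its support contains two goods of different utility, and that the sign of the root is forced by $A,C>0$.
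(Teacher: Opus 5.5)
Your proposal is correct and follows the paper's proof essentially step for step. You take the supports from Lemma~\ref{lem:support}, obtain $x_{4,c}=A/(A+C)$ and $x_{5,c}=B/(B+C)$ from the binding budgets, derive $AB=C^2$ from clearing of good $c$, and combine this with $C=5B-A$ to get $5z^2-z-1=0$, where the positive root is forced by $A,C>0$.
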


\begin{proof}
By Lemma~\ref{lem:support}, every exact HZ equilibrium of this instance satisfies
\[
        F_T=\{a,b\},
        \qquad
        F_4=\{a\},
        \qquad
        F_5=\{b\}.
\]
In this support structure, \eqref{eq:x4-ac} gives $x_{4,c}=\frac{A}{A+C}$.
Since agent $5$ uses only $b$ and $c$, the equations $x_{5,b}+x_{5,c}=1$ and \eqref{eq:5-budget} give $x_{5,c}=\frac{B}{B+C}$ and $x_{5,b}=\frac{C}{B+C}$.
Good $c$ clears, so
\begin{equation}
        \frac{A}{A+C}+\frac{B}{B+C}=1.
        \label{eq:c-clearing-final}
\end{equation}
Simplifying \eqref{eq:c-clearing-final} yields
\begin{equation}
        AB=C^2.
        \label{eq:AB-C2}
\end{equation}

On the other hand, $F_T=\{a,b\}$ gives $H=A-2B$, and the total-budget identity \eqref{eq:total-budget} gives $A+B=C+2H$.
Substituting $H=A-2B$ yields
\begin{equation}
        C=5B-A.
        \label{eq:C-5B-A-final}
\end{equation}
Let $z=\frac{C}{A}>0$.
By \eqref{eq:AB-C2}, we have $\frac{B}{A}=z^2$.
By \eqref{eq:C-5B-A-final}, $z=5\frac{B}{A}-1=5z^2-1$.
Hence
\[
        5z^2-z-1=0.
\]
Since $z>0$, it has to be that $z=\frac{1+\sqrt{21}}{10}$.

Returning to prices, $C=p_c-1$ and $A=1-p_a$, and therefore every exact HZ equilibrium satisfies
\[
        \frac{p_c-1}{1-p_a}
        =\frac{C}{A}
        =\frac{1+\sqrt{21}}{10}.
\]
\end{proof}

Since \(A=1-p_a>0\), if all prices were rational, then \(C/A=(p_c-1)/(1-p_a)\) would be rational. This contradicts
\[
C/A=\frac{1+\sqrt{21}}{10}.
\]
Therefore at least one price component is irrational. Theorem~\ref{IrrationalResult} holds.


\section{Conclusion}
This paper studies Hylland--Zeckhauser equilibria beyond the bi-valued case and makes two main contributions. On the algorithmic side, we give the first polynomial-time constant-error approximation algorithm for general multi-valued utilities. Our utility-stratification construction embeds a multi-valued market into a structured bi-valued expansion, applies the exact algorithm of Vazirani and Yannakakis, and aggregates the resulting equilibrium back to the original market. This yields an \(\epsilon\)-approximate HZ equilibrium with additive error \(\epsilon<1/e\).

On the structural side, we show that the rationality properties of the bi-valued case do not extend even to tri-valued utilities. In particular, we construct a \(5\times5\) HZ instance with utilities in \(\{0,\frac12,1\}\) such that every exact HZ equilibrium has an irrational price component.

Several questions remain open. A natural next step is to determine whether the \(1/e\) additive-error guarantee can be improved, or whether a matching lower bound exists. It is also open whether exact HZ equilibria with utilities in \(\{0,\frac12,1\}\) admit compact algebraic representations or efficient computation. More broadly, complementing recent hardness results~\cite{CCPY:22,BLXZ:26}, identifying the true constant approximability threshold for multi-valued HZ equilibria remains an important direction.

\section*{Acknowledgment}
We are grateful to Yixin Tao for helpful discussions.

\printbibliography

\appendix
\section{The Algorithm of Vazirani and Yannakakis~\cite{VY:25}}

The procedure for Algorithm~\ref{alg:VYalgorithm} is detailed below. Let $H = (A, B, E)$ be a bipartite graph consisting of $n$ agents in set $A$ and $n$ goods in set $B$. An edge $(i, j) \in E$ connects an agent $i \in A$ to a good $j \in B$ if and only if $u_{i,j} = 1$. For any subsets $A' \subseteq A$ and $B' \subseteq B$, we denote the induced subgraph of $H$ as $H[A', B']$. Let $N(S)$ denote the neighborhood of a subset of goods $S \subseteq B$, defined as the set of all agent vertices adjacent to at least one good in $S$. Within the context of this algorithm, a subset of goods $S \subseteq B$ is considered tight if the aggregate price of the goods in $S$ perfectly matches the aggregate budget of the interested agents within a specific subset $C \subseteq A$. Formally, this tight condition is given by: $$p \cdot |S| = |N(S) \cap C|.$$

\begin{algorithm}[!ht]
\caption{The algorithm of Vazirani and Yannakakis}
\label{alg:VYalgorithm}
\begin{algorithmic}[1]
\REQUIRE Bipartite graph $H=(A,B,E)$ with $|A|=|B|=n$, utilities $u_{ij}\in\{0,1\}$.
\ENSURE HZ equilibrium allocations $x_{ij}$ and prices $p_j$.
\IF{$H$ has a perfect matching $\nu$}
    \STATE $\forall i\in A$: allocate good $\nu(i)$ to $i$.
    \STATE $\forall j\in B$: $p_j \leftarrow 0$.
    \RETURN allocations $x_{ij}$ and prices $p_j$.
\ELSE
    \STATE Find a minimum vertex cover in $H$, say $B_1\cup A_2$ with $B_1\subset B$, $A_2\subset A$.
    \STATE Let $A_1 = A\setminus A_2$, $B_2 = B\setminus B_1$.
    \STATE Find a maximum matching $\nu$ in $H[A_2,B_2]$.
    \STATE $\forall i\in A_2$: allocate good $\nu(i)$ to $i$.
    \STATE $\forall j\in B_2$: $p_j \leftarrow 0$.
    \STATE $C \leftarrow A_1$, $D \leftarrow B_1$.
    \STATE Consider subgraph $H[C,D]$.
    \STATE $p \leftarrow 1$.
    \WHILE{$D \neq \emptyset$}
        \STATE Raise $p$ at unit rate.
        \STATE When a set $S\subseteq D$ becomes tight (i.e., $p\cdot|S| = |N(S)\cap C|$):
        \STATE Let $S^*$ be the maximal tight set.
        \STATE $\forall j\in S^*$: $p_j \leftarrow p$.
        \STATE $\forall i\in N(S^*)\cap C$: allocate $1/p$ units of goods from $S^*$ to $i$.
        \STATE $D \leftarrow D \setminus S^*$.
        \STATE $C \leftarrow C \setminus N(S^*)$.
    \ENDWHILE
    \STATE $\forall i\in A_1$: allocate unmatched goods of $B_2$ to $i$ to satisfy the size constraint.
\ENDIF
\RETURN allocations $x_{ij}$ and prices $p_j$.
\end{algorithmic}
\end{algorithm}

\subsection{Proofs of Additional Lemmas}
\label{apx:lemmas}
\begin{proof}[Proof of Lemma~\ref{lem:ZeroPrice}]
If \(H\) admits a perfect matching, then \(p_j = 0\) for all \(j\in B\), yielding $\min_{j\in[n]}p_j=0$. Otherwise, the algorithm assigns \(p_j = 0\) to every good in \(B_2\), then we have $\min_{j\in[n]}p_j=0$.
\end{proof}

\begin{proof}[Proof of Lemma~\ref{lem:NewCondition}]
We prove that for every agent $i$ and every good $j$ with $u_{i,j} = 0$, the expenditure is zero ($p_j x_{i,j} = 0$). We can verify this by evaluating the agent partitions:

Agents in $A_2$: These agents receive goods exclusively from a perfect matching in $H[A_2, B_2]$. By construction, these goods have a utility of $1$ and a price of $0$, making $p_j x_{i,j} = 0$ trivially.

Agents in $A_1$: These agents receive their allocations in two distinct phases:
\begin{itemize}
\item Phase 1 (Tight sets in $H[A_1, B_1]$): Each agent receives goods only from its neighborhood. Because an edge indicates a utility of $1$, no zero-utility goods are allocated here.
\item Phase 2 (Size constraints): Agents receive unmatched goods from $B_2$. Due to the vertex cover property, there are no edges between $A_1$ and $B_2$, meaning $u_{i,j} = 0$ for any $i \in A_1$ and $j \in B_2$. However, because the algorithm sets $p_j = 0$ for all $j \in B_2$, the product $p_j x_{i,j}$ remains $0$.
\end{itemize}

Because no agent $i$ ever receives a positive allocation of a strictly positive-priced good $j$ where $u_{i,j} = 0$, it follows that for all agents $i$:
$$\sum_{j: u_{i,j}=0} p_j x_{i,j} = 0.$$
\end{proof}

\end{document}